\documentclass[12pt,a4paper]{article}


\usepackage{amssymb}
\usepackage{amsfonts}
\usepackage{amsmath}
\usepackage{amsthm}
\usepackage[all]{xy}
\usepackage{latexsym}
\usepackage{enumerate}


\DeclareMathOperator{\rank}{rank}


\newcommand{\bbN}{\mathbb{N}}
\newcommand{\bbNp}{\mathbb{N}^*}

\newcommand{\ef}{\mathbb{F}}

\newcommand{\efq}{\ef_q}
\newcommand{\eftwo}{\ef_2}



\newcommand{\bs}{\boldsymbol}

\newcommand{\veps}{\varepsilon}
\newcommand{\tl}{\tilde}





\newcommand{\dham}{\mathrm{d}_{\mathrm{H}}}

\newcommand{\lm}{\textsc{lm}}



\newcommand{\supp}{\mathrm{supp}}

\newcommand{\synd}[1]{S^{(\bs{#1})}}
\newcommand{\tlsig}{\tl{\sigma}}
\newcommand{\tlom}{\tl{\omega}}
\newcommand{\tlbsy}{\tl{\bs{y}}}
\newcommand{\syndtly}{S^{(\tlbsy)}}

\newcommand{\plus}[1]{{#1}^{+}}
\newcommand{\Gplus}{\plus{G}}
\newcommand{\gplus}{\plus{g}}
\newcommand{\bgplus}{\plus{\bs{g}}}
\newcommand{\Mplus}{\plus{M}}
\newcommand{\bder}{\mathbf{der}}
\newcommand{\broot}{\mathbf{root}}

\newcommand{\rmax}{r_{\max}}

\newcommand{\kavcic}{Kav\v ci\' c} 
\newcommand{\pade}{Pad\'e}

\newcommand{\bh}{\bs{h}}

\newcommand{\lmmo}{\lm_{-1}}

\newcommand{\fplus}{\plus{f}}
\newcommand{\bfplus}{\plus{\bs{f}}}

\newcommand{\groebner}{Gr\"obner}
\newcommand{\koetter}{K\"otter}

\newtheorem{proposition}{Proposition}[section]
\newtheorem{definition}[proposition]{Definition}
\newtheorem{corollary}[proposition]{Corollary}

\newtheorem{theorem}[proposition]{Theorem}
\newtheorem{remark}[proposition]{Remark}

\newtheorem*{remarknn}{Remark}
\newtheorem*{lemmann}{Lemma}

\title{A \groebner-bases approach to syndrome-based fast Chase
decoding of Reed--Solomon codes }
\date{\today} 
\author{Yaron Shany and Amit Berman\thanks{The
authors are with Samsung  Semiconductor Israel R\&D Center, 146 Derech
Begin St., 6492103, Tel Aviv, Israel. Emails: \{yaron.shany,
amit.berman\}@samsung.com}}   

\begin{document}
\maketitle

\begin{abstract}
We present a simple syndrome-based fast Chase decoding algorithm for
Reed--Solomon (RS) codes. Such an algorithm was initially presented by
Wu (IEEE Trans.~IT, Jan.~2012), building on 
properties of the Berlekamp--Massey (BM) algorithm. Wu devised a fast
polynomial-update algorithm to construct the error-locator polynomial
(ELP) as the solution of a certain linear-feedback shift register
(LFSR) synthesis problem. This results in a conceptually complicated
algorithm, divided into $8$ subtly different cases. Moreover, Wu's
polynomial-update algorithm is not immediately suitable for working
with vectors of evaluations. Therefore, complicated modifications were
required in order to achieve a true ``one-pass'' Chase decoding
algorithm, that is, a Chase decoding algorithm requiring $O(n)$
operations per modified coordinate, where $n$ is the RS code length. 

The main result of the current paper is a conceptually simple
syndrome-based fast Chase decoding of RS codes. Instead of developing
a theory from scratch, we use the well-established theory of
\groebner\ bases for modules over $\efq[X]$ (where $\efq$ is the
finite field of $q$ elements, for $q$ a prime power). The basic
observation is that instead of Wu's LFSR synthesis problem, it is much
simpler to consider ``the right'' minimization problem over a
\emph{module}. The solution to this minimization problem is a simple 
polynomial-update algorithm that avoids syndrome updates and works
seamlessly with vectors of evaluations. As a result, we obtain a 
conceptually simple algorithm for one-pass Chase decoding of RS codes. 
Our algorithm is general enough to work with any algorithm
that finds a \groebner{} basis for the solution module of the key
equation as the initial algorithm (including the Euclidean algorithm),
and it is not tied only to the BM algorithm. 
\end{abstract}

\section{Introduction}

\subsection{Motivation and known results}
The subject of decoding Reed--Solomon (RS) codes beyond half the
minimum distance has been extensively studied over the years. The
breakthrough work of Guruswami and Sudan \cite{GS99} (following the original
work of Sudan \cite{S97}) presented interpolation-based {\it hard-decision (HD)}
list decoding of  RS codes up to the so-called Johnson radius in
polynomial time. Wu \cite{Wu08} presented an even more efficient
HD list decoding algorithm for decoding RS codes up to the Johnson
radius. \koetter{} and Vardy \cite{KV} extended the Guruswami--Sudan  
algorithm to take {\it channel reliability information} into account, thus presenting a
polynomial-time {\it soft-decision (SD)} decoding algorithm for RS codes. 

Before \cite{KV}, it seems reasonable to say that the main SD decoding
algorithms for block codes with an efficient HD decoder in general,
and for RS codes in particular, were the {\it generalized minimum distance
(GMD)} decoding of Forney \cite{F66}, and the Chase decoding
algorithms \cite{Chase}. GMD decoding consists of repeated applications of
errors-and-erasures decoding, while successively erasing an even
number of the least reliable coordinates. 

In Chase decoding, there is some pre-determined list of {\it test
error patterns} on the $\eta$ least reliable coordinates for some
small $\eta$ (typically, $\eta\leq \lfloor d/2\rfloor$, where $d$ is
the minimum Hamming distance of the code). For example, this list may consist
of all possible non-zero vectors, all vectors of a low enough weight,
a pre-defined number of random vectors, etc.. The decoder successively
runs on error patterns from the list. Each such error pattern is
subtracted from the received word, and the result is fed to an HD
decoder. If the HD decoder succeeds, then its output is saved into the
output list of the decoder. 

Informally, the list of test error patterns in Chase decoding 
should be a good covering code for all likely error patterns on the
least reliable coordinates (see \cite{NPN11}). At least heuristically,
this suggests that in order to achieve a substantial gain over HD
decoding, the number of test error patterns should grow exponentially
with $d$.  

Despite this exponential nature of Chase decoding, for high-rate
codes of moderate length, it is known to have a better
complexity/performance tradeoff than other algebraic SD decoding
algorithms, including the \koetter{}--Vardy algorithm (see,
e.g., \cite{Wu12}). For this reason, Chase decoding of RS codes is still of
great interest.
The idea behind fast Chase decoding algorithms is to share
computations between HD decodings of different test error
patterns.\footnote{This idea is at the heart of all
fast Chase decoding algorithms, including \cite{BK10}, \cite{Wu12},
and \cite{XCB20}.} For example, if a new error pattern differs from the 
previous one in one additional non-zero coordinate, it seems plausible
that there is no need to run a full HD decoding algorithm for the new
error pattern, and that intermediate results from the previous HD
decoding can be used in order to reduce the complexity of the new HD
decoding.  

It is well-known that HD decoding of RS codes has
complexity $O(dn)$ (where $n$ is the length 
of the RS code), and that this complexity is
governed by the exhaustive root search, rather than by the algorithm
for finding the {\it error-locator polynomial (ELP)} (such as the
Berlekamp--Massey (BM) algorithm), which has a complexity of
$O(d^2)$. 

In \cite{Wu12}, Wu defines a {\it one-pass} Chase decoding algorithm
as a Chase algorithm that has 
the following properties: (1) For any test error pattern $\bs{z}$
of (Hamming) weight $w$, there is some sequence
$\bs{z}_1,\ldots,\bs{z}_{w-1}$ in the list of test error
patterns, such that for all $i$, the weight of $\bs{z}_i$
equals $i$, and such that $\supp(\bs{z}_1)\subset \cdots \subset
\supp(\bs{z}_{w-1})\subset \supp(\bs{z})$,\footnote{For
$\bs{b}:=(b_1,\ldots,b_n)$, $\supp(\bs{b}):=\{i|b_i\neq 0\}$ is the
{\it support} of $\bs{b}$.} (2) The algorithm produces
decoding results for all the sequence
$\bs{z}_1,\ldots,\bs{z}_{w-1},\bs{z}$ in a complexity of $O(wn)$
finite field operations.  In 
particular, if $w=O(d)$, then the complexity for decoding the subset
$\{\bs{z}_1,\ldots,\bs{z}_{w-1}, \bs{z}\}$ is $O(dn)$, just like HD
decoding. Put differently, the complexity is $O(n)$ per each additional
modified coordinate. Note that in a na\"ive application of Chase
decoding, the complexity of decoding the above sequence is $O(dn)$ per
each additional modified coordinate.

Before \cite{Wu12}, there have been several one-pass and ``almost''
one-pass Chase decoding algorithms for BCH and RS codes, where by
``almost'' we mean that some of these algorithms satisfied the above
complexity requirement only 
for producing the ELP, but not for the essential following exhaustive
root search. These algorithms include the low-complexity
interpolation-based algorithm of Bellorado and \kavcic\ \cite{BK10}
for RS codes (based on the Guruswami--Sudan algorithm), and the
algorithm of Kamiya \cite{K01} for binary BCH codes, based on the
Welch--Berlekamp algorithm. Also, in the context of \cite{BK10}, Zhu
{\it et al.} \cite{ZZW09} introduced an efficient method for backward
interpolation, which enables to cancel the interpolation in one
point. This allows \cite{ZZW09} to order the $2^{\eta}$ test
vectors according to adjacent vertices of the binary hypercube (Gray
code), thus avoiding the need to save 
$2^{\eta-1}$ intermediate interpolation results on the decoding tree
of \cite{BK10}.  For a thorough literature review on fast
Chase decoding algorithms before \cite{Wu12}, we refer to
\cite{Wu12}. 
 
Focusing on RS codes and considering the algorithm of \cite{BK10},
we note that this algorithm works in the ``time domain,'' i.e., on the
received vector itself, rather than on the {\it syndrome}. As noted in
\cite[p.~946]{BK10}, in the context of fast Chase 
decoding, it is somewhat easier to work directly on the received
vector rather than on the syndrome, because the syndromes
of similar test error patterns are very far from similar.  

For decoding high-rate codes, it is typically beneficial to replace
the long received vector by the short syndrome once 
and for all before the decoding begins. In his important paper
\cite{Wu12}, Wu introduced a true one-pass Chase decoding algorithm based on
the BM algorithm. Thus, Wu introduced a solution both for the problem
of handling the exhaustive root searches while maintaining a
complexity of $O(n)$ per modified coordinate, and for the need for a
syndrome-based algorithm. 

After Wu's work, \cite{ZZW11} proposed a backward step for Wu's
algorithm for binary BCH codes. Additional
time-domain Chase decoding algorithms for binary BCH codes were
developed, e.g., in \cite{Z13}. Also, for RS codes, time-domain Chase
decoding algorithms based on basis reduction for univariate polynomial
modules were presented in \cite{XCB20} and the references
therein. Inspired by \cite{NZ03} and \cite{XCB19}, the fast Chase
algorithm of \cite{XCB20} decreases the average complexity and latency
over that of \cite{BK10}, while maintaining the worst-case
complexity. It should be noted that for the setup of \cite{BK10}, the
worst-case complexity of \cite{BK10} and \cite{XCB20} is
$O(2^{\eta}\cdot(n-k)^2)$, which is $O(2^{\eta}\cdot n^2)$ if the
asymptotic rate is $<1$, while that of a true one-pass Chase decoding
algorithm is $O(2^{\eta}\cdot n)$, as it requires $O(n)$ operations
per edge of the decoding tree, for a maximum of $2^{\eta}-1$
visited edges.  

\subsection{Our results}
We use the well-established tool of \groebner{} bases for modules over
$\efq[X]$ to derive an algorithm for syndrome-based fast Chase
decoding of RS codes. The main observation is that instead of Wu's
LFSR synthesis problem, it is much simpler to consider ``the right''
minimization problem over a module. This minimization problem can be 
solved by adopting \koetter{}'s \groebner{} basis algorithm, in the general
form appearing in \cite[Sect.~VII.C]{McE03}.

\begin{itemize}

\item We present a clean and simple polynomial-update algorithm for
fast Chase decoding, namely, Algorithm A of Section
\ref{sec:basic}. This algorithm is considerably simpler than 
Algorithm 1 of \cite{Wu12}, which is divided into $8$ intricately
different cases. Besides of the obvious benefit of having a clear and
short algorithm and the theoretical interest of finding further
connections between decoding algorithms and \groebner{} bases, there is
also a practical benefit in a simply-presented algorithm, being easier
to implement and debug. 

\item As opposed to Algorithm 1 of \cite{Wu12}, our polynomial-update 
algorithm (Algorithm A) is automatically 
suited for working with vectors of evaluations, and it is easily
converted into Algorithm B, which has the required $O(n)$ complexity
per modified coordinate. Again, Algorithm B is considerably cleaner
and simpler than Algorithm 2 of \cite{Wu12}, which, besides of being
long and including $8$ different cases, requires the introduction of
auxiliary polynomials without a clear meaning.

\item As opposed to the algorithms of \cite{Wu12}, Algorithms A and B
of the current paper are not tied to the BM algorithm as the initial
HD decoding algorithm, and can practically work with any of the
existing syndrome-based HD decoding algorithms. In some detail,
Algorithms A and B can be initiated with any algorithm that finds a
\groebner{} basis for the solution module of the key equation (for an
appropriate monomial ordering). As shown by Fitzpatrick \cite{Fitz95},
practically any of the existing syndrome-based HD decoding algorithms
can be put in this form, including the Euclidean algorithm.

\item On the practical side, we present Algorithm C, which is a
variant of Algorithm A that runs on low-degree polynomials and has a
lower complexity than Algorithm 1 of \cite{Wu12}.

\end{itemize}

\subsection{Organization}
Section \ref{sec:prelim} includes the notation used throughout the
paper, some basic definitions, and a review of required 
known results on algebraic decoding of (generalized) RS codes. Wu's
idea of fast Chase decoding on a tree is also recalled in this
section.  

The new minimization problem over an $\efq[X]$-module and its relation
to fast Chase decoding are presented in Section \ref{sec:module},
which is the heart of the paper. The minimization problem is
translated into an 
application of \koetter{}'s \groebner{} basis algorithm in Section
\ref{sec:algs}. The polynomial update algorithm is presented in
Subsection \ref{sec:basic}, and the true one-pass Chase decoding
algorithm, working with vectors of evaluations, is presented in
Subsection \ref{sec:eval}. Section \ref{sec:algs} is concluded by
Subsection \ref{sec:high}, which presents an overall high-level
description of the entire decoding process. Finally, Section
\ref{sec:conclusion} includes some conclusions and open questions. 

The paper includes two appendices, containing some interesting
supplemental results. In Appendix \ref{app:indirect}, which may be
considered as the counterpart of \cite[Lemma 5 (ii)]{Wu12}, we
consider a certain interesting case that is not required for the
algorithms of Section \ref{sec:algs}, and show that even in this case, 
the ELP can be extracted from the output of the polynomial-update
algorithm. Appendix \ref{app:simplifications} includes some practical
simplifications of Algorithm A: a method for avoiding the need to work
with two \emph{pairs} of polynomials, so that it is possible to work
with just two scalar polynomials, a heuristic stopping
condition for (almost) avoiding unnecessary exhaustive root searches,
and a method that uses a transformation that significantly reduces the
degrees of the updated polynomials, and results in the low-complexity
Algorithm C.

\section{Preliminaries}\label{sec:prelim}
\subsection{Generalized Reed--Solomon codes}\label{sec:rs}
Let $q$ be a prime power, and let $\efq$ be the
finite field of $q$ elements. We will consider a primitive generalized
Reed--Solomon (GRS) code,\footnote{Since the most general GRS code (e.g.,
\cite[Sec.~5.1]{Roth}) may be obtained by shortening a primitive GRS
code, there is no loss of generality in considering only primitive GRS
codes.} $C$, of length $n:=q-1$ and minimum Hamming distance $d\in\bbNp$,
$d\geq 2$. In detail, let
$\tl{\bs{a}}=(\tl{a}_0,\ldots,\tl{a}_{n-1})\in(\efq^*)^n$ be a vector
of non-zero elements (where $\efq^*:=\efq\smallsetminus\{0\}$). For a
vector $\bs{f}=(f_0,f_1,\ldots,f_{n-1})\in 
\efq^n$, let $f(X):=f_0+f_1X+\cdots + f_{n-1}X^{n-1}\in
\efq[X]$. Now $C\subseteq \efq^n$ is defined as the set of all vectors
$\bs{f}\in\efq^n$ for which $\tl{a}(X)\odot f(X)$ has roots
$1,\lambda,\ldots,\lambda^{d-2}$  for some fixed primitive  $\lambda\in
\efq$, where $(-\odot-)$ stands for coefficient-wise multiplication
of polynomials.\footnote{For $f(X)=\sum_{i=0}^r f_i X^i$ and
$g(X)=\sum_{i=0}^s g_i X^i$, let $m:=\min\{r,s\}$, and define
$f(X)\odot g(X):=\sum_{i=0}^m f_ig_i X^i$.} We note that when
$(\tl{a}_0,\ldots,\tl{a}_{n-1})=(1,\ldots,1)$, $C$ is a Reed--Solomon
code. 

To recall the key equation \cite[Sec.~6.3]{Roth}, suppose that a
codeword $\bs{x}\in C$ is transmitted, and the  
received word is $\bs{y}:=\bs{x}+\bs{e}$ for some error vector
$\bs{e}\in \efq^n$.  For $j\in\{0,\ldots,d-2\}$, let
$S_j=\synd{y}_j:=(\tl{a}\odot y)(\lambda^j)$. The {\bf syndrome polynomial}
associated with $\bs{y}$ is
$\synd{y}(X):=S_0+S_1X+\cdots+S_{d-2}X^{d-2}$. By the definition of
the GRS code, the same syndrome polynomial is associated 
with $\bs{e}$. 

If $\bs{v}\in \efq^n$ is such that $v(X)=X^i$ for some $i\in
\{0,\ldots,n-1\}$, then $\synd{v}_j=(\tl{a}\odot
v)(\lambda^j)=\tl{a}_i(\lambda^{i})^j$, so  
that
\begin{equation}\label{eq:oneloc}
\synd{v}(X)=\tl{a}_i\big(1+\lambda^iX + \cdots +
(\lambda^i)^{d-2}X^{d-2}\big)\equiv 
\frac{\tl{a}_i}{1-\lambda^iX} \mod (X^{d-1}).
\end{equation}
So, if the error locators are some distinct elements
$\alpha_1,\ldots,\alpha_{\veps}\in \efq^*$ (where 
$\veps\in\{1,\ldots,n\}$ is the number of errors) and the
corresponding error values are
$\beta_1,\ldots,\beta_{\veps}\in\efq^*$, then  
\begin{equation}\label{eq:synd}
\synd{y}(X)=\synd{e}(X) \equiv \sum_{i=1}^{\veps}
\frac{\beta_i a_i}{1-\alpha_iX}\mod (X^{d-1}),
\end{equation}
where $a_i:=\tl{a}_{i'}$ for the $i'\in\{0,\ldots,n-1\}$ with
$\alpha_i=\lambda^{i'}$. 

Defining the {\bf error-locator polynomial (ELP)}, $\sigma(X)\in\efq[X]$, by
$$
\sigma(X):=\prod_{i=1}^{\veps}(1-\alpha_iX), 
$$
and the {\bf error-evaluator polynomial (EEP)}, $\omega(X)\in\efq[X]$,
by
$$
\omega(X):=\sum_{i=1}^{\veps} \beta_i a_i\prod_{j\neq
i}(1-\alpha_jX),
$$
it follows from (\ref{eq:synd}) that 
\begin{equation}\label{eq:key}
\omega\equiv \synd{y}\sigma \mod (X^{d-1}).
\end{equation}
Equation (\ref{eq:key}) is the so-called {\bf key equation}. 

Another useful relation is {\bf Forney's formula} (see, e.g.,
\cite[Sec.~6.5]{Roth}), which states that for all
$i\in\{1,\ldots,\veps\}$, \begin{equation}\label{eq:forney}
\beta_i a_i \sigma'(\alpha_i^{-1})=-\alpha_i w(\alpha_i^{-1}),
\end{equation}
where for a polynomial $f(X)$, $f'(X)$ stands for its formal
derivative. 

Let
$$
M_0 = M_0(\synd{y}) : =\big\{(u,v)\in \efq[X]^2\big|u\equiv
\synd{y}v \mod (X^{d-1})\big\}
$$
be the solution module of the key equation.\footnote{The reason for
the subscript ``$0$'' in $M_0$ will become apparent later, when we define
modules $M_r$ for each $r$ in Definition \ref{def:mr}.} Next,
we would like to 
recall that if the number of errors in $\bs{y}$ is up to $t:=\lfloor
(d-1)/2\rfloor$, then $(\omega,\sigma)$ is a minimal element in $M_0$
for an appropriate monomial ordering on $\efq[X]^2$

For background on monomial orderings and \groebner{} bases for
modules, see, e.g., \cite[Sec.~5.2]{CLO2} for the general case, and
\cite{Fitz95} for the special case of submodules of $K[X]^2$ (for $K$
a field), which is mostly sufficient for the current paper. Recall
that for $\ell\in \bbN$, a {\it monomial} in $K[X]^{\ell+1}$ is a vector of the
form $\bs{m}:=X^i\cdot \bs{u}_j$ for some $i\in \bbN$, and some
$j\in\{0,\ldots,\ell\}$, where $\bs{u}_j=(0,\ldots,0,1,0,\ldots,0)$,
and where the $1$ sits in the $j$-th position (counting from
$0$).\footnote{The reason for labeling coordinates with $0,1,\ldots$
rather than with $1,2,\ldots$ is that in some list-decoding
applications, it is convenient to identify $K[X]^{\ell+1}$ with the
polynomials in $K[X,Y]$ with $Y$-degree at most $\ell$, by mapping
$(f_0(X),\ldots,f_{\ell}(X))$ to $\sum_{j=0}^{\ell}f_j(X)Y^j$.} In 
such a case, we will say that $\bs{m}$ {\it contains} the $j$-th unit
vector.  

The monomial
ordering of the following definition is the special case of the
ordering $<_r$ of \cite{Fitz95} corresponding to $r=-1$. If a pair
$(f(X),g(X))$ is regarded as the bivariate polynomial $f(X)+Yg(X)$,
then this ordering is also the $(1,-1)$-weighted-lex ordering with
$Y>X$.   

\begin{definition}
{\rm
Define the following monomial ordering, $<$, on $\efq[X]^2$:
$(X^i,0)<(X^j,0)$ iff $i<j$, $(0,X^i)<(0,X^j)$ iff $i<j$, while
$(X^i,0)<(0,X^j)$ iff $i\leq j-1$.
}
\end{definition}

Unless noted otherwise, 
$\lm(u,v)$ will stand for the leading monomial of $(u,v)$ with
respect to the above monomial ordering, $<$. Also, a ``\groebner{}
basis'' will stand for a \groebner{} basis with respect to $<$. Finally,
$\dham(\cdot,\cdot)$ will stand for the Hamming distance. 

The following proposition is a special case of
\cite[Thm.~3.2]{Fitz95}. We include its  
simple and standard proof for completeness. 

\begin{proposition}\label{prop:rzero}
Using the above notation, suppose that $\dham(\bs{y},\bs{x})\leq
t$. Let $(u,v)\in M_0(\synd{y})\smallsetminus\{(0,0)\}$ satisfy
$\lm(u,v)\leq 
\lm(\omega,\sigma)$. Then there exists some $c\in \efq^*$ such 
that $(u,v)=c\cdot (\omega,\sigma)$. Hence, $(\omega,\sigma)$ is the
unique minimal element $(u,v)$ in $M_0$ with $v(0)=1$.
\end{proposition}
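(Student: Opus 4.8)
The plan is to exploit the standard degree/leading-term bookkeeping for the key equation, together with the hypothesis $\dham(\bs{y},\bs{x})\le t$, which guarantees that the true error vector $\bs e$ has weight $\veps\le t$ and that $(\omega,\sigma)$ is a genuine solution of the key equation lying in $M_0(\synd{y})$. First I would record the basic facts about $(\omega,\sigma)$: since $\sigma=\prod_{i=1}^{\veps}(1-\alpha_iX)$ has degree $\veps$ with $\sigma(0)=1$, and since each summand of $\omega$ is a product of $\veps-1$ linear factors times a scalar, $\deg\omega\le \veps-1<\veps$. Hence with respect to the ordering $<$ (in which the $X$-slot is ``lighter'' by one than the $Y$-slot), the leading monomial of $(\omega,\sigma)$ is $\lm(\omega,\sigma)=(0,X^{\veps})$, coming from the second component.

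Next I would take an arbitrary $(u,v)\in M_0(\synd y)\setminus\{(0,0)\}$ with $\lm(u,v)\le (0,X^{\veps})$ and argue that $(\omega,\sigma)$ and $(u,v)$ must be $\efq$-proportional. The idea is the usual ``cross-multiply and compare degrees'' trick: from $\omega\equiv \synd y\,\sigma$ and $u\equiv \synd y\, v \pmod{X^{d-1}}$ one gets
\[
\omega v \equiv \synd y\,\sigma v \equiv \sigma u \pmod{X^{d-1}},
\]
so $X^{d-1}\mid (\omega v-\sigma u)$. Now I bound the degree of $\omega v-\sigma u$. The constraint $\lm(u,v)\le (0,X^{\veps})$ forces $\deg v\le \veps$ and $\deg u\le \veps-1$ (this is exactly what the definition of $<$ encodes: $(0,X^{\veps})$ dominates $(0,X^j)$ only for $j\le\veps$ and dominates $(X^i,0)$ only for $i\le\veps-1$). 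Combining with $\deg\sigma=\veps$, $\deg\omega\le\veps-1$, both $\omega v$ and $\sigma u$ have degree $\le 2\veps-1\le 2t-1\le d-2<d-1$. A polynomial of degree $<d-1$ that is divisible by $X^{d-1}$ is zero, so $\omega v=\sigma u$. Since $\sigma$ and $\omega$ are coprime (their only possible common roots would be among the $\alpha_i^{-1}$, but $\omega(\alpha_i^{-1})=-\alpha_i^{-1}\beta_i a_i\sigma'(\alpha_i^{-1})\ne 0$ by Forney's formula \eqref{eq:forney} and the separability of $\sigma$), $\sigma\mid v$; writing $v=c\sigma$ and plugging back gives $\omega\cdot c\sigma=\sigma u$, hence $u=c\omega$, and $c\in\efq^*$ since $(u,v)\ne(0,0)$. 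This proves $(u,v)=c\cdot(\omega,\sigma)$.

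Finally I would deduce the ``unique minimal element'' statement. Any nonzero $(u,v)\in M_0$ satisfies $\lm(u,v)\le\lm(\omega,\sigma)$ or $\lm(u,v)>\lm(\omega,\sigma)$; the first case has just been shown to force proportionality to $(\omega,\sigma)$, so no element of $M_0$ can have leading monomial strictly smaller than $\lm(\omega,\sigma)$, i.e.\ $(\omega,\sigma)$ is minimal. Among all scalar multiples $c\cdot(\omega,\sigma)$, the normalization $v(0)=1$ (recall $\sigma(0)=1$) pins down $c=1$, giving uniqueness. I expect the only mildly delicate point to be justifying $\gcd(\sigma,\omega)=1$ cleanly — this is where Forney's formula \eqref{eq:forney} is used — and making sure the degree inequality $2\veps-1\le d-2$ is invoked correctly from $\veps\le t=\lfloor(d-1)/2\rfloor$; everything else is routine leading-monomial bookkeeping with the ordering $<$.
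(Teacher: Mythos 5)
Your proof is correct and follows essentially the same approach as the paper's: cross-multiply to get $\omega v\equiv\sigma u\pmod{X^{d-1}}$, use the leading-monomial hypothesis to bound degrees and upgrade the congruence to an equality, then use $\gcd(\omega,\sigma)=1$ to divide out. The paper packages the cross-multiplication step as a small auxiliary claim and uses the coarser degree bound $t$ in place of $\veps$, but the content (and the deduction of uniqueness via normalization $v(0)=1$) is the same; your explicit justification of $\gcd(\omega,\sigma)=1$, which the paper merely asserts, is a welcome addition.
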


\begin{proof}
First, we claim that if there exist $(\tl{u},\tl{v}),(u,v)\in
M_0(\synd{y})$ and $d_1,d_2\in \bbN$ with $d_1+d_2<d-1$,
$\gcd(\tl{u},\tl{v})=1$,  $\deg(u),\deg(\tl{u})\leq d_1$, and
$\deg(v),\deg(\tl{v})\leq d_2$, then there exists a polynomial $f\in
\efq[X]$ such that $(u,v)=f\cdot(\tl{u},\tl{v})$. To see this, note
that from $u\equiv \synd{\bs{y}}v\mod (X^{d-1})$ and
$\tl{u}\equiv \synd{\bs{y}}\tl{v}\mod (X^{d-1})$, we get
$u\tl{v}\equiv \tl{u}v\mod (X^{d-1})$. In view of the above
degree constraints, the last congruence implies $u\tl{v}=
\tl{u}v$. Since $\gcd(\tl{u},\tl{v})=1$, we must have $\tl{u}|u$, 
$\tl{v}|v$, and $u/\tl{u}=v/\tl{v}$. This establishes the claim.

Now let $(u,v)\in M_0(\synd{y})$, and note that
$\gcd(\omega,\sigma)=1$.  If 
$\deg(v)>t\geq \deg(\sigma)$, then clearly
$\lm(u,v)>\lm(\omega,\sigma)=(0,X^{\deg(\sigma)})$. Similarly, if
$\deg(u)>t-1\geq \deg(\sigma)-1$, then
$\lm(u,v)>\lm(\omega,\sigma)$. Hence, we may assume w.l.o.g.~that
$\deg(v)\leq t$ and $\deg(u)\leq t-1$. The above claim then shows that
$(u,v)=f\cdot(\omega,\sigma)$ for some $f\in \efq[X]$. If
$\lm(u,v)\leq \lm(\omega,\sigma)$, this implies that $f$ is a
constant, as required. This also shows that
$\lm(u,v)=\lm(\omega,\sigma)$.  
\end{proof}

It will also be useful to recall that the uniqueness in the previous
proposition is an instance of a more general result.
\begin{proposition}\label{prop:unique}
For a field $K$ and for $\ell \in \bbNp$, let $\prec$ be any monomial
ordering on $K[X]^{\ell}$, and let $M\subseteq K[X]^{\ell}$ be any
$K[X]$-submodule. Suppose that both
$\bs{f}:=(f_1(X),\ldots,f_{\ell}(X))\in M\smallsetminus\{0\}$ and
$\bs{g}:=(g_1(X),\ldots,g_{\ell}(X))\in M\smallsetminus\{0\}$ have the
minimal leading monomial in $M\smallsetminus\{0\}$. Then there exists
a $c\in K^*$ such that $\bs{f}=c\cdot \bs{g}$.  
\end{proposition}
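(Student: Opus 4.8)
The plan is to run the standard cancellation argument for elements of minimal leading monomial with respect to a monomial ordering; the point is that $M$ is closed under $K[X]$-linear combinations, in particular under the scalar combination used below, and that a monomial ordering is a total order, so ``the minimal leading monomial in $M\smallsetminus\{0\}$'' is an unambiguous object that we are free to compare against. Let $\bs{m}$ denote the common leading monomial of $\bs{f}$ and $\bs{g}$, and write $\lc(\bs{f}),\lc(\bs{g})\in K^*$ for the respective leading coefficients. Set $c:=\lc(\bs{f})/\lc(\bs{g})\in K^*$ and consider $\bs{h}:=\bs{f}-c\,\bs{g}$, which lies in $M$ since $M$ is a $K[X]$-submodule.

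The key observation is that $\lt(\bs{f})$ and $\lt(c\,\bs{g})$ coincide, both being equal to $\lc(\bs{f})\cdot\bs{m}$, so the coefficient of $\bs{m}$ in $\bs{h}$ is $\lc(\bs{f})-c\,\lc(\bs{g})=0$. Every monomial appearing in $\bs{f}$ or in $\bs{g}$ is $\preceq\bs{m}$ (as $\bs{m}$ is the leading monomial of each), and the unique monomial equal to $\bs{m}$ among these has just been cancelled; hence every monomial appearing in $\bs{h}$ is strictly $\prec\bs{m}$. Consequently, either $\bs{h}=0$, or $\bs{h}\in M\smallsetminus\{0\}$ with $\lm(\bs{h})\prec\bs{m}$.

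The second alternative contradicts the hypothesis that $\bs{m}$ is the minimal leading monomial over $M\smallsetminus\{0\}$, so $\bs{h}=0$, i.e.\ $\bs{f}=c\,\bs{g}$, which is the assertion. There is no real obstacle here; the only subtlety worth flagging is that we invoke nothing beyond totality of $\prec$ and the given existence of a minimum (no well-ordering or Dickson-type input is needed), and that closure of $M$ under the combination $\bs{f}-c\,\bs{g}$ is exactly what makes the cancellation legitimate. Note that the same argument recovers the uniqueness clause of Proposition~\ref{prop:rzero} by taking $\ell=2$, $\prec$ the ordering $<$, $M=M_0(\synd{y})$, and $\bs{g}=(\omega,\sigma)$.
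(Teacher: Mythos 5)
Your proof is correct and is essentially the same cancellation argument the paper gives: choose $c$ so the leading term cancels in $\bs{h}:=\bs{f}-c\bs{g}\in M$, and conclude $\bs{h}=0$ since otherwise $\lm(\bs{h})\prec\lm(\bs{f})$ would contradict minimality. The only cosmetic difference is that the paper phrases it as a proof by contradiction while you argue directly that $\bs{h}$ must vanish.
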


\begin{proof}
Suppose not. Since $\lm(\bs{f})=\lm(\bs{g})$, there exists a constant
$c\in K^*$ such that the leading monomial cancels in
$\bs{h}:=\bs{f}-c\bs{g}$. By assumption, $\bs{h}\neq \bs{0}$, and
$\lm(\bs{h})\prec\lm(\bs{f})$ -- a contradiction. 
\end{proof}

\subsection{\koetter{}'s \groebner{}-basis iteration}\label{sec:koetter}
Let us now recall the general form of \koetter{}'s 
iteration \cite{Koetter}, \cite{NH98}, as presented
by McEliece \cite[Sect.~VII.C]{McE03}.\footnote{
We have learned from Johan Rosenkilde that \cite{BL94}, \cite{BL97},
which predated \cite{Koetter}, already presented algorithms similar
to, and more general than \koetter{}'s iteration (see also
\cite[Sec.~2.6]{JNSV17}). For problems related to modules of vectors
of univariate polynomials, algorithms for 
computing the shifted (weak or canonical) {\it Popov form} of
$K[X]$-matrices have the lowest asymptotic complexity in some cases --
see, e.g., \cite{RS21} and the references therein for the case of
simultaneous Hermite--\pade{} approximation, and
\cite[Sec.~1.3.4]{N16} for the connection to \groebner{} bases. However,
for the fast Chase decoding algorithms considered in this paper, we
currently do not know if such methods will turn out to be more
efficient than \koetter{}'s iteration.}

Let $K$ be a field. For $\ell\in \bbNp$ and for a $K[X]$-submodule $M$
of $K[X]^{\ell+1}$ with $\rank(M)=\ell+1$, suppose that we have a
\groebner{} basis $G=\{\bs{g}_0,\ldots,\bs{g}_{\ell}\}$ for $M$ with
respect to some monomial ordering $\prec$ on $K[X]^{\ell+1}$. In such
a case, the leading monomials of the $\bs{g}_j$ must contain distinct
unit vectors,\footnote{For otherwise, the leading monomial of two
basis vectors would contain the same unit vector, so that the leading
monomial of one vector divides the leading monomial of the other
vector. In such a case, we may discard one of the basis vectors and
remain with a \groebner{} basis, which is, in particular, a set of
generators. So, we end up with a set of less than $\ell+1$ generators
for a free module of rank $\ell+1$ -- a contradiction (see, e.g., Ex.~11
on p.~32 of \cite{AtMa}).} and we may therefore assume w.l.o.g.~that
the leading monomial of $\bs{g}_j$ contains the $j$-th unit
vector, for all $j\in\{0,\ldots, \ell\}$ (where coordinates of vectors
are indexed by $0,\ldots, \ell$).  

Now let $D\colon K[X]^{\ell+1}\to K$ be a non-zero linear functional
that satisfies the following property:
\begin{description}

\item[{\bf MOD}] $\Mplus:= M\cap \ker(D)$ is a $K[X]$-module.

\end{description}

The purpose of \koetter{}'s iteration is to convert the $(\ell+1)$-element
\groebner{} basis\footnote{Where in this subsection, ``\groebner{} basis'' means a
\groebner{} basis with respect to $\prec$.} 
$G$ of $M$ to an $(\ell+1)$-element \groebner{} basis
$\Gplus=\{\bgplus_0,\ldots,\bgplus_{\ell}\}$ of $\Mplus$, while 
maintaining the property that $\lm(\bgplus_j)$ contains the 
$j$-th unit vector for all $j\in\{0,\ldots,\ell\}$. 

The following is a pseudo-code describing \koetter{}'s iteration.

\hfill\\ 

\begin{tabular}{|c|}
\hline
{\bf \koetter{}'s iteration without inversions}\\
\hline
\end{tabular}

\begin{description}

\item[\bf{Input}] A \groebner{} basis
$G=\{\bs{g}_0,\ldots,\bs{g}_{\ell}\}$ for the submodule $M\subseteq
\efq[X]^{\ell+1}$, with $\lm(\bs{g}_j)$ containing the $j$-th unit vector for
all $j$ 

\item[\bf{Output}] A \groebner{} basis
$\Gplus=\{\bgplus_0,\ldots,\bgplus_{\ell}\}$ for $\Mplus$ with
$\lm(\bgplus_j)$ containing the $j$-th unit vector for all $j$
(assuming {\bf{MOD}} holds)

\item[\bf{Algorithm}]

\begin{itemize} 

\item For $j=0, \ldots, \ell$, calculate $\Delta_j:=D(\bs{g}_j)$

\item Set $J:=\big\{j\in\{0,\ldots,\ell\}|\Delta_j\neq 0\big\}$

\item For $j \in\{0, \ldots, \ell\}\smallsetminus J$,

\begin{itemize}

\item  Set $\bgplus_j:=\bs{g}_j$

\end{itemize}

\item Let $j^*\in J$ be such that  $\lm(\bs{g}_{j^*})=\min_{j\in
J}\{\lm(\bs{g}_j)\}$ /* the leading monomials are distinct, and so 
$j^*$ is unique */ 

\item For $j\in J$ 

\begin{itemize}

\item If $j\neq j^*$

\begin{itemize}

\item Set $\bgplus_j := \Delta_{j^*}\bs{g}_j-\Delta_j \bs{g}_{j^*}$ 

\end{itemize}

\item Else /* $j=j^*$ */

\begin{itemize}

\item Set $\bgplus_{j^*} := \Delta_{j^*}\cdot X \bs{g}_{j^*}  -
D(X\bs{g}_{j^*})\cdot \bs{g}_{j^*}$\\ /* $=\big(\Delta_{j^*}\cdot X -
D(X\bs{g}_{j^*})\big)\bs{g}_{j^*}$ */

\end{itemize}

\end{itemize}

\end{itemize}

\end{description}

Note that for clarity of presentation, we have introduced a whole
new set of variables $\{\bs{g}_j^+\}$, although this is not really
necessary.  

\begin{proposition}\label{prop:koetter}
At the end of \koetter{}'s iteration, it holds that
$\Gplus=\{\bgplus_0,\ldots,\bgplus_{\ell}\}$ is a \groebner{} basis for 
$\Mplus$ and for all $j$, $\lm(\bgplus_j)$ contains the $j$-th unit 
vector. 
\end{proposition}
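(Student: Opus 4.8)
The plan is to verify three things at the end of the iteration: (i) every $\bgplus_j$ lies in $\Mplus = M \cap \ker(D)$; (ii) the leading monomial of $\bgplus_j$ contains the $j$-th unit vector (so in particular the $\lm(\bgplus_j)$ contain distinct unit vectors); and (iii) the set $\Gplus$ generates $\Mplus$, which together with (ii) is enough to conclude it is a \groebner{} basis (because a generating set whose leading monomials involve all $\ell+1$ distinct unit vectors must be a \groebner{} basis of a rank-$(\ell+1)$ submodule: each module element reduces against them, and there are no ``missing'' leading-term directions).

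First I would dispose of (i). For $j \notin J$ we have $\Delta_j = D(\bs{g}_j) = 0$, so $\bgplus_j = \bs{g}_j \in M \cap \ker(D) = \Mplus$. For $j \in J$, $j \neq j^*$, linearity of $D$ gives $D(\bgplus_j) = \Delta_{j^*}\Delta_j - \Delta_j\Delta_{j^*} = 0$, and $\bgplus_j \in M$ since it is a $K[X]$-combination (in fact a $K$-combination) of elements of $M$; hence $\bgplus_j \in \Mplus$. For $j = j^*$, using property \textbf{MOD} (this is where it enters: $\Mplus$ is closed under multiplication by $X$, so $X\bs{g}_{j^*}$ need not lie in $\Mplus$, but the combination does land in $\ker D$): $D(\bgplus_{j^*}) = \Delta_{j^*}D(X\bs{g}_{j^*}) - D(X\bs{g}_{j^*})\Delta_{j^*} = 0$, and $\bgplus_{j^*} = (\Delta_{j^*}X - D(X\bs{g}_{j^*}))\bs{g}_{j^*} \in M$. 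So every $\bgplus_j \in \Mplus$.

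Next, (ii), the leading-monomial bookkeeping. Write $m_j := \lm(\bs{g}_j)$, which contains the $j$-th unit vector by hypothesis. For $j \notin J$ nothing changes: $\lm(\bgplus_j) = m_j$. For $j \in J \setminus \{j^*\}$, the term $\Delta_{j^*}\bs{g}_j$ has leading monomial $m_j$ (its leading coefficient is $\Delta_{j^*}\cdot\lc(\bs{g}_j) \neq 0$) and the term $\Delta_j\bs{g}_{j^*}$ has leading monomial $m_{j^*}$. Since $j \neq j^*$, $m_j$ and $m_{j^*}$ contain different unit vectors and so are distinct monomials; hence no cancellation of the $m_j$ term is possible, and $\lm(\bgplus_j) = m_j$, still containing the $j$-th unit vector. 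For $j = j^*$: $\bgplus_{j^*} = (\Delta_{j^*}X - D(X\bs{g}_{j^*}))\bs{g}_{j^*}$; the factor $\Delta_{j^*}X - D(X\bs{g}_{j^*})$ is a nonzero polynomial with leading term $\Delta_{j^*}X$ (as $\Delta_{j^*} \neq 0$), so $\lm(\bgplus_{j^*}) = X\cdot m_{j^*}$, which again contains the $j^*$-th unit vector. (One should note $\bgplus_{j^*} \neq 0$: it is $\Delta_{j^*}X\bs{g}_{j^*}$ plus lower-order stuff and $\bs{g}_{j^*}\neq 0$.)

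Finally, (iii): $\Gplus$ generates $\Mplus$. The module $M$ is spanned over $K[X]$ by the $\bs{g}_j$. I claim each $\bs{g}_j$ lies in the submodule $N$ generated by $\Gplus$ together with the single extra generator $\bs{g}_{j^*}$; then $M = \langle \Gplus, \bs{g}_{j^*}\rangle$. Indeed for $j \notin J$, $\bs{g}_j = \bgplus_j \in N$; for $j \in J\setminus\{j^*\}$, $\Delta_{j^*}\bs{g}_j = \bgplus_j + \Delta_j\bs{g}_{j^*} \in N$ and $\Delta_{j^*}\in K^*$ is invertible, so $\bs{g}_j \in N$; and $\bs{g}_{j^*} \in N$ trivially. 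Now take any $\bs{w} \in \Mplus \subseteq M$ and write $\bs{w} = \sum_j p_j(X)\bs{g}_j$ with $p_j \in K[X]$; substituting the above expressions, $\bs{w} = (\text{element of }\langle\Gplus\rangle) + r(X)\bs{g}_{j^*}$ for some $r \in K[X]$. Applying $D$ and using $D(\bs{w}) = 0$ and $D$ of everything in $\langle\Gplus\rangle$ is $0$ (by part (i), $\Gplus\subseteq\ker D$, and $D$ vanishes on... wait, $\langle\Gplus\rangle$ is a $K[X]$-module not inside $\ker D$ in general) --- here I would instead argue via leading monomials: since the $\lm(\bgplus_j)$ for $j\neq j^*$ together with $\lm(\bgplus_{j^*}) = X m_{j^*}$ cover all unit vectors, and $\Mplus$ has rank $\ell+1$, a dimension/leading-term count forces $\bs{g}_{j^*} \notin \Mplus$ unless $r$ absorbs into the \groebner{} basis; more cleanly, I reduce $\bs{w}$ against $\Gplus$: its remainder is a module element of $\Mplus$ none of whose monomials is divisible by any $\lm(\bgplus_j)$, but since these leading monomials involve all $\ell+1$ unit vectors with minimal $X$-degrees among elements of $\Mplus$ pointing in those directions, the remainder must be $0$. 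Hence $\bs{w}\in\langle\Gplus\rangle$, proving $\Gplus$ generates $\Mplus$; combined with (ii) it is a \groebner{} basis.

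The main obstacle is precisely this last point --- showing $\Gplus$ is a \emph{generating} set for $\Mplus$, and not merely a set in $\Mplus$ with the right leading monomials. The cleanest route is the standard one for \koetter{}'s iteration: show $M = \Mplus + K\cdot\bs{g}_{j^*}$ as $K$-vector spaces modulo lower terms, equivalently that $\lm(\Mplus\setminus\{0\})$ in the $j^*$-direction is exactly $X\cdot m_{j^*}$ while in every other direction $j$ it equals $m_j$; a monomial $\bs{w}\in M$ with leading term in direction $j^*$ equal to $m_{j^*}$ must have $D(\bs{w})\neq 0$ (it is a unit times $\bs{g}_{j^*}$ plus strictly smaller terms, and one checks $D$ is nonzero on it), so it is not in $\Mplus$, whereas $\bgplus_{j^*}$ realizes $X m_{j^*}$. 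Filling this in rigorously is the real content of Proposition \ref{prop:koetter}; everything else is the routine linearity and leading-term bookkeeping sketched above.
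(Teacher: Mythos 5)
Your parts (i) and (ii) --- membership of each $\bgplus_j$ in $\Mplus$, and the leading-monomial bookkeeping --- are correct and match the routine portion of the argument (the paper itself does not spell this out; it simply cites McEliece, Sec.~VII.C). Where your proposal has a genuine gap is exactly where you flag it: establishing that $\Gplus$ \emph{generates} $\Mplus$, equivalently that $\lm(\bgplus_j)$ is the minimal leading monomial of $\Mplus\smallsetminus\{0\}$ in the $j$-th direction. You reduce this to the claim ``a $\bs{w}\in M$ with $\lm(\bs{w})=\lm(\bs{g}_{j^*})$ must have $D(\bs{w})\neq 0$,'' and then write ``one checks $D$ is nonzero on it.'' That check is the whole content of the proposition and does not follow from linearity alone, because writing $\bs{w}=c\bs{g}_{j^*}+\bs{w}_1$ with $\lm(\bs{w}_1)<\lm(\bs{g}_{j^*})$ only gives $D(\bs{w})=c\Delta_{j^*}+D(\bs{w}_1)$, and a priori nothing controls $D(\bs{w}_1)$.

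The missing ingredients are precisely the two hypotheses you never invoke in part (iii): the \emph{minimality of $j^*$ within $J$} and property \textbf{MOD}. Concretely, one must show that $\lm(\bs{g}_{j^*})$ is minimal among $\{\lm(\bs{h}): \bs{h}\in M,\ D(\bs{h})\neq 0\}$. Reducing $\bs{w}_1$ against $G$ (a \groebner{} basis of $M$), every reduction step uses some $\bs{g}_i$ with $\lm(X^a\bs{g}_i)\le\lm(\bs{w}_1)<\lm(\bs{g}_{j^*})$; this rules out $i=j^*$ and, by minimality of $j^*$, rules out every $i\in J$. So $\bs{w}_1$ is a $K[X]$-combination of $\bs{g}_i$ with $i\notin J$, i.e.\ $\bs{g}_i\in\Mplus$, and \textbf{MOD} (that $\Mplus$ is a $K[X]$-module) then forces $\bs{w}_1\in\Mplus$ and $D(\bs{w}_1)=0$, giving $D(\bs{w})=c\Delta_{j^*}\neq 0$ as needed. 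Without this step your reduction-against-$\Gplus$ argument is circular: it assumes the $\lm(\bgplus_j)$ are minimal in $\Mplus$, which is what is being proved. (Note also that already knowing the reduction stays inside $\Mplus$ uses \textbf{MOD}, since it multiplies elements of $\Mplus$ by powers of $X$.) Your proposal therefore has the right skeleton and honestly identifies the hole, but the hole is the entire nontrivial part of the statement and is left unfilled.
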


For a proof, see \cite[Sec.~VII.C]{McE03}.

\subsection{Fast Chase decoding on a tree}\label{sec:tree}
In the Chase-II decoding algorithm \cite[p.~173]{Chase} for decoding
a \emph{binary} code of minimum distance $d$, all possible $2^{\lfloor
d/2 \rfloor}$ error patterns on the $\lfloor d/2 \rfloor$ least
reliable coordinates are tested (i.e., subtracted from the received
word). For each tested error pattern, 
bounded distance decoding\footnote{Here, by {\it bounded distance
decoding} for a code of minimum distance $d$, we mean a decoding
algorithm that returns the unique codeword of distance up to $(d-1)/2$
from the received word (if exists), or declares failure otherwise.} is
performed, resulting in a list of up to  $2^{\lfloor d/2 \rfloor}$
candidate codewords. Finally, if the list is not empty, then the most
likely codeword is chosen from the list. 

For GRS codes over $\efq$, the type of Chase algorithm considered in
the current paper is the following variant of the Chase-II
algorithm. First, we assume a memoryless channel, e.g., as in
\cite[Sec.~III]{KV}. As in \cite{Chase}, we assume
that the decoder has probabilistic reliability information on the
received symbols. The $\eta$ least reliable coordinates are identified
for some pre-defined and (loosely speaking) small $\eta\in \bbNp$. Let
$\alpha_1,\ldots,\alpha_{\eta}$ be these least reliable coordinates
(where as usual, coordinates are labeled by elements of $\efq^*$), and
put $I:=\{\alpha_1,\ldots,\alpha_{\eta}\}$.  

Fix some $\mu\in\{1,\ldots,q\}$ and for each $i\in\{1,\ldots,\eta\}$,
let $A'_i\subset \efq$ be a subset of $\mu$ most probable choices for
the $\alpha_i$-th code symbol given the $\alpha_i$-th received
coordinate.\footnote{In the language of \cite{KV}, we look for $\mu$
largest values in the $\alpha_i$-th column of the {\it reliability
matrix}. For example, in \cite{BK10}, $\mu=2$.} Let
$a^{\star}$ be a symbol in $A'_i$ with the highest probability given
the $\alpha_i$-th received coordinate, and set
$A_i:=\{a-a^{\star}|a\in A'_i\}$. Hence $a^{\star}$ is the {\it
hard-decision (HD)} input to the decoder at coordinate $\alpha_i$ (an
entry of the vector $\bs{y}$ of Subsection \ref{sec:rs}),
while $A_i$ is a corresponding set of $\mu$ most probable errors
given the received symbol. 

Finally, fix some $\rmax\in \{1,\ldots,\eta\}$. The 
Chase decoding considered in the current paper runs over all test
error patterns on $I$ that are taken from  $A_1\times \cdots \times
A_{\eta}$ and have a Hamming weight of up to $\rmax$. For each such
error pattern, the algorithm performs (the equivalent of) bounded
distance decoding. Note that when $\rmax=\eta$, the test error patterns
are all the vectors in $A_1\times \cdots \times A_{\eta}$.

Let $B$ be the set of vectors of Hamming weight at most $\rmax$ in
$A_1\times \cdots \times A_{\eta}$. As in \cite{Wu12}, a directed tree
$T=T(\eta,I,\rmax,A_1,\ldots,A_{\eta})$ of depth $\rmax$ is constructed in the
following way. The root is the all-zero vector, and for all $r\in
\{1,\ldots,\rmax\}$, the  vertices at depth $r$ are the vectors in $B$ of
weight $r$.  

To define the edges of $T$, for each $r\geq
1$ and for each vertex $\bs{\beta}=(\beta_1,\ldots,\beta_{\eta})$ at
depth $r$ with non-zero entries at coordinates $i_1,\ldots,i_r$, we
pick a single vertex $\bs{\beta}'=(\beta'_1,\ldots,\beta'_{\eta})$ at
depth $r-1$ that is equal to $\bs{\beta}$ on all coordinates, except
for one $i_{\ell}$ ($\ell\in\{1,\ldots,r\}$), for which
$\beta'_{i_{\ell}}=0$. Note that given $\bs{\beta}$, there are $r$
distinct ways to choose $\bs{\beta}'$, and we simply fix one such
choice of $\bs{\beta}'$ for each $\bs{\beta}$. Now the edges of $T$
are exactly all such pairs $(\bs{\beta}',\bs{\beta})$ (see Figure
\ref{fig:tree} for an example).

Note that the edge $(\bs{\beta}',\bs{\beta})$ defined above corresponds
to adding exactly one additional modified coordinate, namely, coordinate
$\alpha_{i_{\ell}}$, in which the assumed error value is
$\beta_{i_{\ell}}$. Hence, the edge $(\bs{\beta}',\bs{\beta})$ can be
identified with the pair
$(\bs{\beta'},(\alpha_{i_{\ell}},\beta_{i_{\ell}}))$. Similarly, a 
path from the root to a vertex at depth $r\geq 1$ (and hence the 
vertex itself) can be identified with a sequence 
$$
\big((\alpha_{i_{1}},\beta_{i_{1}}),\ldots,
(\alpha_{i_{r}},\beta_{i_{r}})\big)\in((\efq^*)^2)^r
$$
for which the $\alpha_{i_{\ell}}$'s are distinct.

\begin{figure}[h!]
$$
{\scriptsize
\xymatrix@C=1.7em{
  &   &    &   &   & 00000\ar[dllll] \ar[dl] \ar[drr] \ar[drrr]
\ar[drrrr]\\ 
  & 10000 \ar[dl] \ar[d] \ar[dr] &    &   & 01000 \ar[dl] \ar[d]
\ar[dr]  \ar[drr] &  &  & 00100  \ar[d]     & 00010 \ar[d] & 00001
\ar[d]\\  
10100 \ar[d]& 10010\ar[d] & 10001\ar[d] & 01001\ar[d] & 01010\ar[d] &
11000\ar[d] & 01100\ar[d] & 00110\ar[d] & 00011\ar[d] 
& 00101 \ar[d]\\
10101 & 11010 & 11001 & 01011 \ar[d]& 01110\ar[d] & 11100\ar[d] &
01101\ar[d] & 10110 & 10011 \ar[d] & 00111 \\
 & & & 11011 & 11110 \ar[d] & 11101 & 01111 & & 10111\\
 & & & & 11111
}
}
$$
\caption{Example of the tree $T$ for $\rmax=\eta=5$,
$A_1=\cdots=A_5=\{0,1\}$. The vertices at each depth
$r\in\{0,\ldots,\rmax=5\}$ are the vectors of weight $r$, and the
edges are chosen such that for each vertex at depth $r\geq 1$, we
connect exactly one vertex from depth $r-1$ that is obtained by
transforming one non-zero value to zero. There are $r$ different ways
to do this, but we simply fix one of them. For example, in the figure,
$11011$ at depth $4$ is connected to $01011$ at depth
$3$. Alternatively, it could be connected to either one of $10011$,
$11001$, $11010$.}
\label{fig:tree}
\end{figure}
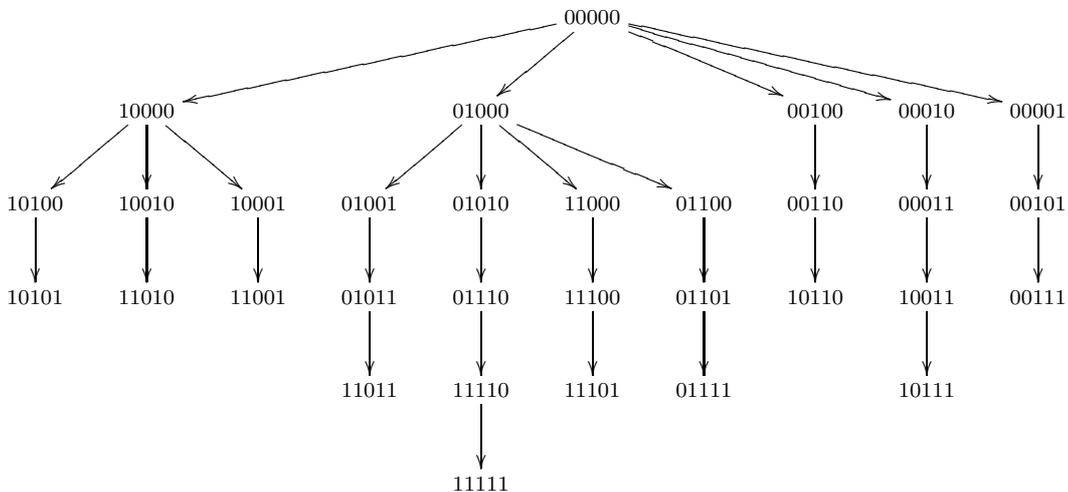

The main ingredient of Wu's fast Chase algorithm, as well as
of the algorithm of the current paper, is an efficient algorithm for
updating the ELP (and additional polynomials) for adding a single
modified coordinate $\alpha_{i_r}$ and the corresponding error value,
$\beta_{i_r}$. The tree $T$ is then traversed depth
first, saving intermediate results on vertices whose out degree is
larger than $1$, and applying the polynomial-update algorithm on the
edges. Because the tree is traversed depth first and has depth $\rmax$,
there is a need to save at most $\rmax$ vertex calculations at each
time (one for each depth).\footnote{We thank I.~Tamo for pointing this
out.} See ahead for details.     

\section{The minimization problem for fast Chase decoding}
\label{sec:module} 
Wu's LFSR minimization problem $\bs{A}[\bs{\sigma}_i]$
\cite[p.~112]{Wu12} is defined over an $\efq$-vector space
of pairs of polynomials that in general is not an
$\efq[X]$-module. The key observation is that using Forney's formula, 
Wu's minimization problem can be replaced by a  minimization 
problem over an $\efq[X]$-module. 

\begin{remarknn}
{\rm
For simplicity, we will assume from this point on that $d$ is
odd, so that $d=2t+1$. It is straightforward to modify the following
derivation for the case of even $d$.
}
\end{remarknn}

\begin{definition}\label{def:mr}
{\rm
For $r\in \{0,\ldots,n\}$, distinct $\alpha_1,\ldots,\alpha_r\in
\efq^*$,  and $\beta_1,\ldots,\beta_r\in \efq^*$ (not necessarily
distinct), let   
$$
M_r=M_r(\synd{y},\alpha_1,\ldots,\alpha_r,\beta_1,\ldots,\beta_r)
$$
be the set of all pairs $(u,v)\in \efq[X]^2$ satisfying the following
conditions:
\begin{enumerate}

\item $u\equiv \synd{y}v \mod (X^{d-1})$

\item $\forall j\in\{1,\ldots,r\}$, $v(\alpha_j^{-1})=0$ and 
$\beta_j a_j v'(\alpha_j^{-1})= -\alpha_j u(\alpha_j^{-1})$ (with
$a_j:=\tl{a}_{j'}$ for the $j'$ with $\alpha_j=\lambda^{j'}$).

\end{enumerate}
}
\end{definition}

The possibility of using \koetter{}'s iteration as an
alternative to Wu's method follows almost immediately 
from the following theorem. 

\begin{theorem}\label{thm:main}

\begin{enumerate}

\item For all $r$, $M_r$ is an $\efq[X]$-module. 

\item With the terminology of the previous section, if
$\dham(\bs{y},\bs{x})\leq t+r$, $\alpha_1,\ldots,\alpha_r$ are 
error locations and $\beta_1,\ldots,\beta_r$ are the corresponding
error values, then $(\omega,\sigma)\in M_r$ and
$$
\lm(\omega,\sigma)=\min\big\{\lm(u,v)|(u,v)\in
M_r\smallsetminus\{0\}\big\}. 
$$

\end{enumerate}

\end{theorem}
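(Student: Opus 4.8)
The plan is to prove the two parts separately, with Part 1 being a direct verification and Part 2 leveraging Proposition \ref{prop:rzero} together with a dimension/degree count.

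For Part 1, I would show that $M_r$ is closed under addition (immediate, since all three defining conditions are $\efq$-linear in $(u,v)$) and under multiplication by $X$. The only subtle point is the last condition: if $(u,v)\in M_r$ and we set $(\tl u,\tl v):=(Xu,Xv)$, then I must check $\tl v(\alpha_j^{-1})=0$ (clear, since $v(\alpha_j^{-1})=0$) and $\beta_j a_j \tl v'(\alpha_j^{-1})=-\alpha_j \tl u(\alpha_j^{-1})$. Here $\tl v'(X)=v(X)+Xv'(X)$, so $\tl v'(\alpha_j^{-1})=v(\alpha_j^{-1})+\alpha_j^{-1}v'(\alpha_j^{-1})=\alpha_j^{-1}v'(\alpha_j^{-1})$ using $v(\alpha_j^{-1})=0$; and $\tl u(\alpha_j^{-1})=\alpha_j^{-1}u(\alpha_j^{-1})$. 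Thus the condition for $(\tl u,\tl v)$ reads $\beta_j a_j \alpha_j^{-1} v'(\alpha_j^{-1})=-\alpha_j\cdot\alpha_j^{-1}u(\alpha_j^{-1})$, i.e. $\alpha_j^{-1}\big(\beta_j a_j v'(\alpha_j^{-1})+\alpha_j u(\alpha_j^{-1})\big)=0$, which holds exactly because $(u,v)\in M_r$. The first condition (key equation mod $X^{d-1}$) is obviously preserved by multiplication by $X$. Since $M_r\subseteq M_0$ and $M_0$ is an $\efq[X]$-module, finite sums of $X$-multiples stay in $\efq[X]^2$, so $M_r$ is an $\efq[X]$-submodule.

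For Part 2, I first check membership $(\omega,\sigma)\in M_r$. The key equation $\omega\equiv \synd{y}\sigma\bmod(X^{d-1})$ is \eqref{eq:key}. For the remaining conditions, note that under the hypothesis $\alpha_1,\dots,\alpha_r$ are among the true error locations $\{\alpha_i\}$; since $\sigma(X)=\prod_i(1-\alpha_iX)$ vanishes at $X=\alpha_j^{-1}$, condition $v(\alpha_j^{-1})=0$ holds, and Forney's formula \eqref{eq:forney} is precisely $\beta_j a_j\sigma'(\alpha_j^{-1})=-\alpha_j\omega(\alpha_j^{-1})$, which is the second equation. So $(\omega,\sigma)\in M_r$.

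For minimality of $\lm(\omega,\sigma)$ in $M_r\smallsetminus\{0\}$, I would argue as follows. Let $(u,v)\in M_r\smallsetminus\{0\}$ with $\lm(u,v)\leq\lm(\omega,\sigma)$; I want to conclude $(u,v)=c(\omega,\sigma)$. Write $\veps:=\dham(\bs y,\bs x)\leq t+r$ and let $\sigma_0(X):=\prod_{i=1}^{\veps}(1-\alpha_iX)$ be the full ELP (so $\sigma$ here equals this $\sigma_0$) with $\deg\sigma_0=\veps$. The crucial observation is that conditions in Definition \ref{def:mr} force $(u,v)$ to behave like a solution of a \emph{reduced} key equation: since $v(\alpha_j^{-1})=0$ for $j=1,\dots,r$, the polynomial $\prod_{j=1}^r(1-\alpha_jX)$ divides $v$, say $v=\big(\prod_{j=1}^r(1-\alpha_jX)\big)\hat v$; and the Forney-type condition on $u$ at the $\alpha_j^{-1}$, combined with the key equation, should let me write $u=\big(\prod_{j=1}^r(1-\alpha_jX)\big)\hat u$ with $(\hat u,\hat v)$ satisfying the key equation for a shortened syndrome (or equivalently, satisfying $\hat u\equiv \synd{y}\hat v\bmod (X^{d-1-??})$ after dividing out). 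More directly, I expect the cleanest route is to mimic the proof of Proposition \ref{prop:rzero}: take the genuine pair $(\omega,\sigma)\in M_r$ with $\gcd(\omega,\sigma)=1$, and given any $(u,v)\in M_r$ with small enough leading monomial, use the bound $\deg v\leq\deg\sigma=\veps\leq t+r$ and $\deg u\leq \veps-1$ forced by $\lm(u,v)\leq\lm(\omega,\sigma)=(0,X^{\veps})$, and then show $u\sigma\equiv\omega v$ holds not just mod $X^{d-1}$ but exactly — this requires a degree argument. The obstacle: the naive degree bound $\deg(u\sigma),\deg(\omega v)\leq (\veps-1)+\veps=2\veps-1$ can exceed $d-2=2t-1$ when $r\geq 1$, so I cannot immediately pass from the congruence to equality as in Proposition \ref{prop:rzero}.

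The key step — and what I expect to be the main obstacle — is overcoming exactly this degree gap. The resolution should be that the extra conditions at $\alpha_1^{-1},\dots,\alpha_r^{-1}$ supply $r$ extra linear constraints (pairing $v(\alpha_j^{-1})=0$ with the Forney relation), which effectively lower the relevant "budget," so that after factoring out $\pi(X):=\prod_{j=1}^r(1-\alpha_jX)$ from both $u$ and $v$ one lands back in a genuine $M_0$-type problem with parameter $d-1$ replaced appropriately, or rather reduces to Proposition \ref{prop:rzero} applied to a syndrome of the remaining $\veps-r\leq t$ errors. Concretely: I would show $\pi\mid v$ and $\pi\mid u$, set $\hat u:=u/\pi$, $\hat v:=v/\pi$, verify $(\hat u,\hat v)$ lies in $M_0$ for the syndrome of the residual error pattern (the one with locations $\alpha_{r+1},\dots,\alpha_\veps$), which has at most $t$ errors, and observe $\lm(\hat u,\hat v)\leq\lm(\hat\omega,\hat\sigma)=\lm(\omega/\pi,\sigma/\pi)$ because dividing by $\pi$ shifts leading monomials uniformly. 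Then Proposition \ref{prop:rzero} gives $(\hat u,\hat v)=c(\hat\omega,\hat\sigma)$, hence $(u,v)=c(\omega,\sigma)$; Proposition \ref{prop:unique} then upgrades this to the stated equality of leading monomials and uniqueness. The delicate part is proving $\pi\mid u$: this is where the Forney-type condition $\beta_j a_j v'(\alpha_j^{-1})=-\alpha_j u(\alpha_j^{-1})$ is needed — knowing $v$ has a \emph{simple} zero at $\alpha_j^{-1}$ (so $v'(\alpha_j^{-1})\neq 0$) is not automatic, so I would instead differentiate the congruence $u\equiv\synd{y}v\bmod(X^{d-1})$ and use the condition to show the syndrome of the residual errors, rather than $\synd{y}$, governs $(\hat u,\hat v)$; tracking the mod-$X^{d-1}$ bookkeeping through this division is the computation I would need to execute carefully.
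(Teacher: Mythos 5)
Part 1 of your proposal is correct and essentially matches the paper (the paper checks closure under multiplication by an arbitrary $f\in\efq[X]$, which is equivalent to your $X$-multiplication check plus $\efq$-linearity). Part 2's membership verification is also correct.

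The minimality argument, however, has a genuine gap: your central claim that $\pi(X):=\prod_{j=1}^r(1-\alpha_j X)$ divides $u$ is false. Already for $(u,v)=(\omega,\sigma)$ we have $\omega(\alpha_r^{-1})\neq 0$: by Forney's formula $\omega(\alpha_r^{-1})=-\alpha_r^{-1}\beta_r a_r\sigma'(\alpha_r^{-1})$, and $\sigma'(\alpha_r^{-1})\neq 0$ because $\sigma$ is separable (so $\gcd(\omega,\sigma)=1$ really does imply $\omega(\alpha_r^{-1})\neq 0$). So you cannot set $\hat u := u/\pi$. The correct move — and the heart of the paper's lemma — is to first shift $u$ before dividing: put $\tl v := v/(1-\alpha_r X)$ and $h := u - \beta_r a_r\tl v$; then $(1-\alpha_r X)\mid h$ precisely because of the derivative condition, and the map $\psi:(u,v)\mapsto(h/(1-\alpha_r X),\,\tl v)$ sends $M_r$ into $M_{r-1}(\syndtly,\ldots)$ for the residual syndrome $\syndtly$, with $\psi(\omega,\sigma)=(\tlom,\tlsig)$. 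This is an iterated one-step reduction, not a simultaneous division by $\pi$. Your hedged remark about needing to ``differentiate the congruence and track mod-$X^{d-1}$ bookkeeping'' is gesturing in this direction, but you neither identify the correct modified numerator nor carry out the computation.

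A second, smaller issue: your claim that ``dividing by $\pi$ shifts leading monomials uniformly'' is also wrong for the corrected map, since the first coordinate becomes $h/(1-\alpha_r X)$ with $h=u-\beta_r a_r\tl v$, whose degree is controlled by $\max\{\deg u,\deg\tl v\}$ rather than by $\deg u$ alone. So after the reduction one cannot simply assert $\lm(\psi(u,v))\leq\lm(\tlom,\tlsig)$ from $\lm(u,v)\leq\lm(\omega,\sigma)$. The paper instead runs the argument in the contrapositive direction ($(u,v)\neq c(\omega,\sigma)\Rightarrow\psi(u,v)\neq c(\tlom,\tlsig)\Rightarrow\lm(\psi(u,v))>\lm(\tlom,\tlsig)$) and then splits into cases according to whether $\lm(\psi(u,v))$ is on the left or the right, using $h=u-\beta_r a_r\tl v$ to force either $\deg u\geq\deg\sigma$ or $\deg v>\deg\sigma$. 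This casework is where the Forney condition earns its keep, and it is missing from your proposal.
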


\begin{proof}

1. $M_r$ is clearly an
$\efq$-vector space. For $f(X)\in 
\efq[X]$ and $(u,v)\in M_r$, we would like to show that
$f\cdot(u,v)\in M_r$. Clearly, $(fu,fv)$ satisfies the required
congruence, and also $fv$ has the required roots. It remains to verify
that for all $j$, $\beta_j a_j(fv)'(\alpha_j^{-1}) = -\alpha_j
(fu)(\alpha_j^{-1})$. Now,   
\begin{multline*}
(fv)'(\alpha_j^{-1})=(f'v)(\alpha_j^{-1})+(fv')(\alpha_j^{-1}) =
(fv')(\alpha_j^{-1}) \\ = f(\alpha_j^{-1})\cdot
\frac{-\alpha_j}{\beta_j a_j}u(\alpha_j^{-1}) =
-\frac{\alpha_j}{\beta_j a_j}(fu)(\alpha_j^{-1}),  
\end{multline*}
where in the second equality we used $v(\alpha_j^{-1})=0$ and in the
third equality we used $\beta_j a_j v'(\alpha_j^{-1})= -\alpha_j
u(\alpha_j^{-1})$ (note that $\beta_j a_j\neq 0$). 

2. First, $(\omega,\sigma)\in M_r$ by the key equation (\ref{eq:key})
and Forney's formula (\ref{eq:forney}). The proof of minimality is by
induction on $r$. For $r=0$, the assertion is just 
Proposition \ref{prop:rzero}. Suppose that $r\geq 1$, and the
assertion holds for $r-1$. Let $\tlbsy$ be obtained from $\bs{y}$ by
subtracting $\beta_r$ from coordinate $\alpha_r$. Let
$\tlsig:=\sigma/(1-\alpha_rX)$ (the error locator 
for $\tlbsy$) and let $\tlom$ be the error evaluator for $\tlbsy$. By
the induction hypothesis,   
\begin{equation}\label{eq:min}
\lm(\tlom,\tlsig)=\min\big\{\lm(u,v)|(u,v)\in \tilde{M}_{r-1}\big\},
\end{equation}
with 
$$
\tilde{M}_{r-1} :=
M_{r-1}(\syndtly,\alpha_1,\ldots,\alpha_{r-1},\beta_1,\ldots,\beta_{r-1}).
$$

To continue, we will need a lemma.

\begin{lemmann}
For $(u,v)\in M_r$, write $\tl{v}:=v/(1-\alpha_r X)$ and put
$h:=u - \beta_r a_r\tl{v}$. Then 
$(1-\alpha_rX)|h(X)$. Moreover, writing $\tl{h}:=h/(1-\alpha_rX)$, the
map $\psi\colon (u,v)\mapsto (\tl{h},\tl{v})$ maps $M_r$ into
$\tilde{M}_{r-1}$, and satisfies
$\psi(\omega,\sigma)=(\tlom,\tlsig)$.  
\end{lemmann}

\begin{proof}[Proof of Lemma]
Since $v=(1-\alpha_rX)\tl{v}$, we have
$v'=-\alpha_r\tl{v}+(1-\alpha_rX)\tl{v}'$, and therefore
$v'(\alpha_r^{-1})=-\alpha_r\tl{v}(\alpha_r^{-1})$. Hence,
\begin{eqnarray*}
h(\alpha_r^{-1}) & = & u(\alpha_r^{-1})-\beta_r a_r\tl{v}(\alpha_r^{-1})
\\
 & = & -\frac{\beta_r a_r}{\alpha_r} v'(\alpha_r^{-1})-\beta_r a_r
\tl{v}(\alpha_r^{-1})=0, 
\end{eqnarray*}
which proves the first assertion. 

For the second assertion, note first
that by (\ref{eq:oneloc}),
$$
\syndtly\equiv\synd{y} - \frac{\beta_r a_r}{1-\alpha_r X}\mod (X^{d-1}),
$$
and therefore
\begin{eqnarray*}
\syndtly \tl{v} & \equiv & \synd{y}\tl{v}-\frac{\beta_r
a_r}{1-\alpha_r X}\tl{v}\\ 
 & = & \frac{1}{1-\alpha_r X}(\synd{y}v-\beta_r a_r\tl{v})\\
 & \equiv & \frac{1}{1-\alpha_r X}(u-\beta_r a_r\tl{v}) = \tl{h}, 
\end{eqnarray*}
(where ``$\equiv$'' stands for congruence modulo $X^{d-1}$), which
implies that $(\tl{h},\tl{v})$ satisfies the required congruence
relation in the definition of $\tilde{M}_{r-1}$. Also, clearly
$\tl{v}(\alpha_j^{-1})=0$ for all $j\in \{1,\ldots,r-1\}$. Finally,
using $v'=-\alpha_r\tl{v}+(1-\alpha_rX)\tl{v}'$ again, we see that for
all $j\in\{1,\ldots,r-1\}$, 
$$
\tl{v}'(\alpha_j^{-1}) =
\frac{v'(\alpha_j^{-1})}{1-\alpha_r\alpha_j^{-1}}
 = -\frac{\alpha_j}{\beta_j a_j}\cdot
\frac{u(\alpha_j^{-1})}{1-\alpha_r\alpha_j^{-1}}  
 =  -\frac{\alpha_j}{\beta_j a_j}\cdot \tl{h}(\alpha_j^{-1}).
$$
This proves that $\psi$  maps $M_r$ into $\tilde{M}_{r-1}$.

Finally, we have $\psi(\omega,\sigma)=(\tl{h},\tlsig)$ with
$\tl{h}=(\omega-\beta_r a_r \tlsig)/(1-\alpha_r X)$, and it remains to
verify that $\tl{h}=\tlom$.\footnote{Note that if the total number of
errors is at most $d-1$, then it is clear from the above that
$\tl{h}=\tlom$, as both are congruent to $\syndtly\tlsig$
modulo $X^{d-1}$ and have a degree $\leq d-2$. However, the following
proof does not require this assumption.} 
Let $\alpha'_1,\ldots,\alpha'_{\veps}\in \efq^*$ be
some enumeration of all error locators, let
$\beta'_1,\ldots,\beta'_{\veps}\in \efq^*$
be the corresponding error values, and let $a'_1,\ldots,a'_{\veps}$ be
the corresponding entries of the vector $\tl{\bs{a}}$. Assume
w.l.o.g.~that $\alpha'_{\veps}=\alpha_r$ (and hence
$\beta'_{\veps}=\beta_r$ and $a'_{\veps}=a_r$). Then
\begin{eqnarray*}
\tl{h} & = & \frac{\omega-\beta_r a_r \tlsig}{1-\alpha_r X} \\
 & = &
\frac{1}{1-\alpha'_{\veps}X}\Big(\sum_{i=1}^{\veps}\beta'_ia'_i\prod_{j=1,j\neq
i}^{\veps}(1-\alpha'_j X)
-\beta'_{\veps}a'_{\veps}\prod_{j=1}^{\veps-1}(1-\alpha'_j X)\Big)\\ 
 & = &
\frac{1}{1-\alpha'_{\veps}X}\Big(\sum_{i=1}^{\veps-1}\beta'_ia'_i\prod_{j=1,j\neq
i}^{\veps}(1-\alpha'_j X)\Big) \\
 & = & \sum_{i=1}^{\veps-1}\beta'_ia'_i\prod_{j=1,j\neq
i}^{\veps-1}(1-\alpha'_j X)=\tlom.
\end{eqnarray*}

\end{proof}

Returning to the proof of part 2 of the theorem, if $(u,v)\in M_r$ and
$v=c\cdot\sigma$ for some $c\in \efq^*$, 
then we must have $\lm(u,v)\geq (0,X^{\deg(\sigma)}) =
\lm(\omega,\sigma)$. Let us therefore take 
$(u,v)\in M_r\smallsetminus\{0\}$ with $v\neq c\sigma$ for all
$c\in\efq^*$. Then also 
$\psi(u,v)\neq c(\tlom,\tlsig)$ for all $c\in\efq^*$, and hence, 
by the induction hypothesis, the lemma, and Proposition
\ref{prop:unique},  
\begin{equation}\label{eq:ge}
\lm(\psi(u,v))>\lm(\tlom,\tlsig)=(0,X^{\deg(\sigma)-1}).
\end{equation}
If the leading monomial of $\psi(u,v)$ is of the form $(0,X^{\ell})$
for some $\ell$, then $\lm(\psi(u,v))=(0,X^{\deg(v)-1})$, and
(\ref{eq:ge}) implies $\deg(v)>\deg(\sigma)$, so that certainly
$\lm(u,v)>\lm(\omega,\sigma)$.   

Suppose therefore that $\lm(\psi(u,v))$ is of the form $(X^{\ell},0)$
for some $\ell$, that is, 
$\lm(\psi(u,v))=(X^{\deg(h)-1},0)$. In this case, (\ref{eq:ge})
implies that $\deg(h)-1>\deg(\sigma)-2$, that is, $\deg(h)\geq
\deg(\sigma)$. But since $h=u-\beta_r a_r \tl{v}$, this implies that at
least one of $u$ and $\tl{v}$ must have a degree that is at least as
large as $\deg(\sigma)$. Now, if $\deg(u)\geq \deg(\sigma)$, that
is,
if $\deg(u)>\deg(\sigma)-1$, then
$\lm(u,v)>\lm(\omega,\sigma)=(0,X^{\deg(\sigma)})$. Similarly, if
$\deg(\tl{v})\geq\deg(\sigma)$, then $\deg(v)>\deg(\sigma)$, and again
$\lm(u,v)>\lm(\omega,\sigma)$. This completes the proof of Theorem
\ref{thm:main}. 
\end{proof}

When moving from
$M_r:=M_r(\synd{y},\alpha_1,\ldots,\alpha_r,\beta_1,\ldots,\beta_r)$ to
$M_{r+1}:= M_{r+1}(\synd{y}, \alpha_1, \ldots, \alpha_{r+1},
\beta_1,\ldots, \beta_{r+1})$, two additional functionals must 
be zeroed. It was already proved in the theorem that each $M_r$ is an 
$\efq[X]$-module. Also, the intersection of $M_r$ with the set
of pairs $(u,v)$ for which $v(\alpha_{r+1}^{-1})=0$ is clearly an
$\efq[X]$-module. Hence, if each ``root condition'' comes before the
corresponding ``derivative condition,''  we may use \koetter{}'s
iteration twice in order to move from a \groebner{} basis for $M_r$ to a 
\groebner{} basis for $M_{r+1}$. 

A detailed description of the application of \koetter{}'s iteration for
moving from $M_r$ to $M_{r+1}$ appears in the following
subsection. This is the algorithm carried out on the edges of the tree
$T$ of Section \ref{sec:tree}. 

For initiating the fast Chase algorithm on the root of $T$, we need a
\groebner{} basis for $M_0$.\footnote{Note that by
Proposition \ref{prop:rzero}, any algorithm that finds a \groebner{}
basis for $M_0$ can also be used for bounded-distance decoding.} Several
algorithms for achieving this goal appear in \cite{Fitz95}. In
particular, Algorithm 4.3 of \cite{Fitz95} is the Euclidean
algorithm,\footnote{The stopping condition of \cite[Alg.~4.3]{Fitz95}
assures that throughout its run, the leading monomials of both
processed pairs of polynomials contain $(1,0)$. Hence, the division
algorithm used in the algorithm effectively divides the two scalar
polynomials on the first coordinate, and performs the same
calculations as the Euclidean algorithm.}
while Algorithm 4.7 of \cite{Fitz95} is similar in nature to the
BM algorithm.  

In fact, we remark that the BM algorithm \emph{itself} can be used to
obtain a \groebner{} basis for $M_0$. Informally, after running the BM
algorithm, two pairs of polynomials are obtained from the two
polynomials updated during the algorithm, and then at most one
additional leading monomial cancellation is required for obtaining the
desired \groebner{} basis. Since the proof is rather technical and this
is outside the main scope of the current paper, we will not elaborate
on this issue.

\section{Algorithms}\label{sec:algs}

\subsection{The basic algorithm on an edge of the decoding tree}\label{sec:basic}
Using the terminology of Section \ref{sec:koetter}, in the current
context we have $\ell=1$, and, as 
already mentioned, we have two types of \koetter{} iterations: one for a
root condition, and the other for a derivative condition. For
convenience, we will use here a version of \koetter{}'s iteration that
includes inversions. In this version, the right-hand sides of the
update rules are both divided by $\Delta_{j^*}$ (multiplication of
elements by non-zero constants obviously takes a \groebner{} basis to a
\groebner{} basis). 

In the $r$-th {\it root iteration}, the linear functional $D$ of
\koetter{}'s iteration acts on a pair
$(u,v)$ as $D(u,v)=v(\alpha_r^{-1})$, and hence on $X\cdot(u,v)$ as
$D(X\cdot(u,v))=\alpha_r^{-1}D(u,v)$. In the $r$-th {\it derivative
iteration} (which must come after the $r$-th root
iteration), we have 
$$
D(u,v)=\beta_r a_rv'(\alpha_r^{-1})+\alpha_r u(\alpha_r^{-1}),
$$
and therefore also
\begin{eqnarray*}
D(X\cdot(u,v))  & = & \beta_r a_r(Xv)'(\alpha_r^{-1})+\alpha_r
(Xu)(\alpha_r^{-1}) \\
 & = & \beta_r a_r \alpha_r^{-1} v'(\alpha_r^{-1}) +
u(\alpha_r^{-1})\\
 & = & \alpha_r^{-1} D(u,v),
\end{eqnarray*}
where in the second equality we used $(Xv)'=Xv'+v$ and
$v(\alpha_r^{-1})=0$. So, for both types of iterations, we have
$D(X\cdot(u,v))/D(u,v)=\alpha_r^{-1}$ if $D(u,v)\neq 0$. Hence, the 
iteration corresponding to a single location $\alpha_r$ has the
following form. 

Note that the above root and derivative iterations correspond to the
values $\broot,\bder$ (resp.) of the variable $\tau$ in Algorithm A.

\hfill\\ 

\begin{tabular}{|c|}
\hline
{\bf Algorithm A: \koetter{}'s iteration for adjoining error location $\alpha_r$}\\
\hline
\end{tabular}

\begin{description}

\item[\bf{Input}] 

\begin{itemize}

\item A \groebner{} basis
$G=\{\bs{g}_0=(g_{00},g_{01}),\bs{g}_1=(g_{10},g_{11})\}$ 
for
$M_{r-1}(\synd{y},\alpha_1,\ldots,\alpha_{r-1},\beta_1,\ldots,\beta_{r-1})$,
with $\lm(\bs{g}_j)$ containing the $j$-th unit vector for 
$j\in\{0,1\}$ 

\item The next error location, $\alpha_r$, and the corresponding error
value, $\beta_r$

\end{itemize}

\item[\bf{Output}] A \groebner{} basis
$\Gplus=\{\bgplus_0=(\gplus_{00},\gplus_{01}),\bgplus_{1}=(\gplus_{10},\gplus_{11})\}$
for \\ $M_r(\synd{y},\alpha_1,\ldots,\alpha_r,\beta_1,\ldots,\beta_r)$
with $\lm(\bgplus_j)$ containing the $j$-th unit vector for $j\in\{0,1\}$ 

\item[\bf{Algorithm}]

\begin{itemize} 

\item For {\bf type} $=$ {\bf root}, {\bf der}

\begin{itemize}

\item If {\bf type} $=$ {\bf der}, 

\begin{itemize}

\item For $j = 0, 1$, set $\bs{g}_j:=\bgplus_j$ /* init:
output of {\bf root} iter.~*/

\end{itemize}

\item For $j=0, 1$, calculate 
$$
\Delta_j:=\begin{cases}
g_{j1}(\alpha_r^{-1}) & \text{if {\bf type}$=${\bf root}}\\
\beta_ra_r g_{j1}'(\alpha_r^{-1})+\alpha_r g_{j0}(\alpha_r^{-1}) &
\text{if {\bf type}$=${\bf der}}
\end{cases}
$$

\item Set $J:=\big\{j\in\{0,1\}|\Delta_j\neq 0\big\}$

\item For $j\in\{0,1\}\smallsetminus J$, set $\bgplus_j:=\bs{g}_j$

\item Let $j^*\in J$ be such that  $\lm(\bs{g}_{j^*})=\min_{j\in 
J}\{\lm(\bs{g}_j)\}$  

\item For $j\in J$ 

\begin{itemize}

\item If $j\neq j^*$

\begin{itemize}

\item Set $\bgplus_j := \bs{g}_j-\frac{\Delta_j}{\Delta_{j^*}}
\bs{g}_{j^*}$ 

\end{itemize}

\item Else /* $j=j^*$ */

\begin{itemize}

\item Set $\bgplus_{j^*} := (X-\alpha_r^{-1})\bs{g}_{j^*}$

\end{itemize}

\end{itemize}

\end{itemize}

\end{itemize}

\end{description}

Again, for clarity of presentation, we have introduced a whole
new set of variables $\{\bgplus_j\}$, although this is not really
necessary. 

If a successive application of the algorithm down the
path from the root to a vertex
$\big((\alpha_{i_{1}},\beta_{i_{1}}),\ldots,
(\alpha_{i_{r-1}},\beta_{i_{r-1}})\big)$ 
of $T$ results in \groebner{} basis for 
$M_{r-1}(\synd{y},\alpha_{i_1},\ldots,
\alpha_{i_{r-1}},\beta_{i_1},\ldots,\beta_{i_{r-1}})$, 
then an additional application on the edge
$(\alpha_{i_{r}},\beta_{i_{r}})$ will result in a \groebner{} basis for 
$M_{r}(\synd{y},\alpha_{i_1},\ldots, 
\alpha_{i_{r}},\beta_{i_1},\ldots,\beta_{i_{r}})$. 

It therefore follows from Theorem \ref{thm:main} that if a vertex  
$$
\bs{v}:=\big((\alpha_{i_{1}},\beta_{i_{1}}),\ldots,
(\alpha_{i_{r}},\beta_{i_{r}})\big) 
$$
of $T$ is a ``direct hit,'' in the sense that
$\alpha_{i_{1}},\ldots,\alpha_{i_{r}}$ are indeed error locations
with respective error values $\beta_{i_{1}},\ldots,\beta_{i_{r}}$,
and if $\veps\leq t+r$, then the second element of the \groebner{} basis
on $\bs{v}$ is $c\cdot (\omega,\sigma)$ for some non-zero
$c$.\footnote{Recall that $\lm(\omega,\sigma)$ contains the
unit vector $(0,1)$.} 

While not necessary for the correctness of the algorithm, it is of
interest to consider the case where, although the tested error pattern
is not a direct hit, the difference between the number of correct indices
and incorrect indices is at least $\veps-t$. For this case, see
Appendix \ref{app:indirect} 

Two faster versions of Algorithm A appear in Appendix
\ref{app:simplifications}: in the first, two \emph{polynomials} (rather
than two \emph{pairs} of polynomials) are maintained, and in the second,
which is even more efficient, the algorithm works with low-degree
polynomials.  

\begin{remark}\label{remark:double}
{\rm
At a first glance, it may seem that the need to use two stages (root
and derivative iterations)
comes at the cost of doubling the complexity in comparison to
\cite[Alg.~1]{Wu12}. However, this is not the case: As shown ahead in
Appendix \ref{app:simplifications}, for the variant of Algorithm A
described in Section \ref{app:pairs}, the complexity of Wu's
algorithm is lower only by a factor about $5/6$ (or $10/11$ in
characteristic $2$) than Algorithm A. See also Remark
\ref{rem:doubleev} ahead for Algorithm B of the following section. We
also note that  Algorithm C of Section \ref{app:lowdeg}, which is
another variant of Algorithm A, has a lower complexity than
\cite[Alg.~1]{Wu12}.  
}
\end{remark}

\subsection{Working with vectors of evaluations}\label{sec:eval}
As already mentioned, to achieve a complexity of $O(n)$ per 
modified symbol, one can use \koetter{}'s method of updating vectors of
evaluations.  Whereas in \cite{Wu12} this requires a
complicated modification of the original algorithm in order to avoid
syndrome updates, it is straightforward to modify Algorithm A to an
``evaluated'' version. 

In Algorithm B below, for a fixed primitive element $\lambda\in
\efq^*$ we let
$\bs{\lambda}:=(1,\lambda^{-1},\lambda^{-2},\ldots,\lambda^{-(q-2)})$.
Also, for a polynomial $f\in \efq[X]$, we let
$f(\bs{\lambda}):=(f(1),f(\lambda^{-1}),f(\lambda^{-2}),\ldots,f(\lambda^{-(q-2)}))$.
Finally, in the algorithm below, $-\odot-$ stands for component-wise 
multiplication of vectors, that is, 
$$
(v_1,v_2,\ldots,v_{\ell})\odot(u_1,u_2,\ldots,u_{\ell}):=(v_1u_1,v_2u_2,\ldots,
v_{\ell}u_{\ell})
$$
(where $\ell\in\bbNp$ and the $u_i,v_i$ are taken from some ring).

Note that the algorithm requires tracing the evaluation vectors of 
the four polynomials implicit in the \groebner{} basis, as
well as the evaluation vectors of the formal derivatives of two of
these four polynomials. 

\hfill\\ 

\begin{tabular}{|c|}
\hline
{\bf Algorithm B: Adjoining error location
$\alpha_r$}\\{\bf for vectors of evaluations, complexity $O(n)$}\\
\hline
\end{tabular}

\begin{description}

\item[\bf{Input}] 

\begin{itemize}

\item For a \groebner{} basis
$G=\{\bs{g}_0=(g_{00},g_{01}),\bs{g}_1=(g_{10},g_{11})\}$ for
$M_{r-1}(\synd{y},\alpha_1,\ldots,\alpha_{r-1},
\beta_1,\ldots,\beta_{r-1})$ 
with $\lm(\bs{g}_j)$ containing the $j$-th unit vector for 
$j\in\{0,1\}$, the input includes the following data:
$$
\gamma_j:=(\bs{v}_{j0},\bs{v}_{j1},\bs{v}_{j2},\bs{m}_j) :=
\big(g_{j0}(\bs{\lambda}),g_{j1}(\bs{\lambda}), 
g_{j1}'(\bs{\lambda}),\lm(\bs{g}_j)\big), j=0,1 
$$

\item The next error location, $\alpha_r$, and the
corresponding error value, $\beta_r$

\end{itemize}

\item[\bf{Output}] For some \groebner{} basis
$G^+=\{\bgplus_0=(g^+_{00},g^+_{01}),\bgplus_1=(g^+_{10},g^+_{11})\}$ 
for $M_{r}(\synd{y},\alpha_1,\ldots,\alpha_{r},\beta_1,\ldots,\beta_{r})$
with $\lm(\bgplus_j)$ containing the $j$-th unit vector for 
$j\in\{0,1\}$, the output consists of the following data:
$$
\gamma^+_j:=(\bs{v}^+_{j0},\bs{v}^+_{j1},\bs{v}^+_{j2},\bs{m}^+_j) := 
\big(g^+_{j0}(\bs{\lambda}),g^+_{j1}(\bs{\lambda}), 
(g^+_{j1})'(\bs{\lambda}),\lm(\bgplus_j)\big),j=0,1
$$

\item[\bf{Algorithm}]

\begin{itemize} 

\item For {\bf type} $=$ {\bf root}, {\bf der}

\begin{itemize}

\item If {\bf type} $=$ {\bf der}, 

\begin{itemize}

\item For $j = 0, 1$, set $\gamma_j:=\gamma^+_j$ /* init:
output of {\bf root} iter.~*/

\end{itemize}

\item For $j=0, 1$, calculate (using appropriate entries of
$\bs{v}_{j0},\bs{v}_{j1},\bs{v}_{j2})$ 
$$
\Delta_j:=\begin{cases}
g_{j1}(\alpha_r^{-1}) & \text{if {\bf type}$=${\bf root}}\\
\beta_ra_r g_{j1}'(\alpha_r^{-1})+\alpha_r g_{j0}(\alpha_r^{-1}) &
\text{if {\bf type}$=${\bf der}}
\end{cases}
$$

\item Set $J:=\big\{j\in\{0,1\}|\Delta_j\neq 0\big\}$

\item For $j\in\{0,1\}\smallsetminus J$, set $\gamma^+_j:=\gamma_j$

\item Let $j^*\in J$ be such that  $\bs{m}_{j^*}=\min_{j\in
J}\{\bs{m}_j\}$  

\item For $j\in J$ 

\begin{itemize}

\item If $j\neq j^*$

\begin{itemize}

\item For $i=0,1,2$, set
$\bs{v}^+_{ji}:=\bs{v}_{ji}-\frac{\Delta_j}{\Delta_{j^*}}\bs{v}_{j^*i}$

\item Set $\bs{m}^+_j:=\bs{m}_j$ and put
$\gamma^+_j:=(\bs{v}^+_{j0},\bs{v}^+_{j1},\bs{v}^+_{j2},\bs{m}^+_j)$ 

\end{itemize}

\item Else /* $j=j^*$ */

\begin{itemize}

\item For $i=0,1$, set
$$
\bs{v}^+_{ji} :=
\big(\bs{\lambda}-(\alpha_r^{-1},\alpha_r^{-1},\ldots,\alpha_r^{-1})\big)\odot
\bs{v}_{ji}
$$

\item Set /* using
$\big[(X-\alpha_r^{-1})g_{j1}\big]'=(X-\alpha_r^{-1})g'_{j1}+g_{j1}$ */
$$
\bs{v}^+_{j2}:=\big(\bs{\lambda}-(\alpha_r^{-1},\alpha_r^{-1},\ldots,\alpha_r^{-1})\big)
\odot \bs{v}_{j2} + \bs{v}_{j1}
$$
 
\item Set $\bs{m}^+_j:=X\cdot \bs{m}_j$ and put
$\gamma^+_j:=(\bs{v}^+_{j0},\bs{v}^+_{j1},\bs{v}^+_{j2},\bs{m}^+_j)$ 

\end{itemize}

\end{itemize}

\end{itemize}

\end{itemize}

\end{description}

\begin{remark}\label{rem:doubleev}
{\rm
\begin{enumerate}

\item Algorithm B maintains a total of $6$ evaluation vectors, and its
complexity is dominated by a total of $2\cdot 6=12$ per-coordinate
multiplications of evaluation vectors. Hence, the total complexity on
one edge is $12n$ finite-field multiplications. 

\item As opposed to \cite[Alg.~1]{Wu12}, in \cite[Alg.~2]{Wu12} there is an
explicit equivalent to the two stages (root and derivative) of
Algorithm B: in each application of \cite[Alg.~2]{Wu12}, there is one
stage called ``2) Updating'', followed by a stage called ``3)
Converting''. Just like Algorithm B, \cite[Alg.~2]{Wu12} maintains $6$
evaluation vectors, where in each of the Updating and Converting
stages only $4$ of them are updated. However, the complexity depends
on the total number of per-coordinate multiplications of an evaluation
vector, and not on the number of updated vectors. This number of 
per-coordinate multiplications depends on the case in
the Updating stage. For example, for Case 3, it seems that there are
$8$ distinct per-coordinate multiplications in the Updating stage,
followed by $4$ per-coordinate multiplications in the Converting
stage. This gives a total of $12$ such multiplications, just as in
Algorithm B. On the other hand, the number of per-coordinates
multiplications appears to be higher for Case 8 of the Updating
stage. All-in-all, it seems to be fair to say that the two algorithms
have a similar complexity. 

\end{enumerate}

}
\end{remark}

\subsection{High-level description of the decoding algorithm}\label{sec:high}
Let us now describe the high-level flow of the decoding algorithm.

\begin{enumerate} 

\item Perform bounded-distance HD decoding. If this decoding finds a
codeword within Hamming distance $t$ from the received word, output
this codeword and  exit. Otherwise, proceed to the fast Chase decoding
algorithm. 

\item Find a \groebner{} basis
$\{\bs{g}_0=(g_{00},g_{01}),\bs{g}_1=(g_{10},g_{11})\}$ for $M_0$ with
$\lm(\bs{g}_0)$ containing $(1,0)$ and $\lm(\bs{g}_1)$ containing
$(0,1)$. As shown in \cite{Fitz95}, this can be done with
an equivalent of any of the standard bounded-distance HD decoding
algorithms, and can also be used for HD decoding in Step
1.

\item Calculate the derivatives $g'_{01},g'_{11}$, and evaluate
polynomials to obtain 
$$
\gamma_j:=\big(g_{j0}(\bs{\lambda}),g_{j1}(\bs{\lambda}), 
g_{j1}'(\bs{\lambda}),\lm(\bs{g}_j)\big), j=0,1.
$$
Store $\gamma_0,\gamma_1$ in the memory for depth $0$.

\item Using reliability information, identify a set
$I=\{\alpha_1,\ldots,\alpha_{\eta}\}$ of $\eta$ least reliable
coordinates . For each $\alpha\in I$, find $A_{\alpha}$, the set
of $\mu$ most probable HD errors for the $\alpha$-th coordinate given the
$\alpha$-th received symbol.  Together with the pre-defined depth,
$\rmax$, this completely determines the tree $T$ of Section
\ref{sec:tree}. 

\item Traverse the tree $T$ depth first.  When visiting an edge
$(\bs{u'},\bs{u})$ between  a vertex  $\bs{u'}$ at depth $r-1$ and a
vertex $\bs{u}$ at depth $r$:  

\begin{itemize}

\item Perform Algorithm B, taking the inputs $\gamma_0,\gamma_1$ from
the memory for depth $r-1$. 

\item Store the outputs $\gamma^+_0,\gamma^+_1$
in the memory for depth $r$. 

\item If the following conditions
hold:

\begin{itemize}

\item $\bs{v}^+_{1,1}$ has exactly $t+r$ zero entries, and 

\item $\bs{m}_1^+=(0,X^{t+r})$ (this is equivalent to
$\deg(g^+_{1,1})=t+r$, as $\lm(\bgplus_1)$ contains $(0,1)$) 

\end{itemize}

then:

\begin{itemize}

\item  Letting $i_1,\ldots,i_{t+r}$ be the indices of zero entries of
$\bs{v}^+_{1,1}$ (counting indices from $0$), let the error locators
be $\alpha_{j}:=\lambda^{i_j}$, $j=1,\ldots,t+r$. Calculate 
the corresponding error values using appropriate entries of
$\bs{v}^+_{1,0}$ (evaluation vector of $g^+_{10}$) and $\bs{v}^+_{1,2}$
(evaluation vector of $(g^+_{11})'$) by the Forney formula
(\ref{eq:forney}):\footnote{Note that in the expression for
$\beta_{j}$, the denominator is non-zero because $g^+_{11}$ is
separable by the above assumptions.}  
$$
\beta_{j}:=-\frac{\alpha_{j}g^+_{10}(\alpha_{j}^{-1})}
{a_{j}(g^+_{11})'(\alpha_{j}^{-1})}, j=1,\ldots, t+r
$$

\item If all the $\beta_{j}$ are non-zero, add the resulting error
to a list of potential errors. 

\end{itemize}

\end{itemize}

\end{enumerate}

Note that error vectors added to the list in the above flow must have
the same syndrome as the received word, as follows from the
following proposition.\footnote{We thank M.~Twitto for pointing out
this observation.}

\begin{proposition}
Suppose that $\tl{\sigma}\in \efq[X]$ is separable, splits in
$\efq$, and satisfies $\tl{\sigma}(0)\neq 0$. Suppose also that
$\tl{\omega}\in \efq[X]$ satisfies 
$\deg(\tl{\omega})<\deg(\tl{\sigma})$ and $\tl{\omega}\equiv
\synd{y}\tl{\sigma}\mod (X^{d-1})$. Let 
$\tl{\bs{e}}$ be the vector with support
$\big\{\alpha^{-1}|\tl{\sigma}(\alpha)=0\big\}$ 
and corresponding non-zero entries obtained by Forney's formula
(\ref{eq:forney}) with $\tl{\sigma}$ and $\tl{\omega}$. 
Then $\synd{\tl{\bs{e}}}(X)=\synd{\bs{y}}(X)$.  
\end{proposition}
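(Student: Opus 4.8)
The plan is to reconstruct the "error-locator / error-evaluator" data of $\tl{\bs e}$ from $\tl\sigma,\tl\omega$ and show it matches $\synd{\bs y}$ coordinate by coordinate via the partial-fraction identity \eqref{eq:oneloc}. First I would write $\tl\sigma(X)=\prod_{\ell=1}^{s}(1-\alpha'_\ell X)$ with the $\alpha'_\ell$ distinct and nonzero, which is possible precisely because $\tl\sigma$ is separable, splits in $\efq$, and has $\tl\sigma(0)\neq 0$; here $s=\deg(\tl\sigma)$. Writing $\alpha'_\ell=\lambda^{i_\ell}$, the support of $\tl{\bs e}$ is $\{\lambda^{-i_\ell}\}=\{\lambda^{\,q-1-i_\ell}\}$, i.e.\ coordinate $\alpha'^{-1}_\ell$, and by Forney's formula \eqref{eq:forney} the corresponding error value $\beta'_\ell$ satisfies $\beta'_\ell a'_\ell\,\tl\sigma'(\alpha'^{-1}_\ell)=-\alpha'_\ell\,\tl\omega(\alpha'^{-1}_\ell)$, where $a'_\ell:=\tl a_{i_\ell}$.

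Next I would compute $\synd{\tl{\bs e}}(X)$ directly from \eqref{eq:synd}, which gives
\begin{equation*}
\synd{\tl{\bs e}}(X)\equiv \sum_{\ell=1}^{s}\frac{\beta'_\ell a'_\ell}{1-\alpha'_\ell X}\bmod (X^{d-1}).
\end{equation*}
The key is then to identify the right-hand side with $\tl\omega/\tl\sigma$, evaluated as a power series mod $X^{d-1}$. Since the $\alpha'_\ell$ are distinct, the partial-fraction decomposition of $\tl\omega/\tl\sigma$ (legitimate because $\deg\tl\omega<\deg\tl\sigma$) reads $\tl\omega(X)/\tl\sigma(X)=\sum_{\ell}c_\ell/(1-\alpha'_\ell X)$ with $c_\ell$ obtained by the standard residue computation: multiplying by $(1-\alpha'_\ell X)$ and setting $X=\alpha'^{-1}_\ell$ yields $c_\ell=\tl\omega(\alpha'^{-1}_\ell)/\prod_{m\neq\ell}(1-\alpha'_m\alpha'^{-1}_\ell)$. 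A short computation shows $\prod_{m\neq\ell}(1-\alpha'_m\alpha'^{-1}_\ell)=-\alpha'^{-1}_\ell\,\tl\sigma'(\alpha'^{-1}_\ell)$ (differentiate $\tl\sigma$ as a product and evaluate at the root $\alpha'^{-1}_\ell$), so $c_\ell=-\alpha'_\ell\tl\omega(\alpha'^{-1}_\ell)/\tl\sigma'(\alpha'^{-1}_\ell)=\beta'_\ell a'_\ell$ by Forney's formula. Hence $\sum_\ell \beta'_\ell a'_\ell/(1-\alpha'_\ell X)=\tl\omega/\tl\sigma$ as rational functions, and in particular as power series they agree to all orders, so certainly $\synd{\tl{\bs e}}\equiv \tl\omega\tl\sigma^{-1}\bmod(X^{d-1})$, where $\tl\sigma^{-1}$ is the inverse power series (well-defined since $\tl\sigma(0)\neq 0$).

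Finally I would close the loop using the hypothesis $\tl\omega\equiv\synd{\bs y}\tl\sigma\bmod(X^{d-1})$: multiplying by $\tl\sigma^{-1}\bmod(X^{d-1})$ gives $\synd{\bs y}\equiv\tl\omega\tl\sigma^{-1}\equiv\synd{\tl{\bs e}}\bmod(X^{d-1})$; since both are polynomials of degree $\le d-2$, equality mod $X^{d-1}$ forces $\synd{\tl{\bs e}}(X)=\synd{\bs y}(X)$ exactly. I expect the only mildly delicate point to be the residue identity $\prod_{m\neq\ell}(1-\alpha'_m\alpha'^{-1}_\ell)=-\alpha'^{-1}_\ell\tl\sigma'(\alpha'^{-1}_\ell)$ together with keeping the $a'_\ell$ bookkeeping consistent with Definition of $a_i$ in Subsection \ref{sec:rs}; everything else is a direct substitution into \eqref{eq:synd}, \eqref{eq:oneloc}, and \eqref{eq:forney}. (One could alternatively bypass partial fractions entirely by noting that $\tl\omega$ and the error-evaluator of $\tl{\bs e}$ both have degree $<\deg\tl\sigma$ and are both $\equiv\synd{\bs y}\tl\sigma\bmod(X^{d-1})$, provided $d-1\ge 2\deg\tl\sigma$; but since that degree bound is not assumed here, the partial-fraction route is the robust one.)
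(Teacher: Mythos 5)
Your proof is correct, and it does the job by a residue/partial-fraction computation rather than by the abstract route the paper takes. Both arguments hinge on the same fact, namely that $\synd{\tl{\bs e}}(X)\equiv \tl\omega(X)\,\tl\sigma(X)^{-1}\bmod (X^{d-1})$, but the paper gets there differently: it lets $\hat\omega$ denote the actual error-evaluator of $\tl{\bs e}$, observes that Forney's formula forces $\hat\omega(\alpha'^{-1}_\ell)=\tl\omega(\alpha'^{-1}_\ell)$ at every root $\alpha'^{-1}_\ell$ of $\tl\sigma$ (since both values produce the same $\beta'_\ell$), and then concludes $\hat\omega=\tl\omega$ because $\tl\omega-\hat\omega$ vanishes at $\deg\tl\sigma$ points while having degree $<\deg\tl\sigma$; the key equation for $\tl{\bs e}$ then gives the congruence immediately, and the rest is the same cancellation-of-$\tl\sigma$ step you used. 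Your explicit residue identity $\prod_{m\neq\ell}(1-\alpha'_m\alpha'^{-1}_\ell)=-\alpha'^{-1}_\ell\tl\sigma'(\alpha'^{-1}_\ell)$ is correct and is in fact the concrete content of what the paper packages as "by Forney's formula", so the two approaches are equivalent in substance; yours is more computational and self-contained, the paper's is shorter. One small correction to your parenthetical remark: you suggest that the alternative ``compare $\tl\omega$ and $\hat\omega$'' route would require the bound $d-1\ge 2\deg\tl\sigma$. It does not, because the comparison in the paper is made at the zeros of $\tl\sigma$ via Forney's formula (a pointwise-agreement-plus-degree argument), not via the congruence $\bmod\ X^{d-1}$; so the paper's route is in fact unconditional on $d$, contrary to what your aside suggests.
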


\begin{proof}
By dividing both $\tl{\sigma}$ and $\tl{\omega}$ by $\tl{\sigma}(0)$,
we may assume w.l.o.g.~that $\tl{\sigma}(0)=1$.  Note first that the
EEP related to $\tl{\bs{e}}$ is indeed $\tl{w}$: If 
$\hat{w}$ is the EEP related to $\tl{\bs{e}}$, then by Forney's
formula (\ref{eq:forney}), $\tl{\omega}-\hat{\omega}$ has
$\deg(\tl{\sigma})$ roots. By the degree assumption in the proposition,
$\deg(\tl{\omega}-\hat{\omega})<\deg(\tl{\sigma})$, which implies that
$\tl{\omega}-\hat{\omega}$ is the zero polynomial.

Writing ``$\equiv$'' for congruence modulo $(X^{d-1})$, it holds that
$$
\synd{\tl{\bs{e}}}\tl{\sigma}\equiv \tl{\omega} \equiv
\synd{y}\tl{\sigma}, 
$$
where the first congruence follows from the key equation, while the
second congruence holds by assumption. Hence
$X^{d-1}|(\synd{\tl{\bs{e}}}-\synd{y})$. As
$\deg(\synd{\tl{\bs{e}}}-\synd{y})\leq d-2$, this completes the
proof. 
\end{proof}

\section{Conclusions and open questions} \label{sec:conclusion}
We presented a conceptually simple fast Chase decoding algorithm for
RS codes, building on the theory of \groebner{} bases for
$\efq[X]$-modules. Working with ``the right'' minimization
problem in an $\efq[X]$-module results in a considerably simplified
polynomial-update algorithm, which is also  automatically suited to
working with vectors of evaluations. Our algorithms are not tied to
the BM algorithm for HD initialization, and practically any
syndrome-based HD decoding algorithm can be used for this purpose.  

It should be noted that both Algorithm A and B can be easily converted
to a fast GMD algorithm, by simply omitting the derivative
iteration. For Algorithm B, this means that there is no longer a need
to maintain the vectors of evaluations of the derivatives. Moreover,
a fast application of combinations of GMD and Chase decoding can be
obtained in this way.

We conclude with some open questions:

\begin{itemize}


\item Any Chase decoding algorithm for GRS codes is automatically also
a Chase decoding algorithm for their subfield-subcodes, the {\it
alternant} codes, which include BCH codes as a special case. However,
in \cite{Wu12}, the polynomial-update algorithm for binary BCH codes is
simpler than that of the corresponding RS codes.  Is there a way to
further simplify Algorithm A of the current paper for the case of
binary BCH codes?\footnote{We note that in a 
companion work \cite{SK1}, using a completely different method, some
of the authors have devised a syndrome-based Chase decoding algorithm
for binary BCH codes that is both conceptually simple and updates
polynomials of a lower degree than those of Algorithm 5 of
\cite{Wu12}. However, for completeness, it is still an interesting
question whether the current algorithm can be further simplified in
the case of binary BCH codes.} 

\item Interestingly, Algorithms A and B remain valid also when the
total number of errors is $\geq d-1$, as long as the conditions of
Theorem \ref{thm:main} are satisfied.  Can this be of any practical
value in some cases?  Note that while the output list size grows
exponentially beyond $d-1$, this can be solved by adding a small
number of CRC bits, or even without CRC bits, when the RS code is part
of a {\it generalized concatenated code} \cite[Sec.~18.8.2]{MS}.

\item Is it possible to further reduce the complexity by
using fast algorithms for basis reduction of polynomial matrices,
e.g., as in \cite{N16}, \cite{RS21} and the references therein, or 
by using fast algorithms for structured linear algebra, e.g., as in 
\cite{CJNSV15}? 

\end{itemize}

\appendix

\section*{Appendix}

\section{The case of enough correct modifications}\label{app:indirect}
In this appendix, we consider the case mentioned near the end of
Section \ref{sec:basic}, that is, the case where, although
the tested error pattern is not a direct hit, the difference between
the number of correct indices and incorrect indices is at least
$\veps-t$. The main result is Proposition \ref{prop:indirect}, which
shows that in this case, the outputs of Algorithm A can still be used
for finding the correct transmitted codeword.

We begin with a remark that will be useful in the proof of Proposition
\ref{prop:indirect}.
\begin{remark}\label{rem:inclusion}
{\rm 
For distinct $\alpha_1,\ldots,\alpha_{r+1}\in \efq^*$ and for
$\beta_1,\ldots,\beta_{r+1}\in \efq^*$, let
$$
M_{r+\frac{1}{2}}:=
M_r(\synd{y},\alpha_1,\ldots,\alpha_{r},\beta_1,\ldots,\beta_{r}) \cap
\big\{(u,v)|v(\alpha_{r+1}^{-1})=0\big\}.
$$
Taking $v(X):=\prod_{i=1}^r (1-\alpha_iX)$, and letting
$u_0:=\synd{y}\cdot v$, there exists a polynomial $f(X)\in \efq[X]$
such that, setting $u_f:=u_0+X^{d-1}\cdot f$, the $r$ ``derivative
equations'' from part 2 of Definition \ref{def:mr} are satisfied for
$(u_f,v)$ (this is just an interpolation problem for $f$, and it
obviously has a solution for $f$ of high enough degree). For such a
choice of $f$, clearly $(u_f,v)\in M_r\smallsetminus
M_{r+\frac{1}{2}}$.  

Similarly, taking now $v(X):=\prod_{i=1}^{r+1} (1-\alpha_iX)$, and
letting again $u_0:=\synd{y}\cdot v$, there exists a polynomial
$f(X)\in \efq[X]$ such that, setting $u_f:=u_0+X^{d-1}\cdot f$, the $r$
equations from part 2 of Definition \ref{def:mr} are satisfied for
$(u_f,v)$, while $\beta_{r+1}a_{r+1}v'(\alpha_{r+1}^{-1})\neq
\alpha_{r+1}u_f(\alpha_{r+1}^{-1})$ (again, this is an interpolation
problem for $f$, now with a lot of freedom in the choice of
$f(\alpha_{r+1}^{-1})$). For such a choice of $f$, clearly $(u_f,v) \in
M_{r+\frac{1}{2}}\smallsetminus M_{r+1}$.

We conclude that 
\begin{equation}\label{eq:inclusion}
M_r\supsetneq M_{r+\frac{1}{2}}\supsetneq M_{r+1}.
\end{equation}
}
\end{remark}

\begin{proposition}\label{prop:indirect}
Consider a vertex $\bs{v}=\big((\alpha_{i_{1}},\beta_{i_{1}}),\ldots,
(\alpha_{i_r},\beta_{i_r})\big)$ of $T$. Let 
$$
S:=\{\ell\in\{1,\ldots,r\}|\text{the error value at $\alpha_{i_\ell}$
is not $\beta_{i_{\ell}}$}\},
$$
and let
$$
S_1:=\{\ell\in\{1,\ldots,r\}|\text{$\alpha_{i_\ell}$ is not an error
location}\}\subseteq S.
$$
Finally, let $S_2:=S\smallsetminus S_1$. For any vertex $\bs{u}$ of
the tree $T$, let $\big\{\bgplus_0(\bs{u}),\bgplus_1(\bs{u})\big\}$ be
the \groebner{} basis calculated inductively from the root to the vertex
$\bs{u}$ by applying Algorithm A on the edges.  
Then if $r-|S|\geq \veps -t+|S_1|$, then it holds that 
$$
\lm\big(\bgplus_1(\bs{v})\big) < \lm\big(\bgplus_0(\bs{v})\big),
$$
and
\begin{equation}\label{eq:indirect}
\bgplus_1(\bs{v})=c\cdot(\omega,\sigma)\cdot \prod_{\ell\in S_2}
(X-\alpha_{i_{\ell}}^{-1})\cdot \prod_{\ell\in
S_1}(X-\alpha_{i_{\ell}}^{-1})^2 
\end{equation}
for some $c\in \efq^*$.
Hence, writing $\bgplus_1(\bs{v})=(\gplus_{10},\gplus_{11})$,
\begin{equation}\label{eq:inddiv}
\frac{g^+_{11}(X)}{\prod_{\ell=1}^r (X-\alpha_{i_{\ell}}^{-1})} =
\frac{c\cdot\sigma(X)}{\prod_{\ell\notin S} (X-\alpha_{i_{\ell}}^{-1})}
\cdot \prod_{\ell\in S_1}(X-\alpha_{i_{\ell}}^{-1}
).
\end{equation} 
\end{proposition}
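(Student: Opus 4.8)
The plan is to prove Proposition~\ref{prop:indirect} by induction on $r$, mirroring the proof of part~2 of Theorem~\ref{thm:main} but with the error-locator and error-evaluator polynomials replaced by ``fattened'' versions that absorb the incorrect guesses. Concretely, for a vertex $\bs{v}$ as in the statement, set
$$
\sigma_{\bs{v}}(X):=\sigma(X)\cdot \prod_{\ell\in S_2}(X-\alpha_{i_\ell}^{-1})\cdot\prod_{\ell\in S_1}(X-\alpha_{i_\ell}^{-1})^2,
$$
and let $\omega_{\bs{v}}(X):=\omega(X)\cdot\prod_{\ell\in S_2}(X-\alpha_{i_\ell}^{-1})\cdot\prod_{\ell\in S_1}(X-\alpha_{i_\ell}^{-1})^2$ be the corresponding numerator (note $\gcd$ is not $1$ here, which is exactly why we need a separate argument rather than a direct citation of Proposition~\ref{prop:rzero}). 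The first thing I would check is that $(\omega_{\bs{v}},\sigma_{\bs{v}})$ actually lies in the module $M_r$ attached to $\bs{v}$: the key-equation congruence is immediate since we multiplied both coordinates by the same polynomial, the root conditions $\sigma_{\bs{v}}(\alpha_{i_\ell}^{-1})=0$ hold for every $\ell$ (the factor $(X-\alpha_{i_\ell}^{-1})$ appears for $\ell\in S$ explicitly and for $\ell\notin S$ it is already a factor of $\sigma$), and the derivative conditions need a short computation using that each $\alpha_{i_\ell}^{-1}$ with $\ell\in S_1$ is a \emph{double} root of $\sigma_{\bs{v}}$, so $\sigma_{\bs{v}}'(\alpha_{i_\ell}^{-1})=0$ and the required identity $\beta_{i_\ell}a_{i_\ell}\sigma_{\bs{v}}'(\alpha_{i_\ell}^{-1})=-\alpha_{i_\ell}\omega_{\bs{v}}(\alpha_{i_\ell}^{-1})$ reduces to $0=0$; for $\ell\in S_2$ one uses Forney's formula for the true pair $(\omega,\sigma)$ together with the product rule, and the factor $(X-\alpha_{i_\ell}^{-1})$ makes $\omega_{\bs{v}}(\alpha_{i_\ell}^{-1})$ vanish as well, so again both sides are $0$. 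Thus $(\omega_{\bs{v}},\sigma_{\bs{v}})\in M_r$.

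Next I would establish minimality: $\lm(\omega_{\bs{v}},\sigma_{\bs{v}})=\min\{\lm(u,v)\mid (u,v)\in M_r\smallsetminus\{0\}\}$, and that this minimum contains the unit vector $(0,1)$, which gives the first displayed inequality $\lm(\bgplus_1(\bs{v}))<\lm(\bgplus_0(\bs{v}))$ once we invoke Proposition~\ref{prop:koetter} (the second \groebner-basis element has leading monomial containing $(0,1)$) and Proposition~\ref{prop:unique} (uniqueness of the minimal element up to a scalar), yielding~(\ref{eq:indirect}). The minimality proof is the induction and is the place where the hypothesis $r-|S|\ge \veps-t+|S_1|$ is consumed. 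The base case $r=0$ forces $S=S_1=S_2=\emptyset$ and is exactly Proposition~\ref{prop:rzero} under the hypothesis $\veps\le t$. For the inductive step I would pass from $\bs{v}$ to its parent $\bs{v}'$ in $T$, which differs by removing one guessed pair $(\alpha_{i_r},\beta_{i_r})$, and split into cases according to whether $\alpha_{i_r}$ is a correct location with correct value (so $r\notin S$), a correct location with wrong value ($r\in S_2$), or not an error location at all ($r\in S_1$). In each case I would relate $M_r$ to $M_{r-1}$ at $\bs{v}'$: going down one level lowers $\deg\sigma_{\bs{v}}$ by $1$ when $r\notin S$ (we strip a genuine factor $1-\alpha_{i_r}X$), by $1$ when $r\in S_2$ (we strip one of the two extra factors — wait, no: $r\in S_2$ means at $\bs{v}'$ that index is simply not guessed, so $\sigma_{\bs{v}'}$ has one fewer factor $(X-\alpha_{i_r}^{-1})$, hence degree drops by $1$), and by $2$ when $r\in S_1$ (we strip the squared factor). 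Correspondingly the bound $r-|S|\ge\veps-t+|S_1|$ for $\bs{v}$ follows from the analogous bound for $\bs{v}'$ in every case, so the induction hypothesis applies at $\bs{v}'$.

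The technical heart, and the step I expect to be the main obstacle, is reproducing the map $\psi$ from the Lemma in the proof of Theorem~\ref{thm:main} in this fattened setting and pushing the leading-monomial comparison through it. One should define, for $(u,v)\in M_r$ at $\bs{v}$, an analogue $\psi(u,v)=(\tl h,\tl v)$ landing in $M_{r-1}$ at $\bs{v}'$, where the subtraction $u-\beta_{i_r}a_{i_r}\tl v$ that produced the divisibility by $(1-\alpha_{i_r}X)$ in the original proof has to be adapted: in the case $r\in S_2$ the relevant guessed value $\beta_{i_r}$ is not the true error value, so the clean cancellation $h(\alpha_{i_r}^{-1})=0$ used before need not hold for a general element of the module, and one instead has to exploit the membership constraint $\beta_{i_r}a_{i_r}v'(\alpha_{i_r}^{-1})=-\alpha_{i_r}u(\alpha_{i_r}^{-1})$ directly; in the case $r\in S_1$ one has to divide by $(1-\alpha_{i_r}X)$ only once (not twice), tracking carefully how the double-root condition on the distinguished element degrades to a single-root condition at the parent. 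The degree bookkeeping — showing that if $v$ is not a scalar multiple of $\sigma_{\bs{v}}$ then $\lm(u,v)>\lm(\omega_{\bs{v}},\sigma_{\bs{v}})$ — then proceeds exactly as the final paragraph of the proof of Theorem~\ref{thm:main}, splitting on whether $\lm(\psi(u,v))$ contains $(1,0)$ or $(0,1)$ and translating the strict inequality back via $\deg h\ge \deg\sigma_{\bs{v}'}\Rightarrow\deg u\ge \deg\sigma_{\bs{v}}$ or $\deg\tl v\ge\deg\sigma_{\bs{v}}$. Finally, (\ref{eq:inddiv}) is pure algebra: divide both sides of~(\ref{eq:indirect}) by $\prod_{\ell=1}^r(X-\alpha_{i_\ell}^{-1})$, note that this product is $\prod_{\ell\in S}(X-\alpha_{i_\ell}^{-1})\cdot\prod_{\ell\notin S}(X-\alpha_{i_\ell}^{-1})$ while the $\ell\notin S$ factors cancel against those hidden inside $\sigma$, and collect the leftover $S_1$ factors, using $S=S_1\sqcup S_2$, to land on the claimed expression.
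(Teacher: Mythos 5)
Your overall strategy is genuinely different from the paper's. You propose to exhibit the ``fattened'' pair
$(\omega_{\bs{v}},\sigma_{\bs{v}}):=(\omega,\sigma)\cdot\prod_{\ell\in S_2}(X-\alpha_{i_\ell}^{-1})\prod_{\ell\in S_1}(X-\alpha_{i_\ell}^{-1})^2$
as an element of $M_r$, prove that its leading monomial is minimal in $M_r\smallsetminus\{0\}$, and then conclude (\ref{eq:indirect}) from Propositions~\ref{prop:koetter} and~\ref{prop:unique}. The paper instead \emph{traces Algorithm A operationally}: it reorders the pairs so that the correct modifications come first, applies Theorem~\ref{thm:main} up to depth $r-|S|-|S_1|$ (where the hypothesis $\dham\le t+(\cdot)$ still applies), and then, on each remaining edge, decides from the strict inclusions of Remark~\ref{rem:inclusion} and an explicit Forney-formula computation whether $\Delta_1$ is zero or not in the root and derivative iterations, thus reading off which basis element absorbs the factor $(X-\alpha_{i_\ell}^{-1})$. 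Your approach, if completed, would be the cleaner of the two, but the heart of it --- the minimality claim --- is left as an announced induction whose step you never carry out, and the sketch of that step has substantive errors.

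Concretely:
(1) For $\ell=r\notin S$ the parent polynomial satisfies $\sigma_{\bs{v}'}=\sigma_{\bs{v}}$ --- the true ELP $\sigma$ is unchanged when a correct guess is dropped, and the extra factors indexed by $S$ are untouched --- so your claim that ``going down one level lowers $\deg\sigma_{\bs{v}}$ by $1$ when $r\notin S$'' is false and the degree bookkeeping does not close.
(2) For \emph{any} $(u,v)\in M_r$, the derivative constraint together with $v(\alpha_{i_r}^{-1})=0$ (which gives $v'(\alpha_{i_r}^{-1})=-\alpha_{i_r}\tl v(\alpha_{i_r}^{-1})$) forces $u(\alpha_{i_r}^{-1})-\beta_{i_r}a_{i_r}\tl v(\alpha_{i_r}^{-1})=0$ \emph{regardless} of whether $\beta_{i_r}$ equals the true error value; so the divisibility of $h$ by $(1-\alpha_{i_r}X)$ is not the place where $S_2$ diverges.
(3) The real obstruction to reusing $\psi$ is different: $\psi$ sends $M_r(\synd{y},\ldots)$ into a module built on the \emph{modified syndrome} $\syndtly=\synd{y}-\beta_{i_r}a_{i_r}/(1-\alpha_{i_r}X)$, not into $M_{r-1}(\synd{y},\ldots)$ attached to the parent vertex $\bs{v}'$ of the same tree $T$. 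Your induction is over vertices of $T$ at fixed syndrome, so $\psi(M_r(\bs{v}))\not\subseteq M_{r-1}(\bs{v}')$ and the proposed reduction to $\bs{v}'$ does not go through; one would have to rework the induction so that the syndrome, the true ELP and EEP, and the ambient module all change on each step --- in effect re-proving a more general version of Theorem~\ref{thm:main}, which is most of the work. The parts you do complete --- membership of $(\omega_{\bs{v}},\sigma_{\bs{v}})$ in $M_r$ via the double-root/Forney argument, and the derivation of (\ref{eq:inddiv}) from (\ref{eq:indirect}) --- are correct, but the minimality half of your plan is missing.
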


\begin{remarknn} 
{\rm
Equation (\ref{eq:inddiv}) means that $g^+_{11}(X)/\prod_{\ell=1}^r
(X-\alpha_{i_{\ell}}^{-1})$ is an ``effective ELP'' corresponding to the
modification sequence in $\bs{v}$: Correct modifications are
canceled out from $\sigma$, wrong modification at correct locations
have no effect, while modification at locations without errors
effectively add error locations. 
}
\end{remarknn}

\begin{proof}[Proof of Proposition \ref{prop:indirect}]
Observe that $|S_1|$ is the number of wrongly-modified correct
coordinates for   $\bs{v}$, while $r-|S|$ is the number of
correctly-modified erroneous coordinates. Write $\delta:=|S|$,
$\delta_1:=|S_1|$. Modifying the order of the pairs defining $\bs{v}$
does not change the corresponding module
$$
M_{r}(\synd{y},\alpha_{i_1},\ldots,
\alpha_{i_r},\beta_{i_1},\ldots,\beta_{i_r}), 
$$
and hence also does not change the unique minimal element (by
Proposition \ref{prop:unique}). We may therefore assume  
w.l.o.g.~that $S=\{r-\delta+1,\ldots,r\}$, and that
$S_1=\{r-\delta_1+1,\ldots,r\}$. Hence, if 
$\ell$ is one of the $\delta-\delta_1$ smallest elements of $S$, then
$\alpha_{i_{\ell}}$ is an error location and $\beta_{i_{\ell}}$ is not
the corresponding error value. Similarly, if $\ell$ is one of
the $\delta_1$ largest elements of $S$, then $\alpha_{i_{\ell}}$ is
not an error location.

The idea of the proof is to trace the updates in Algorithm A, and
(loosely speaking) to show that for $\ell\in S_2$, $\bgplus_1$ is
multiplied once by $(X-\alpha_{i_{\ell}}^{-1})$, while for $\ell\in
S_1$, $\bgplus_1$ is multiplied twice by
$(X-\alpha_{i_{\ell}}^{-1})$. 

By assumption, the first
$r-\delta$ pairs in $\bs{v}$ are correct error locations and
corresponding values.  As we also assume that $r-\delta-\delta_1\geq
\veps-t$, it follows from Theorem \ref{thm:main} that when moving from
the root of $T$ to the vertex
$\bs{v}':=\big((\alpha_{i_{1}},\beta_{i_{1}}),\ldots,
(\alpha_{i_{r-\delta-\delta_1}},\beta_{i_{r-\delta-\delta_1)}})\big)$
at depth $r-\delta-\delta_1$, we have $\bgplus_1(\bs{v}')=c\cdot
(\omega,\sigma)$ for some $c\neq 0$. Moreover,
\begin{equation}\label{eq:vtag}
\lm(\bgplus_1(\bs{v}'))<\lm(\bgplus_0(\bs{v}')).
\end{equation}    

Now, for the next $\delta_1$ edges on the path from $\bs{v}'$ to
$\bs{v}$, we still have correctly-modified coordinates. Hence, in
Algorithm A, $\Delta_1=0$ for both the root and derivative iterations, and only
$\bs{g}_0$ might be modified in all of the corresponding $2\delta_1$ root
and derivative iterations. Moreover, $\bs{g}_0$ is indeed modified in each
and every one of the iterations, for otherwise the \groebner{} basis would
be unchanged, and hence the generated module would be
unchanged, contradicting (\ref{eq:inclusion}). 
Hence, writing $\bs{v}'':=\big((\alpha_{i_{1}},\beta_{i_{1}}),\ldots,
(\alpha_{i_{r-\delta}},\beta_{i_{r-\delta}})\big)$, we have
\begin{equation}\label{eq:vtt}
\big(\bgplus_0(\bs{v}''),\bgplus_1(\bs{v}'')\big) =
\Big(\bgplus_0(\bs{v}')\cdot\prod_{\ell=r-\delta-\delta_1+1}^{r-\delta}
(X-\alpha_{i_{\ell}}^{-1})^2, \bgplus_1(\bs{v}')\Big).
\end{equation}

It is now left to consider the last $\delta$ applications of Algorithm
A, on the path from $\bs{v}''$ to $\bs{v}$. 
Write
$\bs{v}_{r-\delta}=\bs{v}'',\bs{v}_{r-\delta+1},
\ldots,\bs{v}_{r}=\bs{v}$ for the consecutive vertices on the path
from $\bs{v}''$ to $\bs{v}$. We first prove by induction that for all
$\ell'\in \{r-\delta,\ldots,r-\delta_1\}$, 
\begin{equation}\label{eq:stwoa}
\lm(\bgplus_1(\bs{v}_{\ell'})) < \lm(\bgplus_0(\bs{v}_{\ell'})),
\end{equation}
and
\begin{eqnarray}
\bgplus_1(\bs{v}_{\ell'}) & = & \bgplus_1(\bs{v}') \cdot \prod_{\ell= 
r-\delta+1}^{\ell'} (X-\alpha_{i_{\ell}}^{-1})\nonumber\\
 & = & c\cdot(\omega,\sigma)\cdot \prod_{\ell= r-\delta+1}^{\ell'}
(X-\alpha_{i_{\ell}}^{-1}). \label{eq:stwob} 
\end{eqnarray}

The basis of induction, for $\ell'=r-\delta$ (where the product on
the  right of (\ref{eq:stwob}) is empty), follows from (\ref{eq:vtt})
and (\ref{eq:vtag}). For the step, assume 
that $\ell'\in\{r-\delta+1,\ldots,r-\delta_1\}$, and that
(\ref{eq:stwoa}), (\ref{eq:stwob}) hold for $\ell'-1$. As
$\alpha_{i_{\ell'}}$ is an error location, it follows from the
induction hypothesis that in the root iteration of Algorithm A,
$\Delta_1=0$, and consequently, $\bgplus_1=\bs{g}_1$. 

We claim that in the derivative iteration, $\Delta_1\neq 0$. For this,
let $\beta$ be the correct error value for the (correct) error
location $\alpha_{i_{\ell'}}$. Write  
$$
M:=M_{r-\delta+1}(\synd{y},\alpha_{i_1},\ldots,
\alpha_{i_{r-\delta}},\underline{\alpha_{i_{\ell'}}},\beta_{i_1},\ldots,
\beta_{i_{r-\delta}},\underline{\beta}).
$$  
Then clearly $(\omega,\sigma)\in M$, and since by the induction
hypothesis $\bgplus_1(\bs{v}_{\ell'-1})$ is obtained by multiplying
$(\omega,\sigma)$ by  a scalar polynomial,
$\bgplus_1(\bs{v}_{\ell'-1})$ is also in the module $M$. Hence
\begin{equation}\label{eq:fichs}
\beta \cdot a_{i_{\ell'}}
[g_{11}^+(\bs{v}_{\ell'-1})]'(\alpha_{i_{\ell'}}^{-1}) +
\alpha_{i_{\ell'}}[g_{10}^+(\bs{v}_{\ell'-1})](\alpha_{i_{\ell'}}^{-1})=0,
\end{equation}
where for $i\in\{0,1\}$ and for a vertex $\bs{u}$ of
$T$, we write
$\bgplus_i(\bs{u})=(g_{i0}^+(\bs{u}),g_{i1}^+(\bs{u}))$. 
Since it can be verified by the induction hypothesis that
$[g_{11}^+(\bs{v}_{\ell'-1})]'(\alpha_{i_{\ell'}}^{-1})\neq
0$,\footnote{Note that the induction hypothesis implies that
$g_{11}^+(\bs{v}_{\ell'-1})=f(X)\cdot\sigma(X)$ for some $f$ with
$f(\alpha_{i_{\ell'}}^{-1})\neq 0$.} it follows that replacing
$\beta$ by $\beta_{i_{\ell'}}\neq \beta$ on the left-hand side of
(\ref{eq:fichs}) will result in a non-zero value. This completes the
proof that $\Delta_1\neq 0$ on the derivative iteration, and hence,
using the induction  hypothesis for (\ref{eq:stwoa}),
(\ref{eq:stwob}), proves (\ref{eq:stwob}) for the induction step. 

For (\ref{eq:stwoa}), note that since in the root iteration
$\bgplus_1=\bgplus_1(\bs{v}_{\ell'-1})$ (that is,
$\bgplus_1=\bs{g}_1$), it follows from (\ref{eq:inclusion}) that

\begin{equation}\label{eq:bgplus}
\bgplus_0=\bgplus_0(\bs{v}_{\ell'-1})\cdot(X-\alpha_{i_{\ell'}}^{-1}).
\end{equation}

Hence, the induction hypothesis implies that $\lm(\bgplus_0)>X
\lm(\bgplus_1)$, and therefore after the derivative iteration 
it necessarily holds that $\lm(\bgplus_0)>\lm(\bgplus_1)$. This
completes the induction step for (\ref{eq:stwoa}).

Using (\ref{eq:vtt}) and (\ref{eq:bgplus}), and noting that by the
above it holds that $\lm(\bgplus_0)=\lm(\bs{g}_0)$ in the derivative
iteration, it also follows by induction that for all
$\ell'\in\{r-\delta+1,\ldots,r-\delta_1\}$, 

\begin{equation}\label{eq:a}
\lm\big(\bgplus_0(\bs{v}_{\ell'})\big) = \lm\Big(\bgplus_0(\bs{v}') \cdot \prod_{\ell= 
r-\delta-\delta_1+1}^{r-\delta} (X-\alpha_{i_{\ell}}^{-1})^2 \cdot
\prod_{\ell=  r-\delta+1}^{\ell'} (X-\alpha_{i_{\ell}}^{-1})\Big),
\end{equation} 
where we have used $\lm (f\cdot\bs{h}) = X^{\deg(f)}\cdot
\lm(\bs{h})$ for $f\in \efq[X]$ and $\bs{h}\in \efq[X]^2$.

To complete the proof, we will prove by induction that for all
$\ell'\in \{r-\delta_1,\ldots,r\}$, $\lm(\bgplus_1(\bs{v}_{\ell'})) <
\lm(\bgplus_0(\bs{v}_{\ell'}))$ and
\begin{equation}\label{eq:soneb}
\bgplus_1(\bs{v}_{\ell'})=c\cdot(\omega,\sigma) \cdot
\prod_{\ell=r-\delta+1}^{r-\delta_1} (X-\alpha_{i_{\ell}}^{-1}) \cdot 
\prod_{\ell=r-\delta_1+1}^{\ell'}(X-\alpha_{i_{\ell}}^{-1})^2 . 
\end{equation} 
The basis of the induction, for $\ell'=r-\delta_1$, follows from
(\ref{eq:stwoa}), (\ref{eq:stwob}). 

To continue, recall that in both the root and the derivative iterations
of algorithm A, if the leading monomial $\bs{m}$ of one of the pairs
is changed, then it is changed to $X\bs{m}$.
Hence, it follows from substituting $\ell'=r-\delta_1$ in
(\ref{eq:stwob}), (\ref{eq:a}) and the fact that 
$\lm(\bgplus_1(\bs{v}'))<\lm(\bgplus_0(\bs{v}'))$,
that for all $\ell'\in\{r-\delta_1+1,\ldots, r\}$, it holds that
$\lm(\bgplus_1)<\lm(\bgplus_0)$ for both the root and derivative
iterations of Algorithm A.\footnote{In detail, note that
$\lm(\bgplus_0(\bs{v}_{r-\delta_1})) =
X^{2\delta_1+\delta-\delta_1}\cdot\lm(\bgplus_0(\bs{v}'))$, 
while
$\lm(\bgplus_1(\bs{v}_{r-\delta_1})) =
X^{\delta-\delta_1}\cdot\lm(\bgplus_1(\bs{v}'))$. 
Hence, for all $\ell'\in\{r-\delta_1+1,\ldots,r\}$, we have 
\begin{eqnarray*}
\lm(\bgplus_1(\bs{v}_{\ell'})) & \leq &
X^{2\big(\ell'-(r-\delta_1)\big)}
X^{\delta-\delta_1}\lm(\bgplus_1(\bs{v}')) 
\\ & \leq & X^{\delta+\delta_1} \lm(\bgplus_1(\bs{v}')) 
\text{ (substituting $\ell'=r$)}
\\ & < & \lm(\bgplus_0(\bs{v}_{r-\delta_1})) \leq
\lm(\bgplus_0(\bs{v}_{\ell'})) . 
\end{eqnarray*}
}

Hence, for $\ell'\in \{r-\delta_1,\ldots,r-1\}$ there are only three
possible ways in which $g_1^+(\bs{v}_{\ell'})$ can be updated 
to $g_1^+(\bs{v}_{\ell'+1})$: (1)
$g_1^+(\bs{v}_{\ell'+1})=g_1^+(\bs{v}_{\ell'})$, (2)
$g_1^+(\bs{v}_{\ell'+1})=g_1^+(\bs{v}_{\ell'})\cdot
(X-\alpha_{i_{\ell'+1}}^{-1})$, or (3)
$g_1^+(\bs{v}_{\ell'+1})=g_1^+(\bs{v}_{\ell'})\cdot
(X-\alpha_{i_{\ell'+1}}^{-1})^2$.   

For the induction step of the proof of (\ref{eq:soneb}), assume that 
$\ell'\in\{r-\delta_1,\ldots,r-1\}$, and that    
(\ref{eq:soneb}) holds for $\ell'$. Considering options (1)--(3) 
above, it is sufficient to prove that when moving from
$\bs{v}_{\ell'}$ to $\bs{v}_{\ell'+1}$, it holds that $\Delta_1\neq 0$
for both the root and derivative iterations of Algorithm A. 

As $\alpha_{i_{\ell'+1}}$ is not an error location, it follows
from the induction hypothesis that 
$\alpha_{i_{\ell'+1}}$ is not a root of $g_{11}^+(\bs{v}_{\ell'})$, and
therefore $\Delta_1\neq 0$ in the root
iteration. Hence, at the end of the root iteration, we have
\begin{equation}\label{eq:mid}
\bgplus_1=\bgplus_1(\bs{v}_{\ell'})\cdot (X-\alpha_{i_{\ell'+1}}^{-1}).
\end{equation}
Therefore,
$$
[g_{11}^+]'(\alpha_{i_{\ell'+1}}^{-1})=
[g_{11}^+(\bs{v}_{\ell'})](\alpha_{i_{\ell'+1}}^{-1}) \neq 0,
$$
where the last inequality follows again from the induction
hypothesis. Also, it follows from (\ref{eq:mid}) that
$g_{10}^+(\alpha_{i_{\ell'+1}}^{-1})=0$, and finally that $\Delta_1\neq
0$ in the derivative iteration, as required.
\end{proof}

\section{Simplifications for Algorithm A}\label{app:simplifications}

\subsection{Moving from two pairs of polynomials to two
polynomials}\label{app:pairs} 
In Algorithm A, two \emph{pairs} of polynomials  
have to be maintained, rather than just two polynomials. In the above
form, the algorithm will work even if $\veps\geq 2t$, where $\veps$ is
the total number of errors. However, as we shall now see, if
$\veps\leq 2t-1$, then there is no need to maintain 
the first coordinate of the \groebner{} basis. 

In order to omit the first entry in each pair, we
have to consider the following questions:
\begin{enumerate}

\item How can we efficiently calculate $g_{j0}(\alpha_r^{-1})$
($j\in\{0,1\}$) when only $g_{j1}$ is available?

\item How can we find $\lm(\bs{g}_0)$ without maintaining $g_{00}$
(recall that the leading monomial of $\bs{g}_0$ is on the left)?

\end{enumerate} 

The answer to the second question is almost trivial: Introduce a
variable $d_0$ to track the degree of $g_{00}$. Whenever $j^*=0$,
increase $d_0$ by $1$, and in all other cases keep $d_0$
unchanged (note that when $0\in J$ but $0\neq j^*$, 
$\lm(\bgplus_0)=\lm(\bs{g}_0)$, which justifies keeping $d_0$
unchanged). Now $\lm(\bs{g}_0)=(X^{d_0},0)$. 

So, let us turn to the first question. We know that for all $r$ and
all $(u,v)\in
M_r(\synd{y},\alpha_1,\ldots,\alpha_r,\beta_1,\ldots,\beta_r)$, we
have $u\equiv \synd{y}v \mod (X^{2t})$, and hence one can calculate
$u(\alpha_r^{-1})$ directly from $v$ if $\deg(u)\leq 2t-1$ (see
ahead). So, our first task is to verify that if $\veps\leq 2t-1$ 
(so that $r\leq 2t-1-t=t-1$), we have $\deg(g_{10})\leq 2t-1$ and
$\deg(g_{20})\leq 2t-1$ for all \koetter{}'s iterations involved in fast
Chase decoding, assuming the hypotheses of Theorem \ref{thm:main}
hold.  

We will first need a small modification of the first part of
\cite[Prop.~2]{BHNW13}. To 
keep this paper self-contained, we will also include the proof.  From
this point on, we will say that a monomial in $\efq[X]^2$ is {\it on the
left} if it contains the unit vector $(1,0)$, and {\it on the right}
if it contains the unit vector $(0,1)$. 

\begin{proposition}[\cite{BHNW13}]\label{prop:degrees}
Let $\{\bs{h}_0=(h_{00},h_{01}),\bs{h}_1=(h_{10},h_{11})\}$ be a
\groebner{} basis for $M_0$ with respect to the monomial ordering $<$,
and suppose that the leading monomial of $\bs{h}_0$ is on the left, while
the leading monomial of $\bs{h}_1$ is on the right. Then
$\deg(h_{00}(X))+\deg(h_{11}(X)) = 2t$. 
\end{proposition}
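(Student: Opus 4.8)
The plan is to use the standard structure theory of Gröbner bases for submodules of $\efq[X]^2$, together with the fact that $M_0$ has rank $2$ and a particularly simple presentation. First I would observe that $M_0 = M_0(\synd{y})$ contains the two elements $(X^{d-1},0)$ and $(\synd{y},1)$, and that these already generate $M_0$ as an $\efq[X]$-module: indeed if $(u,v)\in M_0$ then $u-\synd{y}v = X^{d-1}f$ for some $f\in\efq[X]$, so $(u,v) = v\cdot(\synd{y},1) + f\cdot(X^{d-1},0)$. Hence $\{(X^{d-1},0),(\synd{y},1)\}$ is a basis of the free module $M_0$, and in particular any two elements of a Gröbner basis whose leading terms lie on the left and on the right respectively must themselves form a \emph{basis} (a Gröbner basis is a generating set, its two leading monomials contain distinct unit vectors by the footnote argument in Section~\ref{sec:koetter}, so two elements suffice to generate a rank-$2$ free module).

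The key step is then a determinant/degree count. Since both $\{(X^{d-1},0),(\synd{y},1)\}$ and $\{\bs{h}_0,\bs{h}_1\}$ are bases of the same free module $M_0$, there is a matrix $P\in\gl_2(\efq[X])$, i.e.\ with $\det P\in\efq^*$, carrying one basis to the other. Writing the change of basis out,
\begin{equation}\label{eq:detcount}
\begin{pmatrix} h_{00} & h_{01} \\ h_{10} & h_{11}\end{pmatrix}
= P\cdot \begin{pmatrix} X^{d-1} & 0 \\ \synd{y} & 1\end{pmatrix},
\end{equation}
so taking determinants gives $h_{00}h_{11} - h_{01}h_{10} = (\det P)\cdot X^{d-1}$, a nonzero scalar times $X^{d-1}$. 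Hence $\deg(h_{00}h_{11} - h_{01}h_{10}) = d-1 = 2t$. It remains to show that in this difference the term $h_{00}h_{11}$ strictly dominates, i.e.\ that $\deg(h_{00})+\deg(h_{11}) = 2t$ and $\deg(h_{01})+\deg(h_{10}) < 2t$.

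For that last point I would use the hypothesis on the leading monomials together with the definition of the monomial order $<$. Since $\lm(\bs{h}_0)$ is on the left, $\deg(h_{00}) \geq \deg(h_{01})+1$ (recall $(X^i,0)<(0,X^j)$ iff $i\le j-1$, so to have the left component win we need $\deg h_{00} \ge \deg h_{01}+1$); since $\lm(\bs{h}_1)$ is on the right, $\deg(h_{11}) \geq \deg(h_{10})$. Adding these, $\deg(h_{00})+\deg(h_{11}) \geq \deg(h_{01})+\deg(h_{10}) + 1 > \deg(h_{01})+\deg(h_{10})$, so the products $h_{00}h_{11}$ and $h_{01}h_{10}$ have different degrees and no cancellation can occur at the top degree; therefore $\deg(h_{00}h_{11}-h_{01}h_{10}) = \max\{\deg(h_{00})+\deg(h_{11}),\ \deg(h_{01})+\deg(h_{10})\} = \deg(h_{00})+\deg(h_{11})$, which we have just shown equals $2t$. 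The main obstacle — really the only non-formal point — is justifying that a Gröbner basis of $M_0$ with leading terms on opposite sides is genuinely a basis of the free module, so that $P$ is invertible over $\efq[X]$; this is where the rank-$2$ freeness of $M_0$ and the distinct-unit-vector property of Gröbner-basis leading monomials are both needed, exactly as in the footnote in Section~\ref{sec:koetter}.
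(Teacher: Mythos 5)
Your proof is correct and reaches the same conclusion as the paper, but by a genuinely different route. The paper's argument first shows that $\gcd(h_{01},h_{11})=1$ (because $(\synd{y},1)$ lies in the span of the basis), then identifies $h_{11}\bs{h}_0-h_{01}\bs{h}_1$ as the element of minimal first-coordinate degree among elements of $M_0$ with vanishing second coordinate; since such elements are exactly the multiples of $(X^{2t},0)$, this forces $\deg(h_{11}h_{00}-h_{01}h_{10})=2t$. Your argument instead observes that $\{(X^{2t},0),(\synd{y},1)\}$ and $\{\bs{h}_0,\bs{h}_1\}$ are both bases of the free rank-$2$ module $M_0$, so the change-of-basis matrix $P$ lies in $\gl_2(\efq[X])$, and taking determinants gives $h_{00}h_{11}-h_{01}h_{10}=(\det P)X^{2t}$ with $\det P\in\efq^*$ directly. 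This is a cleaner and more conceptual route to the same identity $h_{00}h_{11}-h_{01}h_{10}=cX^{2t}$ (with $c\neq0$), and it avoids the gcd step; the price is that you must justify that a $2$-element generating set of a free rank-$2$ module over $\efq[X]$ is automatically a basis, which is true (surjective endomorphisms of finitely generated modules over a commutative ring are injective) and which you correctly flag as the key non-formal point. Both proofs then conclude with the same degree comparison.

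One small slip in your degree count: from the definition of $<$, having $\lm(\bs{h}_0)$ on the left means $(X^{\deg h_{00}},0)>(0,X^{\deg h_{01}})$, i.e.\ $\deg(h_{00})\geq\deg(h_{01})$ (no ``$+1$''); and having $\lm(\bs{h}_1)$ on the right means $(X^{\deg h_{10}},0)<(0,X^{\deg h_{11}})$, i.e.\ $\deg(h_{10})\leq\deg(h_{11})-1$, so $\deg(h_{11})\geq\deg(h_{10})+1$. You attached the ``$+1$'' to the first inequality rather than the second. Since you only use the sum $\deg(h_{00})+\deg(h_{11})\geq\deg(h_{01})+\deg(h_{10})+1$, your conclusion is unaffected, but the individual inequalities as you state them are not what the order $<$ gives.
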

\begin{proof}
Since $(\synd{y},1)$ is in the $\efq[X]$-span of
$\{\bs{h}_0,\bs{h}_1\}$, it follows that $1\in (h_{01},h_{11})$, and
hence that $h_{01}$ and $h_{11}$ are relatively prime. Now suppose
that $\alpha(X),\beta(X)\in \efq[X]$ are such that 
$\alpha(X)\bs{h}_0-\beta(X)\bs{h}_1=(\gamma(X),0)$ for some
$\gamma(X)$. Then $\alpha(X)h_{01}(X)=\beta(X)h_{11}(X)$, and because
$\gcd(h_{01},h_{11})=1$, this implies that $h_{11}(X)|\alpha(X)$,
$h_{01}(X)|\beta(X)$, 
$$
\frac{\alpha(X)}{h_{11}(X)}=\frac{\beta(X)}{h_{01}(X)}
$$
and these two equal rational functions are in fact a polynomial in
$\efq[X]$. Write $r(X)\in \efq[X]$ for this polynomial. Let
$\pi_0\colon \efq[X]^2\to\efq[X]$ be the projection to the first coordinate.
Now, the second coordinate of the vector
$$
\bs{f}:=h_{11}(X)\bs{h}_0-h_{01}(X)\bs{h}_1\in M_0
$$
is $0$. Also, for any $\alpha(X),\beta(X)$ as above, it follows from
the definition of $r(X)$ that
$$
\alpha(X)\bs{h}_{0}-\beta(X)\bs{h}_{1} = r(X)\cdot \bs{f}.
$$

This shows that $\pi_0(\bs{f})$ has the lowest degree in $\pi_0\big(M_0\cap
(\efq[X]\times\{0\})\big)$. Now, as $M_0$ is generated as an $\efq[X]$-module by
$\{(X^{2t},0),(\synd{y}(X),1)\}$, we know that this lowest degree is
$2t$. Hence $\deg(\pi_0(\bs{f}))=2t$. Now,
\begin{eqnarray*}
\deg(\pi_0(\bs{f})) & = & \deg\big(h_{11}(X)h_{00}(X)-h_{01}(X)h_{10}(X)\big)\\
           & = & \deg\big(h_{11}(X)h_{00}(X)\big),
\end{eqnarray*}
because by assumption $\deg(h_{11})\geq \deg(h_{10})+1$ and
$\deg(h_{00})> \deg(h_{01})-1$, so that
$\deg(h_{11}h_{00})>\deg(h_{01}h_{10})$. 
\end{proof}

With Proposition \ref{prop:degrees}, we can now prove that for all
iterations of \koetter{}'s algorithm,
$\deg(g_{10})\leq 2t-1$ and $\deg(g_{20})\leq 2t-1$ when
$\veps\leq 2t-1$. Before the proof, it will be useful to introduce
some additional notation.

\begin{definition}\label{def:itau}
{\rm
For $i=1,\ldots,r$, $j\in\{0,1\}$, and ${\tau}\in\{\broot,\bder\}$ write
$\bs{g}_j(i;{\tau})=(g_{j0}(i;{\tau}),g_{j1}(i;{\tau}))$ and
$\bgplus_j(i;{\tau})=(\gplus_{j0}(i;{\tau}),\gplus_{j1}(i;{\tau}))$ 
for the values in the root iteration (${\tau}=\broot$) or the derivative iteration
(${\tau}=\bder$) of Algorithm A corresponding to adjoining error location
$\alpha_i$. By convention, $\{\bs{g}_0(1;\broot),\bs{g}_1(1;\broot)\}$
is a  \groebner{} basis for $M_0$ with $\lm(\bs{g}_0(1;\broot))$ on the
left and $\lm(\bs{g}_1(1;\broot))$ on the right. Note that for all
$i$, $\bs{g}_j(i,\bder)=\bgplus_j(i,\broot)$ ($j=1,2$), and for all 
$i\geq 2$, $\bs{g}_j(i,\broot)=\bgplus_{j}(i-1,\bder)$ ($j=1,2$).
}
\end{definition}

\begin{proposition}\label{prop:okmod}
Suppose that the condition in part 2 of Theorem \ref{thm:main} holds,
and that the total number $\veps$ of errors is exactly $t+r$.
Then for all $i\in\{1,\ldots,r\}$, all $j\in\{0,1\}$ and all
${\tau}\in\{\broot,\bder\}$, $\deg(g^+_{j0}(i;{\tau}))\leq \veps$ and 
$\deg(g^+_{j1}(i;{\tau}))\leq \veps$. 
\end{proposition}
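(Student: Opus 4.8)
The goal is to bound, for every intermediate \koetter\ iteration along a correct decoding path, the degrees of the four scalar polynomials constituting the running \groebner\ basis. The natural strategy is induction on $i$ (and within each $i$, first the root and then the derivative iteration), tracking the leading monomials of $\bs g_0$ and $\bs g_1$ and using them to control all four degrees. The key structural fact is that throughout a correct-hit path, the second basis element $\bgplus_1$ remains (a scalar multiple of) $(\omega,\sigma)$ up to multiplication by the accumulated factors $(X-\alpha_i^{-1})$ corresponding to correct modifications; this is essentially the content of Theorem \ref{thm:main} together with the update rules of Algorithm A. In particular, after the iteration adjoining $\alpha_i$, one has $\lm(\bgplus_1)=(0,X^{t+i})$, so $\deg(g^+_{11})\le t+i\le t+r=\veps$. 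Since the leading monomial of $\bgplus_1$ is on the right, automatically $\deg(g^+_{10})\le \deg(g^+_{11})-1<\veps$ as well, which takes care of $j=1$.

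For $j=0$, the plan is to combine two ingredients. First, the invariant that the total module does not change ``for free'': whenever $j^*=0$ (i.e.\ it is $\bgplus_0$ whose leading monomial advances), its degree on the left coordinate increases by exactly $1$, and whenever $j^*=1$ it is unchanged; summing these increments over the $2i$ iterations up to and including step $i$ gives $\deg(g^+_{00}(i;\tau)) = \deg(g_{00}(1;\broot)) + (\text{number of times } j^*=0)$. Second, by Proposition \ref{prop:degrees}, at the start we have $\deg(g_{00}(1;\broot)) + \deg(g_{11}(1;\broot)) = 2t$, and $\deg(g_{11}(1;\broot))$ is the number of errors ``seen'' by $M_0$, namely at least $\veps - r = t$ (under the hypothesis $\veps = t+r$) and in fact, by Proposition \ref{prop:rzero}, exactly equal to $\deg(\sigma)$ if $\bs y$ is within distance $t$ of a codeword — but here $\veps$ may exceed $t$, so instead one uses that $\deg(g_{11}(1;\broot))\le t$ always (it is bounded by the right-degree of a minimal element of $M_0$). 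Hence $\deg(g_{00}(1;\broot))\ge t$. The complementary bound: the number of times $j^*=0$ across the $2r$ iterations equals $2r$ minus the number of times $j^*=1$; and each time $j^*=1$, $\lm(\bgplus_1)$ advances by one power of $X$, so the total number of $j^*=1$ events equals $\deg(g^+_{11}(r;\bder)) - \deg(g_{11}(1;\broot)) = (t+r) - \deg(g_{11}(1;\broot))$. Plugging in, $\deg(g^+_{00}) = \deg(g_{00}(1;\broot)) + 2r - (t+r) + \deg(g_{11}(1;\broot)) = 2t + r - t = t+r = \veps$ by Proposition \ref{prop:degrees}. This pins down $\deg(g^+_{00})$ exactly (equal to $\veps$, with the final value), and monotonicity of the left-degree under the updates gives $\deg(g^+_{00}(i;\tau))\le \veps$ for all intermediate $i,\tau$.

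Two technical points remain. First, one must check that the ``off-pivot'' update $\bgplus_j := \bs g_j - \frac{\Delta_j}{\Delta_{j^*}}\bs g_{j^*}$ does not raise $\deg(g^+_{j0})$ above the claimed bound: this follows because $\lm(\bs g_{j^*})\le \lm(\bs g_j)$ (by choice of $j^*$) and subtraction of a lower-or-equal monomial cannot increase the degree of either coordinate — the left-degree of $\bs g_{j^*}$ is at most that of $\bs g_j$ when both leading monomials are on the same side, and when they are on opposite sides the bound is inherited from the already-established degree bounds on both. Second, in the derivative iteration the ``extra'' subtracted term $D(X\bs g_{j^*})\cdot\bs g_{j^*}$ combines with the $\Delta_{j^*}X\bs g_{j^*}$ term to give exactly $(X-\alpha_r^{-1})\bs g_{j^*}$ in the inversion-normalized form, so the left-degree still goes up by exactly one when $j^*=0$; this is already built into the bookkeeping above.

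\textbf{Main obstacle.} The delicate part is the degree bound for $g^+_{00}$ (and $g^+_{10}$) when $\veps>t$: the clean statement $\deg(g_{11})=\deg(\sigma)=\veps$ from the half-distance regime fails, and one must instead argue via the invariance of $M_0$'s minimal left-degree (the value $2t$ coming from the generator $(X^{2t},0)$) together with a careful count of pivot events. Getting the inequality $\deg(g_{11}(1;\broot))\le t$ right, and confirming that equality in Proposition \ref{prop:degrees} forces the telescoping count to land exactly on $\veps$ rather than something larger, is where the proof needs the most care; everything else is routine tracking of leading monomials through the two update branches of Algorithm A.
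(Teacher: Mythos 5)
Your proposal is essentially the paper's own proof: bound the right-led polynomial by noting that leading monomials only increase across iterations and that the final value is $\lm(\omega,\sigma)=(0,X^{\veps})$; then count pivot events (each multiplies exactly one leading monomial by $X$), use Proposition~\ref{prop:degrees} for the starting balance $\deg(g_{00}(1;\broot))+\deg(g_{11}(1;\broot))=2t$, and telescope to get $\deg(g^+_{00}(r;\bder))\le\veps$; finally use the position of the leading monomial to bound the off-diagonal coordinates. That is exactly the route the paper takes.

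Two caveats, both harmless because they are not used in your final computation. First, the pointwise equality $\lm(\bgplus_1(i;\cdot))=(0,X^{t+i})$ is not justified (and in general false): Theorem~\ref{thm:main} gives minimality of $(\omega,\sigma)$ in $M_r$, not in the intermediate $M_i$ when $\veps>t+i$, so you only get $\lm(\bs g_1(i;\tau))\le(0,X^{\veps})$, which is what you actually need. Second, the claim $\deg(g_{11}(1;\broot))\le t$ ``always'' is not right for $\veps>t$ and is irrelevant: the telescoping identity $\deg(g_{00}(1;\broot))+\deg(g_{11}(1;\broot))=2t$ from Proposition~\ref{prop:degrees} already closes the count without any one-sided bound. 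Also, the off-pivot discussion is more involved than needed: since $\lm(\bgplus_0)$ is always on the left and $\lm(\bgplus_1)$ on the right (an invariant of the algorithm), one immediately gets $\deg(g^+_{01})\le\deg(g^+_{00})$ and $\deg(g^+_{10})<\deg(g^+_{11})$, with no case analysis on ``same side'' (which never occurs). With these tidied up, your argument coincides with the one in the paper.
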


\begin{proof}
By Theorem \ref{thm:main}, $(\omega,\sigma)=c\cdot \bgplus_1(r;\bder)$
for some $c\in \efq^*$ (as the leading monomial of $(\omega,\sigma)$
is on the right). Note that for all $i$, $j$, and ${\tau}$, we have
$\lm(\bgplus_j(i;{\tau}))\geq \lm(\bs{g}_j(i;{\tau}))$, and so for all
$i$ and ${\tau}$, we must have $\lm(\bs{g}_1(i;{\tau}))\leq
\lm(\omega,\sigma)=(0,X^{\veps})$. In particular,
$\deg(g_{11}(i;{\tau}))\leq \veps$ and $\deg(g_{10}(i;{\tau}))\leq
\veps-1$. The same argument applies also to $g^+_{10}(i;{\tau})$ and  
$g^+_{11}(i;{\tau})$. 

Turning to $\bs{g}_{0}(i;{\tau})$, note that for all $i$ and $\tau$,
$\lm(\bgplus_j(i;{\tau}))>\lm(\bs{g}_j(i;{\tau}))$ for at most one
$j\in\{0,1\}$. Also, for $j\in\{0,1\}$ and for each $i$ and ${\tau}$
with $\lm(\bgplus_j(i;{\tau}))>\lm(\bs{g}_j(i;{\tau}))$, we have
$\lm(\bgplus_j(i;{\tau}))=X\lm(\bs{g}_j(i;{\tau}))$. Since the degree of the second
coordinate of $\bgplus_1(i;{\tau})$ (the coordinate containing the leading
monomial) must increase from $\deg(g_{11}(1;\broot))$ for $i=1$ and
${\tau}=\broot$ to $\deg(\sigma)=\veps$ for $i=r$ and ${\tau}=\bder$, we see that
$$
\big|\big\{(i,{\tau})|
\lm(\bgplus_1(i;{\tau}))>\lm(\bs{g}_1(i;{\tau}))\big\}\big| =
\veps-\deg(g_{11}(1;\broot)), 
$$
and therefore,\footnote{Actually, by
(\ref{eq:inclusion}) we can replace ``$\leq$'' by ``$=$'' in the
following equation.}
\begin{eqnarray*}
\big|\big\{(i,{\tau})|
\lm(\bgplus_0(i;{\tau}))>\lm(\bs{g}_0(i;{\tau}))\big\}\big| & \leq & 
2r-(\veps-\deg(g_{11}(1;\broot)))\\
&=& \deg(g_{11}(1;\broot)) +r-t.
\end{eqnarray*}
Hence, for all $i$ and ${\tau}$,
\begin{eqnarray*}
\deg(g_{00}^+(i;{\tau})) &\leq & \deg(g_{00}^+(r;\bder)) \text{ ($\lm$
(on the left) does not decrease)}\\ 
 & \leq & \deg(g_{00}(1;\broot)) + \deg(g_{11}(1;\broot)) +r-t\\
 & = & t+r=\veps \text{ (by Proposition \ref{prop:degrees})}.
\end{eqnarray*}

Finally, since the leading monomial of $\bgplus_0(i;{\tau})$ is on the
left, we must have
$\deg(g_{01}^+(i;{\tau}))-1<\deg(g_{00}^+(i;{\tau}))\leq 
\veps$, which proves that $\deg(g_{01}^+(i;{\tau}))\leq \veps$.
\end{proof}

Using Proposition \ref{prop:okmod}, we can calculate
$g_{j0}(\alpha_r^{-1})$ in Algorithm A while maintaining only the right
polynomials $g_{j1}$ ($j\in\{0,1\}$). We shall now describe an
efficient $O(t)$ method for calculating this evaluation.

For a polynomial $v(X)\in \efq[X]$, assume that $\delta:=\deg(v)\leq
\veps\leq 2t-1$, and write $v(X)=v_0+v_1X+\cdots+v_{2t-1}X^{2t-1}$. For
short, write
$S(X)=S_0+S_1X+\cdots+S_{2t-1}X^{2t-1}:=\synd{y}(X)$. Then for
$\beta\in \efq$, $\big(Sv\bmod (X^{2t})\big)(\beta)$ can be expressed 
as 
\begin{eqnarray}
&&S_0v_0  + \nonumber\\
&&(S_0v_1+S_1v_0)\beta + \nonumber\\
&&(S_0v_2+S_1v_1+S_2v_0)\beta^2 + \nonumber\\
&&\vdots  \nonumber\\
&&(S_0v_{2t-1}+S_1v_{2t-2}+S_2v_{2t-3}+\cdots+ S_{2t-1}v_0)\beta^{2t-1}.
\label{eq:mod} 
\end{eqnarray}

For $j\in\{0,\ldots,2t-1\}$, let $A_j(v,\beta)$ be the sum over the $j$-th
column of (\ref{eq:mod}). Then
$$
A_j(v,\beta)=S_j\beta^j(v_0+v_1\beta+\cdots+v_{2t-1-j}\beta^{2t-1-j}).
$$
If $2t-1-j\geq \delta(=\deg(v))$, then
$A_j(v,\beta)=S_j\beta^j v(\beta)$. Hence, if $v(\beta)=0$ (which we
will assume from this point on, considering the previous root iteration of 
Algorithm A), then 
\begin{equation}\label{eq:modsum}
\big(Sv\bmod (X^{2t})\big)(\beta) = 
\sum_{j=0}^{2t-1}A_j(v,\beta)
 =  \sum_{j=2t-\delta}^{2t-1}A_j(v,\beta).
\end{equation}
The sum on the right-hand side of (\ref{eq:modsum}) may be
calculated recursively. For this, let
$$
\tl{A}_j(v,\beta):=\beta^j\sum_{i=0}^{2t-1-j} v_i \beta^i,
$$
so that $A_j(v,\beta)=S_j\tl{A}_j(v,\beta)$. Then
$\tl{A}_{2t-\delta-1}=0$ (as $v(\beta)=0$), and for all 
$j\in\{2t-\delta-1,\ldots,2t-2\}$,
$\tl{A}_{j+1}(v,\beta)=\beta\tl{A}_j(v,\beta)-\beta^{2t}v_{2t-1-j}$,
that is, 
\begin{equation}\label{eq:recur}
\frac{\tl{A}_{j+1}(v,\beta)}{\beta^{2t}} =
\beta\cdot\frac{\tl{A}_j(v,\beta)}{\beta^{2t}}-v_{2t-1-j}.  
\end{equation}

Calculating $\beta^{2t}$ takes $O(\log_2(2t))$
squarings and multiplications. In fact, this can be calculated once,
before starting the depth-first search in the tree, for \emph{all}
non-reliable coordinates (not just for those corresponding to a
particular vertex). After that, each one of the $\delta$ iterations of
(\ref{eq:recur}) in the calculation of the sum (\ref{eq:modsum})
requires $2$ finite-field multiplications: one for moving from
$\tl{A}_j(v,\beta)/\beta^{2t}$ to $\tl{A}_{j+1}(v,\beta)/\beta^{2t}$,
and one for multiplying by $S_{j+1}$ before adding to an accumulated
sum.  Then, after the calculation of the accumulated sum, one additional
multiplication by $\beta^{2t}$ is required. We conclude that
calculating $\big(Sv\bmod (X^{2t})\big)(\beta)$ requires a total of
$2\delta+1$ finite-field multiplications (recall that
$\delta=\deg(v)$). 

For comparing the complexity with \cite[Alg.~1]{Wu12}, let us now
estimate the total number of finite-field multiplications 
required for performing the above variant of Algorithm A. For this
purpose, for $\tau\in\{\broot,\bder\}$, let $2\partial(r;\tau)$ be an
upper bound on the sum of the 
degrees of $g^+_{01}(r;\tau)$ and $g^+_{11}(r;\tau)$. 

The following proposition proves that we may take
$\partial(r;\broot)=t+r-1/2$ and $\partial(r;\bder)=t+r$.
\begin{proposition}\label{prop:partial}
For all edge connecting a vertex at depth $r-1$ to a vertex at depth
$r$,
$$
\deg(g^+_{00}(r;\tau)) + \deg(g^+_{11}(r;\tau)) \leq \begin{cases}
2t+2r-1 & \text{if }\tau=\broot\\
2t+2r & \text{if }\tau=\bder.
\end{cases}.
$$
Consequently, since $\deg(g^+_{01}(r;\tau))\leq
\deg(g^+_{00}(r;\tau))$ for all $\tau$,\footnote{Recall that the
leading monomial of $\bgplus_0(r;\tau)$ is on the left.} it
also holds that  
$$
\frac{1}{2}\Big(\deg(g^+_{01}(r;\tau)) + \deg(g^+_{11}(r;\tau))\Big)
\leq \begin{cases}
t+r-1/2 & \text{if }\tau=\broot\\
t+r & \text{if }\tau=\bder. 
\end{cases}
$$
\end{proposition}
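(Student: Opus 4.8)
The plan is to prove the bound on $\deg(g^+_{00}(r;\tau)) + \deg(g^+_{11}(r;\tau))$ by induction on the number of \koetter{} iterations performed along the path from the root to the current vertex, and then to derive the second displayed inequality as an immediate corollary using the fact that $\lm(\bgplus_0(r;\tau))$ is on the left, which forces $\deg(g^+_{01}(r;\tau)) \leq \deg(g^+_{00}(r;\tau))$.

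For the base case (the root of $T$, before any \koetter{} iteration), we have the input \groebner{} basis $\{\bs g_0, \bs g_1\}$ for $M_0$ with $\lm(\bs g_0)$ on the left and $\lm(\bs g_1)$ on the right. Proposition \ref{prop:degrees} gives exactly $\deg(g_{00}) + \deg(g_{11}) = 2t$, which matches the claimed bound $2t + 2r - 1$ with $r$ replaced by $0$ and the parity shifted appropriately -- more precisely, it is the starting point from which the two half-iterations per edge each add at most $1$ to the relevant degree sum. For the inductive step, I would analyze one \koetter{} iteration of Algorithm A (a root iteration or a derivative iteration) and track how $\deg(g^+_{00}) + \deg(g^+_{11})$ changes. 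There are three cases depending on the index set $J$ and the choice of $j^*$: (i) if $j^* = 0$, then $g^+_{00} = (X - \alpha_r^{-1})g_{00}$ raises $\deg(g_{00})$ by exactly $1$ while $g^+_{11} = g_{11}$ or $g_{11} - \frac{\Delta_1}{\Delta_0}\,(X-\cdots)\cdot(\text{shift of }g_{01})$ -- one must check this does not raise $\deg(g_{11})$, which holds because $\lm(\bs g_1)$ is on the right and $\lm(\bs g_0)$ is on the left, so the subtracted term has strictly smaller right-degree; (ii) if $j^* = 1$, then $g^+_{11} = (X-\alpha_r^{-1})g_{11}$ raises $\deg(g_{11})$ by $1$ while $g^+_{00} = g_{00}$ or $g_{00} - \frac{\Delta_0}{\Delta_1} g_{10}$, and again the subtracted term $g_{10}$ has strictly smaller left-degree than $g_{00}$ since $\lm(\bs g_0)$ is on the left; (iii) if only one of $\bs g_0, \bs g_1$ has $\Delta_j \neq 0$, or $J = \emptyset$, the degree sum does not increase at all. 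In every case the sum $\deg(g^+_{00}) + \deg(g^+_{11})$ increases by at most $1$ per \koetter{} iteration. Since an edge from depth $r-1$ to depth $r$ consists of a root iteration followed by a derivative iteration, after the root iteration the bound is $2t + 2r - 1$ and after the derivative iteration it is $2t + 2r$, matching the two cases of the proposition.

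The second inequality follows at once: since the leading monomial of $\bgplus_0(r;\tau)$ is on the left, we have $\deg(g^+_{01}(r;\tau)) \leq \deg(g^+_{00}(r;\tau))$ (this is part of the defining property of a monomial being on the left under $<$), hence $\deg(g^+_{01}(r;\tau)) + \deg(g^+_{11}(r;\tau)) \leq \deg(g^+_{00}(r;\tau)) + \deg(g^+_{11}(r;\tau))$, and dividing by $2$ and substituting the first bound gives the claim.

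I expect the main obstacle to be the careful bookkeeping in case (i) of the inductive step: when $j^* = 0$ and $1 \in J$, the update $g^+_{11} = g_{11} - \frac{\Delta_1}{\Delta_0} g_{01}^{(\text{after }j^* \text{ shift})}$ must be examined to confirm that multiplying $\bs g_0$ by $(X - \alpha_r^{-1})$ before the cancellation does not push the degree of the first coordinate of $\bgplus_1$ above the right-degree of $g_{11}$; this uses that $(X^{d_0+1}, 0) = \lm(\bgplus_0)$ is still on the left relative to $(0, X^{\deg g_{11}}) = \lm(\bs g_1)$, i.e.\ $d_0 + 1 \leq \deg(g_{11})$, which is guaranteed because $\bgplus_0$ and $\bs g_1$ are part of a \groebner{} basis with distinct leading unit vectors and the leading monomial of the left-vector is strictly below that of the right-vector. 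The symmetric check in case (ii) is slightly easier. Apart from this, the argument is routine degree-tracking combined with Proposition \ref{prop:degrees} for the initialization.
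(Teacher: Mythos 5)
Your approach — inducting on the number of K\"otter iterations along the path from the root, seeding the induction with Proposition~\ref{prop:degrees}, and showing the degree sum increases by at most one per iteration — is essentially the paper's proof. The paper states it more compactly by observing that $\lm(\bgplus_0(i;\tau))=(X^{\deg g^+_{00}(i;\tau)},0)$ and $\lm(\bgplus_1(i;\tau))=(0,X^{\deg g^+_{11}(i;\tau)})$, so the claimed sum is read off directly from the two leading monomials, exactly one of which is multiplied by $X$ per iteration, leaving nothing to check about the off-diagonal entries.

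One thing to flag: in your final paragraph you describe the update for $j\neq j^*$ as subtracting a multiple of the \emph{shifted} vector $(X-\alpha_r^{-1})\bs{g}_{j^*}$, i.e.\ $\bgplus_1=\bs{g}_1-\frac{\Delta_1}{\Delta_0}(X-\alpha_r^{-1})\bs{g}_0$. That is a misreading of Algorithm A: the update for $j\neq j^*$ uses the \emph{unshifted} $\bs{g}_{j^*}$, namely $\bgplus_j:=\bs{g}_j-\frac{\Delta_j}{\Delta_{j^*}}\bs{g}_{j^*}$, and only $\bgplus_{j^*}$ gets the factor $(X-\alpha_r^{-1})$. Under your reading the argument genuinely would not go through: the subtracted term would have leading monomial $X\lm(\bs{g}_{j^*})=(X^{\deg g_{00}+1},0)$, which is not in general below $\lm(\bs{g}_1)=(0,X^{\deg g_{11}})$ (they are equal in the $<$-order sense precisely when $\deg g_{00}=\deg g_{11}-1$, which is allowed), so the cancellation could push $\lm(\bgplus_1)$ to the left and break the invariant you rely on. Your attempted rescue — that $d_0+1\leq\deg g_{11}$ is ``guaranteed because $\bgplus_0$ and $\bs{g}_1$ are part of a Gr\"obner basis'' — is circular, since Proposition~\ref{prop:koetter} only guarantees that $\lm(\bgplus_0)$ is on the left and $\lm(\bgplus_1)$ on the right, not any inequality between them. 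With the algorithm read correctly the subtracted term has leading monomial $\lm(\bs{g}_{j^*})<\lm(\bs{g}_j)$ by the definition of $j^*$, so $\lm(\bgplus_j)=\lm(\bs{g}_j)$ for $j\neq j^*$ with no further work, and your inductive step is sound.
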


\begin{proof}
By Proposition \ref{prop:degrees}, the assertion holds for $r=0$ (with an
obvious convention in this case). Now, for each root and derivative
iteration, the leading monomial increases for exactly one value of $j$
(namely $j=j^*$), for which it is multiplied by $X$. Since for all
$i,\tau$, the leading monomial of $\bgplus_0(i;\tau)$ is
$(X^{\deg(g^+_{00}(i;\tau))},0)$ while the leading monomial of
$\bgplus_1(i;\tau)$ is $(0,X^{\deg(g^+_{11}(i;\tau))})$, and since
there is a total of either $2r-1$ (or $2r$) root and derivative
iterations for $\tau=\broot$ (resp., $\tau=\bder$), the assertion
follows.  
\end{proof}

In the following complexity estimation for the number of
multiplications on an edge connecting a vertex at depth $r-1$ to a
vertex at depth $r$, we assume that all
involved discrepancies are non-zero. In the other cases, which are
typically rare, the complexity is lower. 

\begin{itemize}

\item {\bf In the root iteration:} 

\begin{itemize} 

\item {\bf Evaluation:} For $j=0,1$, we have to calculate
$g_{j1}(\alpha_r^{-1})$. Hence, we have
two substitutions in polynomials whose sum of degrees is at most
$2\partial(r-1;\bder)$, which requires a total of at most
$2\partial(r-1;\bder)$ multiplications.\footnote{Using {\it
Horner's method}.}  

\item {\bf Multiplication of a polynomial by a constant:} For
$j\neq j^*$, we have to calculate the constant $\Delta_j/\Delta_{j^*}$,
which requires a single multiplication (assuming that we have a table
for calculating inverses), and to multiply two polynomials whose sum
of  degrees is at most $2\partial(r-1;\bder)$ by a constant.\footnote{
Actually, in the current form of Algorithm A, we multiply \emph{the
same} polynomial $\gplus_{j^*1}(r-1;\bder)$ twice by a constant,
instead of multiplying the two polynomials $\gplus_{j1}(r-1;\bder),
\gplus_{j^*1}(r-1;\bder)$ whose sum of degrees was bounded in
Proposition \ref{prop:partial}. However, this can be resolved by
changing the update rule of Algorithm A for $j\neq j^*$ into
$\bgplus_j:=\frac{\Delta_{j^*}}{\Delta_j} \bs{g}_j-\bs{g}_{j^*}$. A
similar remark is relevant also for the complexity analysis for
Algorithm C ahead. \label{footnote:jstar}
}
This requires a total of
$1+2\partial(r-1;\bder)+2=2\partial(r-1;\bder)+3$ multiplications (the
``$+2$'' accounts for the fact that a polynomial of degree $d$ has
$d+1$ coefficients).   

\end{itemize}

\item {\bf In the derivative iteration:} 

\begin{itemize} 

\item {\bf Evaluation:} For $j=0,1$, we have to calculate
$g_{j1}'(\alpha_r^{-1})$. In general, this requires at most
$2\partial(r;\broot)-2$ multiplications (in characteristic $2$, only up
to $\partial(r;\broot)$ multiplications are required\footnote{In
characteristic $2$, for all polynomial $u(X)$, there exists a
polynomial $f(X)$ such that $u'(X)=f(X)^2$ with $\deg(f)\leq \lceil
\deg(u)/2\rceil-1$. Moreover, the coefficients of $f$ are obtained as the
square roots of the odd coefficients of $u$, and the square root
calculation amounts to a cyclic shift when elements are represented
according to a {\it normal basis} over $\eftwo$.}). We then have to
calculate $\beta_ra_r$ and multiply the two evaluation results by 
$\beta_ra_r$, which adds $3$ multiplications. Finally, we have to
calculate  
$g_{j0}(\alpha_r^{-1})$ (using the above method) and
multiply by a constant for $j=0,1$, requiring at most
$2(2\partial(r;\broot)+1+1)$ 
multiplications. Hence, the overall number of multiplications for
evaluation in the derivative step is $6\partial(r;\broot)+5$ in
general (or $5\partial(r;\broot)+7$ in characteristic $2$). 

\item {\bf Multiplication of a polynomial by a constant:}
This is the same as in the root iteration: a total of at most
$2\partial(r;\broot)+3$ multiplications. 

\end{itemize}

\end{itemize}

Summing up, we obtain that the total number of multiplications is at
most 
\begin{multline*}
M_{\mathrm{A}}:=2\partial(r-1;\bder) + 2\partial(r-1;\bder)+3 +
6\partial(r;\broot)+5 + 2\partial(r;\broot)+3\\ =
 4\partial(r-1;\bder) +
8\partial(r;\broot)+11= 2(2t+2r-2)+4(2t+2r-1)+11 
\\ =12t+12r+3
\end{multline*}
in general, while in characteristic $2$, the number of multiplications
is at most
\begin{multline*}
M_{\mathrm{A}}-\partial(r;\broot)+2 =12t+12r +3 - (t+r-1/2)+2
\\ =11t+11r+5.5.
\end{multline*}
Note that the fraction appears because this is
just a bound, but since the number of multiplications is an integer,
it is bounded by
$$
M'_A:=11t+11r+5 
$$
in characteristic $2$.

Next, we would like to make a similar calculation for
Wu's polynomial update algorithm, \cite[Alg.~1]{Wu12}. We note that the
complexities of Cases $3$--$8$ in Step 3 
of \cite[Alg.~1]{Wu12} are similar, and we will assume any one of these
cases, in analogy to the above assumption that no discrepancy is zero
in our algorithm.  To make concrete statements, we will focus on Case
$3$ as a representative for all of these cases. Similarly to the above,
we let $\partial$ be an upper bound on the degree for all involved
polynomials (before update) on an edge connecting a vertex at depth
$r-1$ to a vertex at depth $r$. While an
exact account of the polynomial degrees in Wu's algorithm is outside
the scope of the current paper, it seems 
reasonable to assume that we may take $\partial= t+r-1$ for Wu's
algorithm.

\begin{itemize}

\item {\bf Direct Evaluation in Step 2:} There are two evaluations of polynomials of
degree up to $\partial$ (in the calculations of $\bar{\Lambda}_i$ and
$\bar{\mathcal{B}}_i$), plus $3$ additional multiplications: one for
calculating $y_i\alpha_i$, and two for multiplying evaluation results
by $\alpha_i y_i$, resulting in a total of $2\partial+3$
multiplications.

\item{\bf Evaluation by the recursions \cite[Eq.~(23),(24)]{Wu12} in
Step 2:}
It seems that these recursions for 
calculating $\bar{\Omega}_i$, $\bar{\Theta}_i$ serve the same
purpose as a calculation explained above for Algorithm A: to
calculate $\big(Sv\bmod (X^{2t})\big)(\beta)$ for some polynomial $v$
and some element $\beta\in \efq^*$. We will therefore assume that each
of these calculations requires $2\partial+1$ multiplications, for a
total of $2(2\partial+1)=4\partial+2$ multiplications.

\item {\bf Additional multiplications in Step 2}:
In the two last lines of Step 2, there are $5$ additional
multiplications.  

\item {\bf Multiplication of a polynomial by a constant in Case 3 of
Step 3:} There are up
to $3(\partial+1)$ multiplications coming from 
multiplying a polynomial of degree up to $\partial$ by a scalar, plus $3$
additional multiplications (calculating $\alpha_i^{-1}\Psi_i$, and
multiplying its inverse by two constants) for calculating the relevant
scalars, resulting in $3\partial+6$ multiplications.

\item {\bf Syndrome update in Step 1:} There are $t+r$ syndrome entries to update,
each requiring a single multiplication. 

\end{itemize}

Summing-up, we obtain that the total number of multiplications in Wu's
algorithm on an edge between depth $r-1$ and depth $r$ is at most 
\begin{multline*}
M_{\mathrm{Wu}}:=2\partial+3 + 4\partial+2 + 5 + 3\partial+6 
+t+r\\ =9\partial+16+t+r \\=9(t+r-1)+16+t+r=10t+10r+7.
\end{multline*}
multiplications. Comparing $M_{\mathrm{Wu}}$ to $M_{\mathrm{A}}$ in
the general case and to $M'_{\mathrm{A}}$ in the case of
characteristic $2$, we see that Wu's algorithm has a somewhat lower
complexity, by a factor of about $5/6$ in general, or $10/11$ in
characteristic $2$. However, in Section \ref{app:lowdeg} we will
present yet an additional variant of Algorithm A (namely, Algorithm
C), that has a lower complexity than Wu's algorithm. 

It should be noted that the above complexity comparison does not
account for exhaustive root searches, as it is 
reasonable to assume that the probability of falsely meeting the
stopping criterion of Section \ref{app:stop} ahead is similar to the
corresponding probability for Wu's stopping criterion.

\subsection{A heuristic stopping criterion}\label{app:stop}
To reduce the number of required exhaustive root searches for the ELP
in Algorithm A, it is useful to introduce a heuristic stopping
criterion, which determines whether or not an exhaustive root search
is required. Such a stopping criterion must never miss the correct
ELP, but is allowed to falsely trigger an exhaustive root search with
a low probability. 

In \cite[Sec.~V]{Wu12}, Wu introduced such a heuristic criterion for his
algorithm, based on an LFSR-length variable. For Algorithm A, it is
possible to obtain a similar criterion based on the {\it discrepancy}
$\Delta_1$. Using the terminology of Section \ref{sec:tree}, suppose
that the total number of errors is $t+r$, and that there are $r+1$
errors on $I$, for some $r<\rmax$. Then by Theorem \ref{thm:main},
the correct EEP $\omega$ and ELP $\sigma$ will appear (up to a
multiplicative scaler) as the pair $\bgplus_1$ both for some vertex
$\bs{v}$ at depth $r+1$ and for its parent $\bs{u}$ at depth
$r$. 

Moreover, on the edge connecting $\bs{u}$ to $\bs{v}$, we must
have $\Delta_1=0$, both for the root iteration and for the derivative
iteration, by Forney's formula (\ref{eq:forney}). Hence, demanding that
$\Delta_1=0$ for both the root iteration and the derivative iteration
will never miss the true ELP under the above assumptions.

Special care should be taken for the case considered in Appendix
\ref{app:indirect}, as one can verify that $\Delta_1=0$ twice also if
a correct error location, $\alpha_{r+1}$, is encountered after the 
condition of Proposition \ref{prop:indirect} holds (we omit the
proof). While Proposition \ref{prop:indirect} can 
be used to restore the correct ELP also in 
such a case, this is outside the scope
of the current paper. Here, we will only specify a method to avoid a
useless exhaustive evaluation in these cases.\footnote{We thank A.~Dor
for pointing this out.} 

Observing (\ref{eq:indirect}), we see that for the case
considered in Proposition \ref{prop:indirect}, the estimated ELP and
its derivative have at least one common root in
$\{\alpha_1,\ldots,\alpha_r\}$. Hence, to avoid an unnecessary exhaustive
root search in such a case, one can first evaluate the
estimated ELP and its derivative on $\{\alpha_1,\ldots,\alpha_r\}$,
and then check that there are no common roots. Note that in case of a
direct hit, this condition does hold, as the ELP is separable.

To conclude, the stopping criterion now has the following form:
\begin{enumerate}

\item $\Delta_1=0$ for both the root an derivative iterations on the
edge connecting $\bs{u}$ and $\bs{v}$, and

\item The estimated ELP, $g_{11}(X)$, and its derivative have no common
roots on the $r$ locations corresponding to the vertex $\bs{u}$.

\end{enumerate}
Note that Condition 2 should be checked only if Condition 1 holds, and
hence rarely.
In other cases that there is no need to perform exhaustive evaluation,
it seems reasonable to heuristically assume that the
probability that $\Delta_1=0$ for both the root and the derivative
iterations is about $1/q^2$, and hence small.

\subsection{Working with low-degree polynomials}\label{app:lowdeg}
In this section, we will show that by using an appropriate
transformation, one can work with polynomials whose degrees 
typically grow from $0$ to $2\rmax$, instead of typically growing from
$t$ to $t+\rmax$ (see ahead for a detailed complexity comparison with
\cite[Alg.~1]{Wu12}). 

Until this point, we have used only the monomial ordering $<_{-1}$ of
\cite{Fitz95}. In this section, we will use the general definition of
Fitzpatrick's monomial ordering, which we shall now recall. Define a
monomial ordering $<_w$ on 
$\efq[X]^2$ as follows: $(X^{j_1},0)<_w(X^{j_2},0)$ iff $j_1<j_2$,
$(0,X^{j_1})<_w(0,X^{j_2})$ iff $j_1<j_2$, and
$(X^{j_1},0)<_w(0,X^{j_2})$ iff $j_1\leq j_2+w$.  Note again that this is a 
monomial ordering even when $w$ is not positive. We will write $\lm_w(u,v)$
for the leading monomial of $(u,v)$ with respect to $<_w$.

Let $\{\bh_0=(h_{00},h_{01}),\bh_1=(h_{10},h_{11})\}$ be a \groebner{}
basis for $M_0$ with respect 
to the monomial ordering $<_{-1}$ such that the leading monomial of
$\bh_0$ is on the left, while the leading monomial of $\bh_1$ is on
the right. Since $\{\bh_0,\bh_1\}$ is also a free-module basis, every
element $(u,v)\in M_0$ can be written as 
$(u,v)=f_0(X)\bh_0 + f_1(X)\bh_1$ for a unique pair $(f_0(X),f_1(X))$, and
the map
\begin{eqnarray*}
\mu\colon M_0 & \longrightarrow & \efq[X]^2\\
(u,v) & \longmapsto & (f_0,f_1)
\end{eqnarray*}
is an isomorphism of $\efq[X]$-modules.

Note that for all $r$, $M_r\subseteq M_0$ is a submodule, and let
$$
N_r =N_r(\synd{y},\alpha_1,\ldots,\alpha_{r},\beta_1,\ldots,\beta_{r})
:=\mu(M_r(\synd{y},\alpha_1,\ldots,\alpha_{r},\beta_1,\ldots,\beta_{r}))
$$
be the $\mu$-image of $M_r$ in $\efq[X]^2$. For obvious
reasons, we call $N_r$ the {\bf module of coefficient polynomials} of
$M_r$. Then by writing a typical element $(u,v)\in M_r$ as
$(u,v)=f_0(X)\bh_0 + f_1(X)\bh_1$ and substituting in the constraints
in the definition of $M_r$, we immediately obtain the following
characterization of $N_r$.

\begin{proposition}\label{prop:nr}
It holds that $N_r$ is the set of all pairs $(f_0,f_1)\in \efq[X]^2$
that satisfy the following condition:

\begin{itemize}

\item $\forall i\in\{1,\ldots,r\}$, 

\begin{enumerate}

\item $b_{0i} f_0(\alpha_i^{-1}) + b_{1i} f_1(\alpha_i^{-1})=0$

\item $b_{0i}f'_0(\alpha_i^{-1}) + c_{0i}f_0(\alpha_i^{-1})
+b_{1i}f'_1(\alpha_i^{-1}) + c_{1i}f_1(\alpha_i^{-1}) =0$,
where  
$$
b_{0i}:=h_{01}(\alpha_i^{-1}),\quad
b_{1i}:=h_{11}(\alpha_i^{-1}), 
$$
$$
c_{0i}:=h'_{01}(\alpha_i^{-1})+
\frac{\alpha_i}{\beta_ia_i}h_{00}(\alpha_i^{-1}), 
$$
$$
c_{1i}:=h'_{11}(\alpha_i^{-1})+
\frac{\alpha_i}{\beta_ia_i}h_{10}(\alpha_i^{-1}),
$$
(with $a_i:=\tl{a}_{i'}$ for the $i'$ with $\alpha_i=\lambda^{i'}$). 

\end{enumerate}

\end{itemize}

\end{proposition}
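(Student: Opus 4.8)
The plan is to prove Proposition~\ref{prop:nr} by a direct substitution: write a general element of $M_0$ in the $\bh$-basis and translate each constraint defining $M_r$ into a constraint on the coefficient pair $(f_0,f_1)$.

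First I would fix $(u,v)\in M_0$ and write $(u,v)=f_0(X)\bh_0+f_1(X)\bh_1$, so that $u=f_0h_{00}+f_1h_{10}$ and $v=f_0h_{01}+f_1h_{11}$. Since $(u,v)$ already belongs to $M_0$, the key-equation congruence (part~1 of Definition~\ref{def:mr}) holds automatically, and hence $(u,v)\in M_r$ if and only if the $r$ root conditions $v(\alpha_i^{-1})=0$ and the $r$ derivative conditions $\beta_ia_iv'(\alpha_i^{-1})+\alpha_iu(\alpha_i^{-1})=0$ all hold. Because $\mu$ is a bijection onto $\efq[X]^2$, it then suffices to rewrite these $2r$ conditions in terms of $(f_0,f_1)$, and $N_r=\mu(M_r)$ will be exactly the set of pairs satisfying the rewritten conditions.

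Next I would carry out the rewriting. Evaluating $v=f_0h_{01}+f_1h_{11}$ at $\alpha_i^{-1}$ puts the $i$-th root condition in the form $b_{0i}f_0(\alpha_i^{-1})+b_{1i}f_1(\alpha_i^{-1})=0$ with $b_{0i}=h_{01}(\alpha_i^{-1})$ and $b_{1i}=h_{11}(\alpha_i^{-1})$, which is condition~1 of the proposition. For the $i$-th derivative condition I would apply the Leibniz rule to obtain $v'=f_0'h_{01}+f_0h_{01}'+f_1'h_{11}+f_1h_{11}'$, evaluate at $\alpha_i^{-1}$, substitute into $\beta_ia_iv'(\alpha_i^{-1})+\alpha_iu(\alpha_i^{-1})=0$, divide by the nonzero scalar $\beta_ia_i$, and group the terms multiplying $f_0'(\alpha_i^{-1})$, $f_1'(\alpha_i^{-1})$, $f_0(\alpha_i^{-1})$, and $f_1(\alpha_i^{-1})$. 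This produces condition~2 with $c_{0i}=h_{01}'(\alpha_i^{-1})+\frac{\alpha_i}{\beta_ia_i}h_{00}(\alpha_i^{-1})$ and $c_{1i}=h_{11}'(\alpha_i^{-1})+\frac{\alpha_i}{\beta_ia_i}h_{10}(\alpha_i^{-1})$, exactly as stated.

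There is no genuine obstacle: the whole argument is a substitution together with one application of the product rule for formal derivatives. The only two points worth stating explicitly are that membership in $M_0$ renders the congruence constraint vacuous — so that $\mu$ really does carry $M_r$ onto the described solution set, in both directions — and that $\beta_ia_i\neq 0$ is precisely what permits the division isolating the stated forms of $c_{0i}$ and $c_{1i}$.
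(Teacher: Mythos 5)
Your proof is correct and matches the paper's own argument: both substitute $(u,v)=f_0\bh_0+f_1\bh_1$ (so $u=f_0h_{00}+f_1h_{10}$, $v=f_0h_{01}+f_1h_{11}$), translate the root and derivative conditions from Definition~\ref{def:mr} directly, expand the derivative by the Leibniz rule, and divide by $\beta_ia_i$ to obtain the stated $b_{ji}$, $c_{ji}$. No substantive difference in approach.
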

\begin{proof}
The conditions in the definition of $M_r$ (as a sub-module of $M_0$)
translate to the following conditions, both for all $i$: 
$$
(f_0h_{01}+f_1h_{11})(\alpha_i^{-1})=0, 
$$
and
$$
\beta_ia_i(f_0h_{01}+f_1h_{11})'(\alpha_i^{-1}) +
\alpha_i(f_0h_{00}+f_1h_{10})(\alpha_i^{-1}) =0.
$$
Dividing the second equation by $\beta_ia_i$, expanding, and
re-arranging terms, the proposition follows. 
\end{proof}

Note that it follows immediately from Theorem \ref{thm:main} that for
all $r$, $N_r$ is a module, since it is a homomorphic image of a
module. It also follows that the ``intermediate'' module, obtained by
intersecting $N_r$ only with the first constraint for $i=r+1$, is a module,
again, as a homomorphic image of a module. 

To use the minimality assertion of Theorem \ref{thm:main} also for the
coefficient-polynomial modules $N_r$, we have the following
proposition. 

\begin{proposition}
Let $w:=\deg(h_{11})-\deg(h_{00})-1$. Then for all
$(u_1,v_1),(u_2,v_2)\in M_0$, it holds that 
$$
\lmmo(u_1,v_1)<_{-1} \lmmo(u_2,v_2) \quad \iff \quad
\lm_w(\mu(u_1,v_1))<_{w}\lm_w(\mu(u_1,v_1)) 
$$
\end{proposition}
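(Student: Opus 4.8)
The claim is that the module isomorphism $\mu$, which expresses an element $(u,v)\in M_0$ in terms of the fixed $\mathrm{GB}$ $\{\bh_0,\bh_1\}$, is an order-isomorphism between $(M_0,<_{-1})$ and $(\efq[X]^2,<_w)$ once we shift the monomial ordering on the target by $w:=\deg(h_{11})-\deg(h_{00})-1$. The natural way to prove this is to show that $\lmmo$ of an element is determined by the ``$<_w$-degree profile'' of its coefficient pair, i.e.\ to track precisely how the leading monomial of $(u,v)=f_0\bh_0+f_1\bh_1$ arises from $f_0,f_1$.

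\textbf{Key steps.} First I would record the degree data of the fixed basis: write $a_0:=\deg(h_{00})$, $a_1:=\deg(h_{11})$, so $w=a_1-a_0-1$, and recall from the $<_{-1}$-\groebner-basis property that the leading monomial of $\bh_0$ is $(X^{a_0},0)$ on the left and that of $\bh_1$ is $(0,X^{a_1})$ on the right; in particular $\deg(h_{01})\le a_0$ and $\deg(h_{10})\le a_1-1$ (the latter because the right monomial $(0,X^{a_1})$ must $<_{-1}$-dominate $(X^{\deg h_{10}},0)$, i.e.\ $\deg h_{10}\le a_1-1$; similarly $\deg h_{01}\le a_0$, with room to spare). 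Next, for $(u,v)=f_0\bh_0+f_1\bh_1$ I would compute the two ``candidate'' leading monomials contributed by the two summands. The summand $f_0\bh_0$ has leading monomial $(X^{\deg f_0+a_0},0)$ (on the left), and the summand $f_1\bh_1$ has leading monomial $(0,X^{\deg f_1+a_1})$ (on the right). Under $<_{-1}$, the left monomial $(X^{\deg f_0+a_0},0)$ is smaller than the right monomial $(0,X^{\deg f_1+a_1})$ exactly when $\deg f_0+a_0\le \deg f_1+a_1-1$, i.e.\ $\deg f_0\le \deg f_1+w$. But that is precisely the condition for $(X^{\deg f_0},0)<_w(0,X^{\deg f_1})$, so the comparison of the two summands' leading monomials in $M_0$ mirrors exactly the comparison of the leading monomials of $f_0\bu_0$ and $f_1\bu_1$ in $(\efq[X]^2,<_w)$. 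The crucial point to nail down is \emph{no cancellation}: because the two summands' leading monomials sit on opposite unit vectors (one left, one right) whenever they are distinct, the larger of the two survives in $(u,v)$, so $\lmmo(u,v)$ equals the larger of the two candidate monomials; and in the boundary tie-case $\deg f_0+a_0=\deg f_1+a_1-1$ one still takes the right monomial $(0,X^{\deg f_1+a_1})$ by definition of $<_{-1}$, matching the $<_w$-tie rule. With this, for any two elements $(u_i,v_i)=f_0^{(i)}\bh_0+f_1^{(i)}\bh_1$, the relation $\lmmo(u_1,v_1)<_{-1}\lmmo(u_2,v_2)$ is equivalent to the corresponding $<_w$-comparison of $\bigl(\lm_w(f_0^{(1)}\bu_0),\lm_w(f_1^{(1)}\bu_1)\bigr)$ versus the analogous pair for index $2$, which is exactly $\lm_w(\mu(u_1,v_1))<_w\lm_w(\mu(u_2,v_2))$.

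\textbf{Main obstacle.} The subtle part is \emph{not} the degree bookkeeping but ruling out coincidental cancellation across the two summands when one forms $u=f_0h_{00}+f_1h_{10}$ and $v=f_0h_{01}+f_1h_{11}$: a priori the leading term of $v$ might drop below $X^{\deg f_1+a_1}$ if there were interference from $f_0h_{01}$, and similarly for $u$. This is handled by the separation-of-unit-vectors observation above together with the strict degree inequalities $\deg h_{01}\le a_0$, $\deg h_{10}\le a_1-1$: in the $v$-coordinate, $\deg(f_0h_{01})\le\deg f_0+a_0$ and $\deg(f_1h_{11})=\deg f_1+a_1$, and one checks these compete correctly so that whichever of $f_0\bh_0$, $f_1\bh_1$ is $<_{-1}$-dominant actually controls $\lmmo(u,v)$ with no degeneracy. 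I would also remark that the statement's displayed right-hand side has an evident typo, $\mu(u_1,v_1)$ where $\mu(u_2,v_2)$ is meant, and prove it in the corrected form. Finally I would note the one-line corollary the authors are after: under $\mu$, the unique $<_{-1}$-minimal element of $M_r$ (which by Theorem \ref{thm:main} is $c\cdot(\omega,\sigma)$) corresponds to the unique $<_w$-minimal element of $N_r$, so Algorithm A may be run with $<_w$ on the low-degree coefficient polynomials $f_0,f_1$ in place of $u,v$.
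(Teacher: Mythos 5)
Your proof is correct and takes essentially the same route as the paper: express $(u,v)=f_0\bh_0+f_1\bh_1$, observe that the $<_{-1}$-leading monomials of the two summands sit on opposite unit vectors at the right degrees (left at $\deg f_0+\deg h_{00}$, right at $\deg f_1+\deg h_{11}$), so no cancellation occurs and $\lmmo(u,v)$ is the $<_{-1}$-larger of the two, which is exactly the $<_w$-comparison of $(X^{\deg f_0},0)$ versus $(0,X^{\deg f_1})$. You make the no-cancellation check slightly more explicit than the paper does, and you correctly flag the typo in the displayed statement; otherwise the arguments coincide.
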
 

\begin{proof}
For $(u,v)\in M_0$ write $(u,v)=f_0\bh_0 + f_1\bh_1$. With respect to
$<_{-1}$, the leading
monomial of $f_0\bh_0$ is $(X^{\deg(f_0)+\deg(h_{00})},0)$, while the
leading monomial of $f_1\bh_1$ is $(0,X^{\deg(f_1)+\deg(h_{11})})$
(we have used the fact that the leading monomial of $\bh_0$ is on the
left and the leading monomial of $\bh_1$ is on the right). Hence 
$$
\lmmo(u,v)= \begin{cases}
(X^{\deg(f_0)+\deg(h_{00})},0)& \text{if } \deg(f_0)+\deg(h_{00})\geq
\deg(f_1)+\deg(h_{11})\\
(0,X^{\deg(f_1)+\deg(h_{11})})& \text{if } \deg(f_0)+\deg(h_{00})\leq
\deg(f_1)+\deg(h_{11})-1  
\end{cases}
$$ 
It follows that
$$
\lmmo(u,v)= \begin{cases}
(X^{\deg(f_0)+\deg(h_{00})},0)& \text{if } \deg(f_0)\geq
\deg(f_1)+w + 1\\
(0,X^{\deg(f_1)+\deg(h_{11})})& \text{if } \deg(f_0)\leq
\deg(f_1)+w.  
\end{cases}
$$ 
We conclude that
\begin{equation}\label{eq:mua}
\lmmo(u,v)= (X^{\deg(f_0)+\deg(h_{00})},0) \quad\iff\quad 
\lm_w(f_0,f_1)=(X^{\deg(f_0)},0)
\end{equation}
and 
\begin{equation}\label{eq:mub}
\lmmo(u,v)= (0,X^{\deg(f_1)+\deg(h_{11})}) \quad\iff\quad 
\lm_w(f_0,f_1)=(0,X^{\deg(f_1)})
\end{equation}
and these are all the possible cases. The assertion now follows by
considering four possible cases of whether the $<_{-1}$-leading
monomial of $(u_1,v_1)$, $(u_2,v_2)$ is on the left/right, and 
(\ref{eq:mua}), (\ref{eq:mub}). 
\end{proof}

We therefore obtain the following corollary to Theorem
\ref{thm:main}:

\begin{corollary}
Write $(\omega,\sigma)=f_0\bh_0+f_1\bh_1$. If
$\dham(\bs{y},\bs{x})\leq t+r$, $\alpha_1,\ldots,\alpha_r$ are  
error locations and $\beta_1,\ldots,\beta_r$ are the corresponding
error values, then $(f_0,f_1)\in N_r$ and
$$
\lm_w(f_0,f_1)=\min\big\{\lm_w(g_0,g_1)|(g_0,g_1)\in
N_r\smallsetminus\{0\}\big\}. 
$$
\end{corollary}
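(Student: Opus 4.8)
The plan is to deduce the corollary directly from Theorem \ref{thm:main} together with the preceding proposition, which translates the $<_{-1}$-ordering on $M_0$ into the $<_w$-ordering on the coefficient module via the isomorphism $\mu$. First I would record the setup: since $(\omega,\sigma)\in M_0$ (this is the key equation, already available), we may write $(\omega,\sigma)=f_0\bh_0+f_1\bh_1$, and by definition $\mu(\omega,\sigma)=(f_0,f_1)$. Under the hypotheses $\dham(\bs{y},\bs{x})\le t+r$ with the $\alpha_i$ error locations and $\beta_i$ the corresponding error values, Theorem \ref{thm:main}(2) gives $(\omega,\sigma)\in M_r$, and hence $(f_0,f_1)=\mu(\omega,\sigma)\in\mu(M_r)=N_r$ by definition of $N_r$. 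This handles membership.

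For the minimality claim, the argument is a transport-of-structure: $\mu$ is an $\efq[X]$-module isomorphism from $M_0$ onto $\efq[X]^2$, so it restricts to a bijection $M_r\to N_r$ which also carries $M_r\smallsetminus\{0\}$ bijectively onto $N_r\smallsetminus\{0\}$ (as $\mu(0,0)=(0,0)$ and $\mu$ is injective). The preceding proposition, with $w:=\deg(h_{11})-\deg(h_{00})-1$, states precisely that for $(u_1,v_1),(u_2,v_2)\in M_0$ one has $\lmmo(u_1,v_1)<_{-1}\lmmo(u_2,v_2)$ iff $\lm_w(\mu(u_1,v_1))<_w\lm_w(\mu(u_2,v_2))$; in particular $\mu$ is order-preserving on leading monomials, so it takes the $<_{-1}$-minimal element of $M_r\smallsetminus\{0\}$ to the $<_w$-minimal element of $N_r\smallsetminus\{0\}$. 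Since Theorem \ref{thm:main}(2) asserts that $(\omega,\sigma)$ realizes $\min\{\lm_{-1}(u,v)\mid (u,v)\in M_r\smallsetminus\{0\}\}$, it follows that $(f_0,f_1)=\mu(\omega,\sigma)$ realizes $\min\{\lm_w(g_0,g_1)\mid (g_0,g_1)\in N_r\smallsetminus\{0\}\}$, as desired.

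One small point I would make explicit to be careful: the monomial ordering $<$ used throughout (and in Theorem \ref{thm:main}) is the ordering $<_{-1}$, so the "$\lm$" in the statement of Theorem \ref{thm:main} is exactly the "$\lmmo$" appearing in the preceding proposition — I would note this identification at the start so the chain of equivalences applies verbatim. I do not anticipate a genuine obstacle here: the only thing requiring a sentence of care is that $\mu$ and its inverse preserve "being nonzero" and hence that the minimum over $M_r\smallsetminus\{0\}$ corresponds to the minimum over $N_r\smallsetminus\{0\}$; everything else is a direct invocation of the two results already in hand. In particular there is no need to re-examine the constraints cutting out $N_r$ inside $\efq[X]^2$ (Proposition \ref{prop:nr}) — that explicit description is not needed for this corollary, only the fact $N_r=\mu(M_r)$.
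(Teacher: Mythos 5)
Your proof is correct and is essentially the paper's implicit argument: the paper writes no proof at all, saying only that the corollary follows from Theorem \ref{thm:main} together with the preceding order-preservation proposition and the fact that $\mu$ is an $\efq[X]$-module isomorphism carrying $M_r$ onto $N_r$. Your explicit write-out of those steps---including the small remark that $\mu$ restricts to a bijection $M_r\smallsetminus\{0\}\to N_r\smallsetminus\{0\}$ and the observation that Proposition \ref{prop:nr} is not needed here---is precisely the intended derivation.
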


The corollary shows that we can work directly with coefficient
polynomials, using \koetter{}'s iteration with respect to the monomial
ordering $<_{w}$. Also, the heuristic stopping criterion of
Appendix \ref{app:stop} works
just as before, because the discrepancies in $N_r$ are zero iff the
corresponding discrepancies in $M_r$ are zero.

The resulting application of \koetter{}'s iteration is listed below in
Algorithm C. Note that the coefficients $b_{0r},b_{1r},c_{0r},c_{1r}$
appearing in the calculation of $\Delta_j$ 
are defined in Proposition \ref{prop:nr}. Note
also that all required evaluations of $h_{00},h_{01},h_{10},h_{11}$
and their derivatives can be pre-computed once for all non-reliable
coordinates. 

\hfill\\ 

\begin{tabular}{|c|}
\hline
{\bf Algorithm C: \koetter{}'s iteration for coefficient vectors}\\
\hline
\end{tabular}

\begin{description}

\item[\bf{Input}] 

\begin{itemize}

\item A \groebner{} basis
$G=\{\bs{f}_0=(f_{00},f_{01}),\bs{f}_1=(f_{10},f_{11})\}$ 
for
$N_{r-1}(\synd{y},\alpha_1,\ldots,\alpha_{r-1},\beta_1,\ldots,\beta_{r-1})$,
with $\lm_w(\bs{f}_j)$ containing the $j$-th unit vector for 
$j\in\{0,1\}$, where $w:=\deg(h_{11})-\deg(h_{00})-1$

\item The next error location, $\alpha_r$, and the corresponding error
value, $\beta_r$

\end{itemize}

\item[\bf{Output}] A \groebner{} basis
$\Gplus=\{\bfplus_0=(\fplus_{00},\fplus_{01}),\bfplus_{1}=(\fplus_{10},\fplus_{11})\}$
for \\ $N_r(\synd{y},\alpha_1,\ldots,\alpha_r,\beta_1,\ldots,\beta_r)$
with $\lm_w(\bfplus_j)$ containing the $j$-th unit vector for $j\in\{0,1\}$ 

\item[\bf{Algorithm}]

\begin{itemize} 

\item For {\bf type} $=$ {\bf root}, {\bf der}

\begin{itemize}

\item If {\bf type} $=$ {\bf der}, 

\begin{itemize}

\item For $j = 0, 1$, set $\bs{f}_j:=\bfplus_j$ /* init:
output of {\bf root} iter.~*/

\end{itemize}

\item For $j=0, 1$, calculate 
$$
\Delta_j:=\begin{cases} b_{0r}f_{j0}(\alpha_r^{-1}) +
b_{1r}f_{j1}(\alpha_r^{-1})
 & \text{if {\bf type}$=${\bf root}}\\
 b_{0r}f'_{j0}(\alpha_r^{-1}) + c_{0r}f_{j0}(\alpha_r^{-1}) \\
 + b_{1r}f'_{j1}(\alpha_r^{-1}) + c_{1r}f_{j1}(\alpha_r^{-1}) & 
\text{if {\bf type}$=${\bf der}}
\end{cases}
$$ 
/* $b_{0r},b_{1r},c_{0r},c_{1r}$ defined in Prop.~\ref{prop:nr} */

\item Set $J:=\big\{j\in\{0,1\}|\Delta_j\neq 0\big\}$

\item For $j\in\{0,1\}\smallsetminus J$, set $\bfplus_j:=\bs{f}_j$

\item Let $j^*\in J$ be such that  $\lm_{w}(\bs{f}_{j^*})=\min_{j\in 
J}\{\lm_{w}(\bs{f}_j)\}$  

\item For $j\in J$ 

\begin{itemize}

\item If $j\neq j^*$

\begin{itemize}

\item Set $\bfplus_j := \bs{f}_j-\frac{\Delta_j}{\Delta_{j^*}}
\bs{f}_{j^*}$ 

\end{itemize}

\item Else /* $j=j^*$ */

\begin{itemize}

\item Set $\bfplus_{j^*} := (X-\alpha_r^{-1})\bs{f}_{j^*}$

\end{itemize}

\end{itemize}

\end{itemize}

\end{itemize}

\end{description}

\begin{remark}
{\rm
The validity of the update $\bfplus_{j^*} :=
(X-\alpha_r^{-1})\bs{f}_{j^*}$ for both the root and derivative
iterations can be proved as follows. First, it can
be verified directly that for both the root and derivative iterations,
$D(X\bs{f}_j)=\alpha_r^{-1}D(\bs{f}_j)$ (as done above for Algorithm A),
where $D$ is the linear functional of \koetter's iteration for the
respective iteration. A simpler way to prove this is as
follows. In the definition of $N_r$, we have implicitly defined
functionals (for the respective iterations) $D'\colon \efq[X]^2\to \efq$
by setting, for all $\bs{f}\in \efq[X]^2$,
$D'(\bs{f}):=D(\mu^{-1}(\bs{f}))$ for $D$ of Subsection \ref{sec:basic}. Hence  
\begin{multline*}
D'(X\bs{f})=D(\mu^{-1}(X\bs{f})) =
D(X\mu^{-1}(\bs{f}))=\alpha_r^{-1}D(\mu^{-1}(\bs{f}))=\alpha_r^{-1}D'(\bs{f}),
\end{multline*}
where the second equality follows from the $\efq[X]$-linearity of
$\mu^{-1}$, and the third equality follows from what we have already
proved for Algorithm A (where we assume, as before, that the derivative
iteration comes after the root iteration).
}
\end{remark}

The version of fast Chase decoding using Algorithm C is initiated on
the root of the tree $T$ with the \groebner{} basis $\{(1,0),(0,1)\}$ for
$N_0=\efq[X]^2$. When the heuristic stopping condition of Appendix
\ref{app:stop} holds, one can perform exhaustive substitution in one of
two ways, which we shall now describe. For short, in the following we
let $\bs{f}=(f_0,f_1)$ be the pair with the minimum $<_w$-leading
monomial from $\{\bfplus_0,\bfplus_1\}$ in the derivative iteration
for adjoining $\alpha_r$.  
\begin{enumerate}

\item Re-construct an estimated ELP (up to a non-zero multiplicative
constant) as $\hat{\sigma}(X)=f_0(X)h_{01}(X)+f_1(X)h_{11}(X)$ and
evaluate.  

\begin{itemize}
\item For the method for ruling out indirect hits of Appendix
\ref{app:stop}, we can then readily calculate the derivative
$\hat{\sigma}'(X)$ and evaluate it.
\end{itemize}

\item Calculate and store in advance the evaluations
$\{h_{01}(z^{-1})\}_{z\in \efq^*}$, $\{h_{11}(z^{-1})\}_{z\in
\efq^*}$. Now only the low-degree polynomials $f_0,f_1$ need to be
evaluated for calculating the evaluations
$\hat{\sigma}(z^{-1})=f_0(z^{-1})h_{01}(z^{-1}) +
f_1(z^{-1})h_{11}(z^{-1})$ for all $z\in \efq^*$.

\begin{itemize}
\item For the method for ruling out indirect hits of Appendix
\ref{app:stop},  we can also calculate and store in
advance the evaluations of derivatives
$\{h'_{01}(z^{-1})\}_{z\in \efq^*}$, $\{h'_{11}(z^{-1})\}_{z\in
\efq^*}$, and then calculate 
\begin{multline*}
\hat{\sigma}'(z^{-1})=f_0'(z^{-1})h_{01}(z^{-1})+f_0(z^{-1})h_{01}'(z^{-1})
\\ + f_1'(z^{-1})h_{11}(z^{-1})+f_1(z^{-1})h_{11}'(z^{-1}).
\end{multline*}
\end{itemize}

\end{enumerate} 

To bound the complexity of Algorithm C, we will need the following
proposition, in which we shall use a notation similar to that of
Definition \ref{def:itau} for Algorithm C instead of Algorithm A,
where ``$g$'' is replaced by ``$f$'' throughout. 

\begin{proposition}\label{prop:sumrdeg}
When Algorithm C is applied on an edge connecting a vertex at depth
$r-1$ to an edge at depth $r$ ($r\geq 1$), we have
\begin{equation}\label{eq:diag}
\deg(\fplus_{00}(r;\tau))+\deg(\fplus_{11}(r;\tau))\leq \begin{cases}
2r-1 & \text{if } \tau=\broot\\
2r   & \text{if } \tau=\bder,
\end{cases}
\end{equation}
and
\begin{equation}\label{eq:offd}
\max\{\deg(\fplus_{01}(r;\tau)),0\} +
\max\{\deg(\fplus_{10}(r;\tau)),0\}\leq \begin{cases} 
2r-2 & \text{if } \tau=\broot\\
2r-1   & \text{if } \tau=\bder.
\end{cases}
\end{equation}
\end{proposition}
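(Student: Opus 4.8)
\textbf{Proof plan for Proposition \ref{prop:sumrdeg}.}

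The plan is to track how the two leading monomials of the running \groebner{} basis for $N_r$ evolve under the updates of Algorithm C, exactly as was done for Algorithm A in Propositions \ref{prop:partial} and \ref{prop:okmod}. The starting point is the root of the tree, where the \groebner{} basis for $N_0=\efq[X]^2$ is $\{(1,0),(0,1)\}$, so $\lm_w(\bs{f}_0)=(1,0)$ (on the left) and $\lm_w(\bs{f}_1)=(0,X^0)$ (on the right), i.e.\ $\deg(f_{00})=\deg(f_{11})=0$. The key structural fact, carried over from \koetter{}'s iteration, is that in each root or derivative iteration exactly one pair has its leading monomial multiplied by $X$ (the pair indexed by $j^*$), while the other leading monomial is unchanged; moreover $\lm_w(\bfplus_0)$ always stays on the left and $\lm_w(\bfplus_1)$ always stays on the right. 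Consequently, after the $2r-1$ iterations needed to reach a depth-$r$ vertex through its root iteration, and after the full $2r$ iterations through its derivative iteration, the sum of the $X$-exponents of the two leading monomials has increased by exactly $2r-1$, resp.\ $2r$, over its initial value $0$. Since the leading monomial of $\bfplus_0(r;\tau)$ is $(X^{\deg(f^+_{00}(r;\tau))},0)$ and that of $\bfplus_1(r;\tau)$ is $(0,X^{\deg(f^+_{11}(r;\tau))})$, this yields precisely (\ref{eq:diag}); this is the verbatim analogue of the argument in the proof of Proposition \ref{prop:partial}.

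For (\ref{eq:offd}), I would argue as follows. The pair $\bfplus_0(r;\tau)=(f^+_{00},f^+_{01})$ has its leading monomial on the left, which by the definition of $<_w$ forces $\deg(f^+_{01})\le \deg(f^+_{00})+w$. But here $w=\deg(h_{11})-\deg(h_{00})-1$, and for the initial \groebner{} basis $\{\bh_0,\bh_1\}$ of $M_0$ Proposition \ref{prop:degrees} gives $\deg(h_{00})+\deg(h_{11})=2t$; however, for Algorithm C we want a bound independent of $t$. The cleaner route is to observe that the off-diagonal degree can only grow when a \emph{non}-pivot update $\bfplus_j:=\bs{f}_j-\frac{\Delta_j}{\Delta_{j^*}}\bs{f}_{j^*}$ occurs with $j\ne j^*$: such an update replaces $f_{j, 1-j}$ by $f_{j,1-j}-\frac{\Delta_j}{\Delta_{j^*}}f_{j^*,1-j}$, whose degree is at most $\max(\deg(f_{j,1-j}),\deg(f_{j^*,1-j}))$, and $f_{j^*,1-j}$ is an off-diagonal entry of the \emph{pivot} pair. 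Combining this with the fact that the diagonal degrees satisfy (\ref{eq:diag}) and that a leading-monomial-on-the-left (resp.\ right) constraint gives $\deg(f^+_{01})\le\deg(f^+_{00})-1$ when the left entry is strictly dominant, one deduces the stated one-less bound: from (\ref{eq:diag}) for the relevant $\tau$ we get $\deg(f^+_{00})+\deg(f^+_{11})\le 2r-1$ (root) or $2r$ (der), and subtracting one for each of the two off-diagonal constraints $\deg(f^+_{01})\le\deg(f^+_{00})-1$, $\deg(f^+_{10})\le\deg(f^+_{11})-1$ (interpreting a negative degree as the $\max\{\cdot,0\}=0$ case when the corresponding polynomial is zero or constant) yields (\ref{eq:offd}).

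The main obstacle I anticipate is making the off-diagonal bound (\ref{eq:offd}) fully rigorous at the boundary cases: when $f^+_{01}$ or $f^+_{10}$ is the zero polynomial (so its ``degree'' is $-\infty$, hence the $\max\{\cdot,0\}$ truncation), and when a leading-monomial tie occurs so that the strict inequality $\deg(f^+_{01})\le\deg(f^+_{00})-1$ must be read carefully against the $<_w$ tie-breaking rule $(X^{j_1},0)<_w(0,X^{j_2})\iff j_1\le j_2+w$. One must check that in the pivot iteration $\bfplus_{j^*}:=(X-\alpha_r^{-1})\bs{f}_{j^*}$ \emph{both} coordinates of the pivot pair have their degrees raised by exactly one, so the off-diagonal/diagonal gap is preserved, and that in a non-pivot iteration the gap for the modified pair cannot shrink below what the invariant requires. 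I would handle this by a clean induction on the iteration index, carrying as the inductive hypothesis the pair of statements (\ref{eq:diag}) and (\ref{eq:offd}) simultaneously (with the convention $\deg 0:=-\infty$ and the $\max\{\cdot,0\}$ applied only at the very end), and checking the three update types (non-pivot, pivot-left, pivot-right) against both halves of the hypothesis. Everything else is a routine bookkeeping of $X$-exponents identical in spirit to the proofs already given for Algorithm A.
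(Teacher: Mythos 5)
Your derivation of the diagonal bound (\ref{eq:diag}) is essentially the paper's: start from $\deg f_{00}=\deg f_{11}=0$ at the root and observe that each of the $2r-1$ (resp.\ $2r$) \koetter{} iterations can raise the sum of leading-monomial exponents by at most one. One small imprecision: you assert that ``exactly one pair has its leading monomial multiplied by $X$'' per iteration, but when both discrepancies vanish ($J=\emptyset$) no leading monomial changes; the correct invariant is ``at most one,'' which still yields $\leq$, so the conclusion survives.

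For the off-diagonal bound (\ref{eq:offd}), however, there is a real gap. The constraints you write, $\deg f^+_{01}\leq\deg f^+_{00}-1$ and $\deg f^+_{10}\leq\deg f^+_{11}-1$, are not what the ordering $<_w$ gives. For Algorithm C the relevant ordering is $<_w$ with $w=\deg(h_{11})-\deg(h_{00})-1$, which is in general a nonzero (possibly large, possibly negative) integer. From $\lm_w(\bfplus_0)$ being on the left one gets $\deg f^+_{01}\leq \deg f^+_{00}-w-1$, and from $\lm_w(\bfplus_1)$ being on the right one gets $\deg f^+_{10}\leq \deg f^+_{11}+w$. The $w$ terms cancel only upon \emph{summing} the two inequalities; your ``$-1$ on each side'' bound is false unless $w=0$ (and even then the second one is off), and ``subtracting one for each'' would produce $2r-3$, $2r-2$, not $2r-2$, $2r-1$. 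So the argument as written does not yield (\ref{eq:offd}) and would in fact contradict (\ref{eq:offd}) being sharp.

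This also makes the boundary case you flagged more than a technicality: when one off-diagonal entry is the zero polynomial, the summed inequality gives no information at all, since you lose the half of the pair of inequalities whose $\pm w$ term would have canceled the other. The paper handles exactly this case by a separate direct induction on $r$ (equation (\ref{eq:oneoff})), bounding $\deg f^+_{01}$ individually by tracking the three update types (no update, $j^*=0$, $j^*=1$) and combining with (\ref{eq:diag}); your proposed scheme of carrying (\ref{eq:diag}) and (\ref{eq:offd}) jointly as an induction hypothesis will not close this hole without first correcting the $w$-dependent degree constraints and then supplying that separate argument for the ``one entry is zero'' case. As it stands, the proposal proves (\ref{eq:diag}) but not (\ref{eq:offd}).
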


\begin{remarknn}
{\rm
Note that the usage of $\max\{\deg(\cdot),0\}$ means that we sum only
over the degrees of non-zero polynomials.
}
\end{remarknn}

\begin{proof}
Recalling that the algorithm is initiated with the \groebner{} basis
$\{(1,0),(0,1)\}$ on the root of the decoding tree, (\ref{eq:diag})
follows by induction, as in each root and derivative iteration at most one
leading monomial is increased, and the increased leading monomial is
multiplied by $X$.

Since $\lm_{w}(\bfplus_0)$ is on the left and $\lm_{w}(\bfplus_1)$ is on
the right, we have 
$$
\deg(\fplus_{01}(r;\tau))<\deg(\fplus_{00}(r;\tau))-w
$$
and
$$
\deg(\fplus_{10}(r;\tau))\leq\deg(\fplus_{11}(r;\tau))+w.
$$
Summing the last two inequalities (and using (\ref{eq:diag})) proves
(\ref{eq:offd}) for the case 
where the two involved polynomials are non-zero. Also, if both
involved polynomials are zero, then there is nothing to prove. It
therefore remains to consider the case where one of the polynomials is
zero and the other is non-zero. 

For this case, we will prove that for all
$r\geq 1$, 
\begin{equation}\label{eq:oneoff}
\deg(\fplus_{01}(r;\tau))\leq\begin{cases}
2r-2 & \text{if }\tau=\broot \\
2r-1 & \text{if }\tau=\bder
\end{cases}
\end{equation}
(a similar proof works also for $\fplus_{10}(r;\tau)$).
For $r=1$, (\ref{eq:oneoff}) can be verified directly by checking $4$
options of $j^*$ in the root and derivative iterations.

Assume  by induction that $r\geq 2$ and that (\ref{eq:oneoff})
holds for $r-1$. For $\tau=\broot$, there are $3$ options to consider: If
$\Delta_0=0$ (no update), then (\ref{eq:oneoff}) obviously holds
for $\tau=\broot$ and $r$. If $\Delta_0\neq 0$ and $j^*=0$, then
$\deg(\fplus_{01}(r;\broot))=\deg(\fplus_{01}(r-1;\bder))+1$, and 
again (\ref{eq:oneoff}) holds for $\tau=\broot$ and $r$. Finally, if
$\Delta_0\neq 0$ and $j^*=1$, then considering the update rule in this
case and (\ref{eq:diag}) for $r-1$ and $\tau=\bder$, it follows again that
(\ref{eq:oneoff}) holds for $\tau=\broot$ and $r$. So, the induction
hypothesis implies (\ref{eq:oneoff}) for $r$ and
$\tau=\broot$. Applying the same arguments again, it can be shown that
(\ref{eq:oneoff}) holds also for $r$ and $\tau=\bder$. 
\end{proof}

Let us now proceed to bounding the complexity of Algorithm C on an
edge connecting a vertex at depth $r-1$ to a vertex at depth $r$ (for
$r\geq 1$).

\begin{itemize}

\item {\bf In the root iteration:} 

\begin{itemize} 

\item {\bf Evaluation:} Running over $j=0,1$ we eventually have to
evaluate once each of $\fplus_{00}(r-1;\bder)$,
$\fplus_{11}(r-1;\bder)$, $\fplus_{01}(r-1;\bder)$, and
$\fplus_{10}(r-1;\bder)$. Using Proposition 
\ref{prop:sumrdeg}, the required number of multiplications is at most
$2(r-1) + 2(r-1)-1 = 4r-5$. Also, for each of $j=0,1$, we have $2$ 
multiplications by a scalar (an overall of $4$ such multiplications), for
a total of $4r-1$ multiplications.

\item {\bf Multiplication of a polynomial by a constant:} For
$j\neq j^*$, we have to calculate the constant $\Delta_j/\Delta_{j^*}$,
which requires a single multiplication (assuming, as before, that we
have a table for calculating inverses), and to multiply $4$
polynomials whose sum of degrees is at most $4r-5$ by a constant
(recall Footnote \ref{footnote:jstar}). This
requires a total of $1 + 4r-5 +4=4r$ multiplications (the ``$+4$''
accounts for the fact that a polynomial of degree $d$ has $d+1$
coefficients).   

\end{itemize}

\item {\bf In the derivative iteration:}

\begin{itemize} 

\item {\bf Evaluation:} Running over $j=0,1$, we eventually have to
evaluate once each of $\fplus_{00}(r;\broot)$, $\fplus_{11}(r;\broot)$,
$\fplus_{01}(r;\broot)$, $\fplus_{10}(r;\broot)$, and their
derivatives. Taking the worst case assumption of characteristic $\neq
2$ and using Proposition \ref{prop:sumrdeg} again, this requires at
most 
$$
(2r-1)+(2r-2) + (2r-1-2)+(2r-2-2)=8r-10
$$ 
multiplications. There are also $4$ multiplications for calculating
$c_{0r}$, $c_{1r}$, and $8$ additional multiplications after
the substitutions, giving a total of at most $8r+2$ multiplications.

\item {\bf Multiplication of a polynomial by a constant:} Similarly to
the root iteration (but now with $r$ instead of $r-1$ and $\broot$
instead of $\bder$ in the bound of Proposition \ref{prop:sumrdeg}),
this gives a total of at most $1+(2r-1)+(2r-2)+4=4r+2$
multiplications. 
\end{itemize}

\end{itemize}

Summing all the above bounds, we obtain that the total number of
multiplications for moving from depth $r-1$ to depth $r$ with 
algorithm $C$ is at most
$$
M_C = 4r-1+ 4r + 8r+2 + 4r+2 =20r+3.
$$
Comparing this with $M_{\mathrm{Wu}}=10(t+r)+7$ calculated in
Subsection \ref{app:pairs}, we see that the complexity of Algorithm C 
is lower for each $r\leq \rmax$ when $\rmax<t$, since $2r<t+r$. Note that
as before, this complexity calculation does not account for the
(heuristically, rare) unrequired exhaustive root searches.

\section*{Acknowledgment}
We thank Avner Dor, Itzhak Tamo, and Moshe Twitto for their helpful
comments. We would also like to thank Johan Rosenkilde for his helpful
comments and for pointing us to \cite{BL94}, \cite{BL97}, \cite{NZ03},
and \cite{RS21}. Finally, we would like to thank the anonymous
reviewers for their helpful comments.


\end{document}